\theoremstyle{plain}
\newtheorem{theorem}{Theorem}[section]
\newtheorem{lemma}[theorem]{Lemma}
\newtheorem{proposition}[theorem]{Proposition}
\theoremstyle{definition}
\newtheorem{remark}[theorem]{Remark}
\newtheorem{example}[theorem]{Example}
\theoremstyle{remark}
\mathchardef\emptyset="001F
\numberwithin{equation}{section}
\newcommand{\dt}{\,\mathrm{d}t}
\newcommand{\e}{\varepsilon}
\newcommand{\Ln}{\lambda_n}
\newcommand{\dn}{\delta_n}
\newcommand{\R}{\mathbb{R}}
\newcommand{\rthree}{\R^3}
\newcommand{\Z}{{\mathbb{Z}}}
\newcommand{\tu}{\tilde{u}}
\newcommand{\N}{{\mathbb{N}}}
\newcommand \q{\quad}
\def\endproof{\qed}
\title{Chirality transitions in frustrated $S^{2}$-valued spin systems}
\author{Marco Cicalese\thanks{Zentrum Mathematik - M7, Technische Universit\"at M\"unchen, Boltzmannstrasse 3, 85748 Garching, Germany. Email: {\tt cicalese@ma.tum.de}} \and Matthias Ruf\thanks{Zentrum Mathematik - M7, Technische Universit\"at M\"unchen, Boltzmannstrasse 3, 85748 Garching, Germany. Email: {\tt mruf@ma.tum.de}}\and Francesco Solombrino\thanks{Zentrum Mathematik - M7, Technische Universit\"at M\"unchen, Boltzmannstrasse 3, 85748 Garching, Germany. Email: {\tt francesco.solombrino@ma.tum.de}}}
\begin{document}

\maketitle

\begin{abstract}
We study the discrete-to-continuum limit of the helical XY $S^{2}$-spin system on the lattice $\Z^{2}$. We scale the interaction parameters in order to reduce the model to a spin chain in the vicinity of the Landau-Lifschitz point and we prove that at the same energy scaling under which the $S^{1}$-model presents scalar chirality transitions, the cost of every vectorial chirality transition is now zero. In addition we show that if the energy of the system is modified penalizing the distance of the $S^{2}$ field from a finite number of copies of $S^{1}$, it is still possible to prove the emergence of nontrivial (possibly trace dependent) chirality transitions. 
\end{abstract}

\section{Introduction}
In the last decades frustrated spin systems with continuous symmetry have attracted a great interest both in the physical and in the mathematical community as simple models leading to helical phases, which turn out to be interesting for possible application as multi-ferroics (see \cite{CM} for a recent review on the subject). Despite a great effort, the phase diagram of these systems is far from being rigorously described. In this paper we consider the helical XY spin model (see \cite{V}) on the square lattice $\mathbb{Z}^2$ as a prototype of such systems and we scale the interaction parameters in order to study, by variational techniques, the vicinity of the Landau-Lifschitz point, where the helical behavior is expected, as the continuum limit is approached.    

\bigskip

A configuration for the helical XY spin model on the square lattice $\mathbb{Z}^2$ is a map $u:i\in\mathbb{Z}^2\mapsto u^{i}\in S^2$ whose energy reads 
\begin{equation}\label{HXY}
E(u)=-\sum_{i\in\Z^{2}}J_0 (u^{i},u^{i+e_1})-J_1(u^{i},u^{i+2e_1})+J_2(u^{i},u^{i+e_2}),
\end{equation}
where $J_0$ and $J_{1}$ are the interaction parameters for the nearest-neighbors (NN) and the next-to-nearest-neighbors (NNN) interactions in the direction horizontal $e_{1}$, respectively, while $J_{2}$ is the interaction parameter for the NN interactions in the vertical direction $e_{2}$.  Note that the behavior of the functional above strongly depends on the values of the interaction parameters and on the range of the spin field. For instance in the case $J_{1}=0, J_{0}=J_{2}>0$ and $S^{1}$-valued spins, one recovers the classical XY-model whose discrete-to-continuum limit has been investigated in the variational framework of $\Gamma$-convergence in \cite{ACXY} (see also \cite{ACP} and \cite{AP}).

In the present paper we consider $J_0, J_{1}, J_{2}>0$. With this choice the behavior of the system in the two directions is different. In the direction $e_{2}$ the system is ferromagnetic, the interaction potential is $-J_2(u^{i},u^{i+e_2})$ and favors spin alignment. In the direction $e_{1}$ there are competing ferromagnetic (F) NN interactions whose potential $-J_0 (u^{i},u^{i+e_1})$ favors alignment and anti-ferromagnetic (AF) NNN interactions with potential $J_1(u^{i},u^{i+2e_1})$ favoring antipodal spins. This competition acts as a source of frustration. More precisely, along each horizontal line, the energy accounting for interactions in the $e_{1}$-direction, namely 
\begin{equation}\label{intro:energy-1d}
F(u)=-\sum_{j\in\Z}J_0 (u^{j},u^{j+1})-J_1(u^{j},u^{j+2}),
\end{equation}
is that of a so called F/AF {\it frustrated chain} (note that for $J_{2}=0$ the system would behave as a collection of independent chains). In this context we say that $F$ is frustrated because there isn't any configuration minimizing all the interactions at once (see \cite{diep} for a comprehensive study of frustrated spin systems).

In this paper we study the functional \eqref{HXY} under the natural scaling of the interaction parameters leading to the easiest possible geometry for helical ground states (for other possible scaling in a continuous approximation see \cite{SPN}). To this end we enforce alignment of the spins in the direction $e_{2}$ by letting $J_2$ diverge positively. As a result, finite energy spin fields $u$ have a one-dimensional profile; i.e., $u^{(i_{1},i_{2})}=v^{i_{1}}$ for some $v:\Z\to S^{2}$.  In other words, the system can be, modulo technicalities, described by studying the behavior of the one-dimensional F/AF frustrated chain model for $S^{2}$-valued spins. For the latter chain model it has been conjectured in the appendix of \cite{DmiKriext} (the extended version of \cite{DmiKri}) that, when $J_0$ and $J_1$ are close to the helimagnetic transition point $J_{0}/J_{1}=4$, the system presents chirality transitions as in the case of $S^{1}$-valued spins whose variational analysis has been recently carried out by 
the first and the third authors in \cite{ciso}. In the present paper we disprove this conjecture showing that in the $S^{2}$ case the transition energy between ground states with different chiralities is negligible. Furthermore we propose an alternative minimal model leading to non trivial chirality transitions.

\medskip

In \cite{ciso} the continuum limit of the F/AF chain energy in \eqref{intro:energy-1d} has been studied in the case of $S^{1}$-valued spins and for a range of interaction parameters close to the ferromagnetic/helimagnetic transition point. The outcome of the analysis is summarized below. After scaling the functional by a small parameter $\frac{\Ln}{J_{1}}$ ($\Ln\to0$ as $n\to\infty$), and setting $\Z_{n}=\{j \in \Z:\  \lambda_{n} j \in [0,1]\}$ one defines $F_{n}:\{u:j\in\Z_{n}\mapsto u^{j}\in S^{1}\}\to\R$ as 
\begin{equation}\label{intro:energy-n}
F_{n}(u)=-\alpha\sum_{j\in\Z_{n}}\Ln(u^{j},u^{j+1})+\sum_{j\in\Z_{n}}\Ln(u^{j},u^{j+2}).
\end{equation}
where $\alpha=J_{0}/J_{1}$ is the so called frustration parameter. 
It turns out that the ground states of $F_{n}$ can be completely characterized. Neighboring spins are aligned if $\alpha\geq 4$ (ferromagnetic order), while they form a constant angle $\varphi=\pm\arccos(\alpha/4)$ if $0<\alpha<4$ (helimagnetic order). In this last case the system shows a chirality symmetry: the two possible choices of $\varphi$ correspond to either clockwise or counter-clockwise spin rotations, or in other words to a positive or a negative chirality. The energy necessary to break this symmetry as $\alpha$ is close to $4$ can be found letting the frustration parameter $\alpha$ depend on $n$ and replacing in \eqref{intro:energy-n} $\alpha$ by $\alpha_{n}=4(1-\dn)$ for some vanishing sequence $\dn>0$. One then introduces the renormalized energy 
\begin{equation}\label{intro:Hndelta}
H_{n}(u)=\frac{1}{2}\sum_{j\in\Z_{n}}\Ln\left|u^{j+2}-2(1-\dn)u^{j+1}+u^{j}\right|^{2},
\end{equation}
proves that, under periodic boundary conditions on the scalar product of NN interactions, 
\begin{equation}\label{intro:Ehf-to-Hhf}
H_{n}(u)=F_{n}(u)-\min F_{n}
\end{equation}
and computes the $\Gamma$-limit of $H_{n}/(\Ln\dn^{3/2})$ with respect to the $L^{1}$ convergence of the chirality order parameter (a proper discrete version of the angular increment between two neighboring spins) as $\Ln\to 0$. In the case $\Ln/\sqrt{\dn}\to 0$ (at other scalings chirality transitions are either forbidden or not penalized) the limit energy functional is proportional to the number of jumps of the chirality, namely the number of times the spin configuration changes the sign of its angular velocity.
\medskip

In the case of $S^{2}$-valued spins the picture drastically changes. In analogy with the $S^{1}$ case described above one may still renormalize the energy and prove that the modulus of (a proper discrete version of) the angular velocity of a ground state, which one may still interprets as the chirality of the system, is constant. However, we may now prove that at this scaling the transition energy between two ground states with different chiralities is zero. The proof uses the fact that, in contrast to the $S^{1}$-case, the $S^{2}$-spin system does not need to jump from one chiral state to another in order to modify its chirality. Instead, it lets the chirality vary on a slow scale paying little energy (see Figure \ref{Fig.1}). This is proved in Theorem \ref{main1} exploiting the fact that, at leading order, the renormalized energy can be rewritten as a discrete vectorial Modica-Mortola functional presenting a potential term with connected wells. Note that in the continuous setting, the analysis by $\Gamma$-
convergence of such functionals has been performed by \cite{Ambrosio} and \cite{Baldo}. However the discreteness of our energies as well as the additional differential constraint defining our order parameter prevents us from directly using the results contained in those papers. 

\begin{figure}
\includegraphics[width=0.8\textwidth]{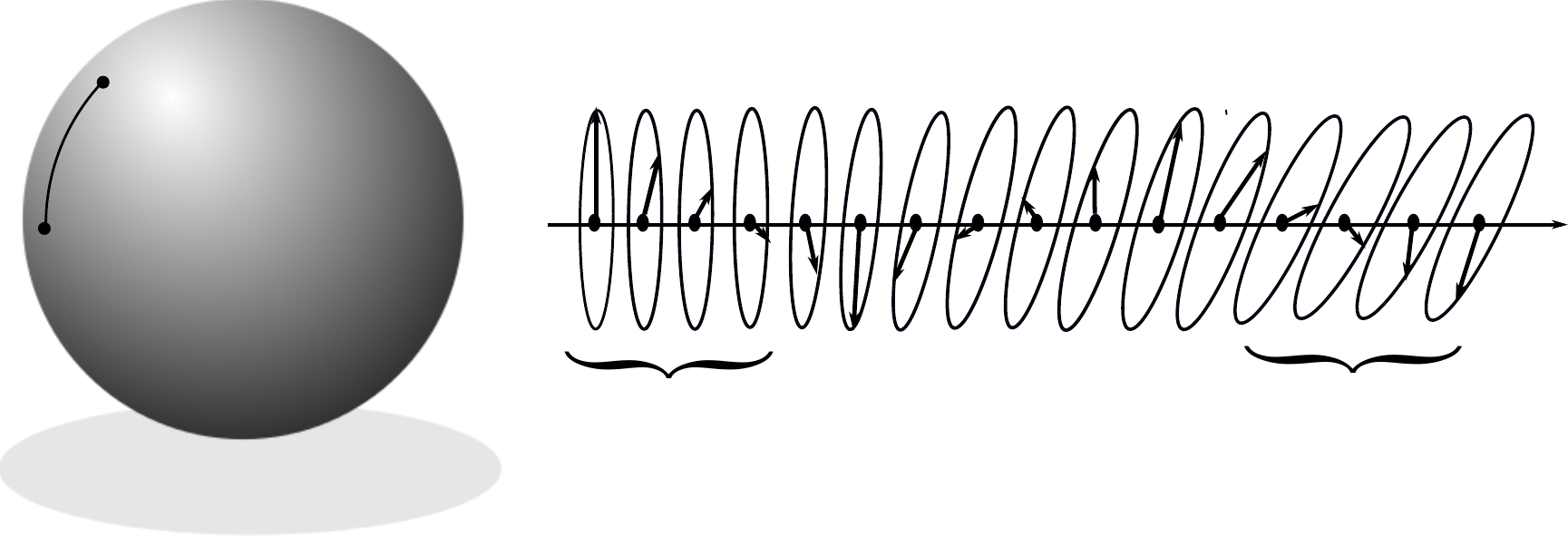}
\caption{Connecting the chiralities $q_{1}$ and $q_{2}$ by slowly moving the spin rotation axis}\label{Fig.1}
\begin{picture}(0,0)
\put(-10,100){$q_{1}$}\put(3,135){$q_{2}$}
\put(148,61){$q_{1}$}\put(308,61){$q_{2}$}
\end{picture}
\end{figure}

In the second part of the paper we propose and study two possible spin models leading to nontrivial chirality transitions in the vicinity of the ferromagnetic/helimagnetic transition point. To this end we modify the functional $H_{n}$ by adding what we call either a {\it hard} or a {\it soft} penalization term. In the {\it hard} case we constrain the spin variable to take values only in a subset of $S^{2}$ consisting of finitely many copies of $S^1$, while in the {\it soft} case we penalize the distance of the spin field from such a set. For the first model we show that the optimal transition is obtained by first slowing down the angular velocity of the spin field in the first phase until it reaches one intersection point between the two rotation planes between which the transition occurs with zero velocity, and then speeding up again the angular velocity in the new phase (see Figure \ref{Fig.2}). In terms of chirality, the transition corresponds to first decreasing and then increasing the length of the 
chirality vector while keeping its orientation constant in each phase. For the second model the construction is more involved and the optimal path may, depending on the scaling of the additional penalization term, be either again the one described in Figure \ref{Fig.2} or instead depend on the shape of the penalization potential. As a result, the limit functionals obtained with the two proposed models are different: while in the first case the chirality transitions lead to a constant positive limit energy to be paid for each discontinuity in the chirality (no matter which chiralities the system is trying to connect), in the second one, under appropriate scaling, the limit energy may depend on the two transition chiral states (see example \ref{anglenonconstant}). As a final technical remark, we notice that the analysis of the discrete-to-continuum limit for the second model can be seen as a generalization in the vector-valued case of some results concerning the discrete approximation of Modica-Mortola type 
functionals obtained in \cite{BraYip}.

\begin{figure}
\includegraphics[width=0.8\textwidth]{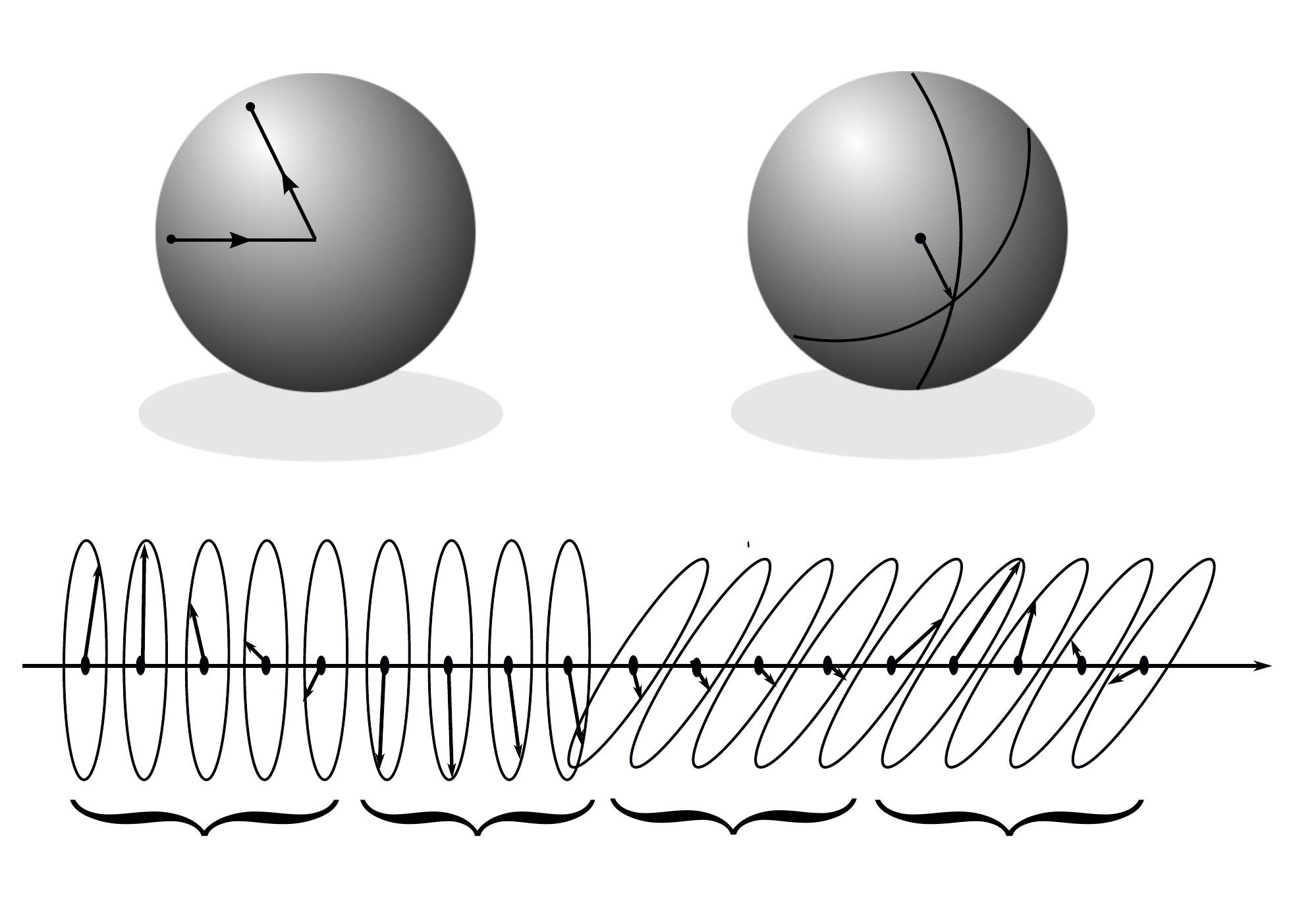}
\caption{Chirality transition between $q_{1}$ and $q_{2}$. Ball on the left: the transition path in the chirality space. Ball on the right: the intersection of the two $S^{1}$ on the {\em transition direction}. Bottom: the transition in the real space. The chirality slows down in the starting rotation plane until the spin reaches the transition direction and then it speeds up again in the final rotation plane.} \label{Fig.2}
\begin{picture}(0,0)
\put(52,295){$q_{2}$}\put(28,250){$q_{1}$}
\put(50,78){$q_{1}$}\put(274,78){$q_{2}$}
\put(104,78){slow down}\put(180,78){speed up}
\end{picture}
\end{figure}

\section{The energy model: preliminary considerations}

\subsection{Basic notation}

Let $\Omega=(0,1)^2\subset\R^2$ and $\Ln$ a vanishing sequence of positive numbers. We set $\mathbb{Z}^2_n(\Omega)$ as the set of those $i\in\Z^{2}$ such that $\Ln i\in \Ln\mathbb{Z}^2\cap\Omega$ and $R_n(\Omega):=\{i\in \Z^2_n(\Omega):\;i+2e_1,i+e_2\in\Z^{2}_{n}(\Omega\}$. The symbol $B(0,1)$ stands for the unitary ball of $\R^{3}$ centered at the origin. The symbols $S^1,\,S^{2}$ stand as usual for the unit spheres of $\R^2$ and $\R^3$, respectively. Given two vectors $a,b\in\R^{3}$ we will denote by $(a,b)$ their scalar product. Moreover we define $\mathcal{U}^2_n(\Omega)$ as the space of functions $u:i\in\Z_n^2(\Omega)\mapsto u^i\in S^2$ and $\overline{{\mathcal{U}}}^2_n(\Omega)$ as the subspace of those functions $u$ such that, for all $m\in \Ln\mathbb{Z}\cap(0,1)$, it holds
\begin{equation}\label{slicedboundary}
(u^{(i_{min}+1,m)},u^{(i_{min},m)})=(u^{(i_{max},m)},u^{(i_{max}-1,m)}),
\end{equation}
where $i_{min}$ and $i_{max}$ are the minimum and the maximum of $\Ln\Z\cap(0,1)$, respectively. 

\subsection{The energy}

As pointed out in the introduction, we let the parameters $J_0,\,J_1,\,J_2$ in (\ref{HXY}) be scale dependent. Without loss of generality we divide the energy by $J_{1,n}>0$ and rename $J_{0,n}$ and $J_{2,n}$, accordingly. Given $n\in\mathbb{N}$ and a function $u\in\overline{{\mathcal{U}}}^2_n(\Omega)$ we consider the energy

\begin{equation}\label{HXYn}
E_n(u)=-\sum_{i\in R_n(\Omega)}\Ln^2\big(J_{0,n} (u_i,u_{i+e_1})-(u_i,u_{i+2e_1})+J_{2,n}(u_i,u_{i+e_2})\big).
\end{equation}
We remark that, by considering the energy defined on $u\in\overline{{\mathcal{U}}}^2_n(\Omega)$, we are imposing boundary conditions only in the $e_{1}$-direction, while we are leaving the spin field in the $e_{2}$-direction unconstrained. As a matter of fact, as a result of the scaling we are going to choose, constraining the spins in the $e_{2}$-direction would not affect the asymptotic energy.

\subsection{Ground states and renormalized energy}
In this paragraph we describe the ground states of the energy $E_{n}$ and compute their energy $\min E_{n}$. We then define a {\it renormalized} energy $H_{n}$ which will be the main object to study in order to discuss the asymptotic behavior of the system in the next sections.

We begin observing that the minimizers of $E_n$ can be easily computed if one knows the minimizers of the energy accounting for the interactions in the $e_{1}$-direction only. Indeed the ground states of the system are then obtained by extending such minimizers constantly in the $e_{2}$-direction as it is explained below. Note that indeed this extension keeps the ferromagnetic term in the $e_{2}$-direction minimal.

We now find the ground states of $E_{n}$ adapting the idea in the proof of Proposition $3.2$ of \cite{ciso}. We repeat the argument for the reader's convenience.   Setting the renormalized energy $H_{n}$ as 
\begin{equation*}
H_n(u)=\frac{1}{2}\left(\sum_{i\in R_n(\Omega)}\Ln^2|u^i-\frac{J_{0,n}}{2}u^{i+e_1}+u^{i+2e_1}|^2+J_{2,n}\sum_{i\in R_n(\Omega)}\Ln^2|u^{i+e_2}-u^i|^2\right),
\end{equation*}
we observe that 
\begin{equation}
E_{n}(u)=H_n(u)-\left(1+\frac{J^{2}_{0,n}}{8}+J_{2,n}\right)(1-a_{n}),
\end{equation}
where $a_{n}=1-\sum_{i\in R_{n(\Omega)}}\Ln^{2}$ is such that $\lim_{n}a_{n}=0$.\\
In the case $J_{0,n}\leq 4$ we take $\phi_{n}\in\left[-\frac{\pi}{2},\frac{\pi}{2}\right]$ such that $\cos\phi_{n}=(J_{0,n}/4)$ and, for all $i=(l,m)$, we define the three dimensional vector $u_{n}^{i}$ as 
\begin{equation}\label{ground}
u_{n}^{i}=(\cos(\phi_{n}l)|\sin(\phi_{n}l)|0).
\end{equation}
We have that $u_{n}^{i+e_{2}}-u_{n}^{i}=0$ for all $i$ while, by means of trigonometrical identities it holds that 
\begin{equation}
u_{n}^i-\frac{J_{0,n}}{2}u_{n}^{i+e_1}+u_{n}^{i+2e_1}=0,
\end{equation}
therefore $u_{n}$ is a ground state and $\min E_{n}=E_{n}(u_{n})=-\left(1+\frac{J^{2}_{0,n}}{8}+J_{2,n}\right)(1-a_{n})$. By the rotational invariance of the energy, all those states obtained rotating $u_{n}$ by a fixed $SO(3)$ matrix are ground states, too. \\
Conversely, let $v_{n}$ be a ground state of $E_{n}$. We have that $H_{n}(v_{n})=0$ which implies that 
\begin{eqnarray}
v_{n}^{i+e_1}&=&\frac{2}{J_{0,n}}(v_{n}^i+v_{n}^{i+2e_1}),\label{equality_{1}}\\
v_{n}^{i+e_{2}}&=&v_{n}^{i}, \nonumber
\end{eqnarray}
so that $v_{n}^{i}$ is independent on the vertical coordinate and lies on a fixed plane. 
By taking the modulus squared in \eqref{equality_{1}} we further get that 
$$
1=\frac{4}{J_{0,n}^{2}}|v^{i}_n+v^{i+2e_{1}}_n|^{2}=\frac{8}{J_{0,n}^{2}}(1+ (v^{i}_n,v^{i+2e_{1}}_n)),
$$
by which
\begin{equation*}
(v_{n}^{i},v_{n}^{i+2})=\frac{J_{0,n}^{2}}{8}-1.
\end{equation*}
By this equality, using again \eqref{equality_{1}} we also get
\begin{equation*}
(v^{i}_n,v^{i+e_{1}}_n)=\frac{2}{J_{0,n}}(v^{i}_n,v^{i}_n+v^{i+2e_{1}}_n)=\frac{2}{J_{0,n}}(1+(v^{i}_n,v^{i+2e_{1}}_n))=\frac{J_{0,n}}{4}.
\end{equation*}
Since all the $v^{i}_{n}$ lie on a fixed plane the previous equality implies that $v_{n}$ agrees with the ground state $u_{n}$ defined in \eqref{ground} up to a fixed rotation $R\in SO(3)$.

The case $J_{0,n}>4$ trivially leads to ferromagnetic ground states (see also remark 3.3 in \cite{ciso}). 

As a result of this preliminary analysis, from now on we will focus on the asymptotics of the renormalized energy $H_{n}$. In particular, in what follows we consider the case when the parameter $J_{0,n}$ is in the vicinity of the Landau-Lifschitz point $J_{0}=4$ and the parameter $J_{2,n}$ diverges.  To this end we introduce $\dn\to 0$ and consider $J_{0,n}=4(1-\dn)$ so that $H_{n}$ takes now the form: 
\begin{align}\label{2d-normalized}
H_n(u):=\frac{1}{2}\left(\sum_{i\in R_n(\Omega)}\Ln^2|u^i-2(1-\dn)u^{i+e_1}+u^{i+2e_1}|^2+J_{2,n}\sum_{i\in R_n(\Omega)}\Ln^2|u^{i+e_2}-u^i|^2\right).
\end{align}
Within this choice stable states have a one dimensional helical structure and may exhibit chirality transitions in the propagation direction, which in our case is the horizontal axis. Consequently the analysis we are going to perform starts by considering energies on one dimensional horizontal slices of the domain. As we are going to show, this reduction to one dimensional spin chains still presents relevant differences with the case of $S^{1}$-valued spins considered in \cite{ciso}.

\subsection{One-dimensional slices}
In order to deal with one-dimensional energy slices we introduce the following additional notation. Let $I=(0,1)$ we define $\Z_n(I)$ as the set of those points $i \in \Z$ such that $\lambda_{n} i \in [0,1]$. We also define $R_n(I):=\{i\in \Z_n(I):\;i+2\in\Z_{n}(I)\}$. Similar to the two-dimensional case we will denote by ${\mathcal U}_{n}(I)$ the space of functions $u:i\in\Z_{n}(I)\mapsto u^{i}\in S^{2}$ and by $\overline{{\mathcal U}}_{n}(I)$ the subspace of those $u$ such that
\begin{equation}\label{prelim:boundary}
(u^{1},u^{0})=(u^{[1/\Ln]},u^{[1/\Ln]-1}).
\end{equation}
It is convenient to embed the family of configurations into a common function space. To this end we associate to any $u\in\overline{\mathcal{U}}_n(I)$ a piecewise-constant interpolation belonging to the class
\begin{equation}\label{cepsB}
C_n(I,S^{2}):=\{u\in\overline{\mathcal{U}}_n(I): u(x)=u(\Ln i)\,\,\hbox{ if } x \in \Ln(i+[0,1)),\,i \in \Z_n(I)\}. 
\end{equation}
The one-dimensional (sliced) renormalized energy is denoted by $H_{n}^{sl}:L^{\infty}(I,\R^{3})\to[0,+\infty]$ and takes the form below:
\begin{equation}\label{Hndelta}
H^{sl}_{n}(u):=\begin{cases}
\frac{1}{2}\sum_{i\in R_n(I)}\Ln\left|u^{i+2}-2(1-\dn)u^{i+1}+u^{i}\right|^{2}&\mbox{if $u\in C_n(I,S^{2})$,}\\
+\infty&\mbox{otherwise.}
\end{cases}
\end{equation} 
At first let us observe that the zero order $\Gamma$-limit is trivial. Indeed, the following result holds true.
\begin{proposition}\label{trivial-limit}
Let $H_{n}^{sl}:L^\infty(I,\R^{3})\to[0,+\infty]$ be the functional in \eqref{Hndelta}. Then $\Gamma\hbox{-}\lim_{n}H_{n}^{sl}(u)$ with respect to the weak-$*$ convergence in $L^{\infty}$ is given by 
\begin{equation*}
H^{sl}(u):=\begin{cases}
0&\mbox{if $|u|\leq 1$,}\\
+\infty&\mbox{otherwise in $L^{\infty}(I,\R^{3})$.}
\end{cases}
\end{equation*}
\end{proposition}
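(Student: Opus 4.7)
The $\liminf$ inequality is immediate. If $u_n\overset{*}{\rightharpoonup} u$ in $L^\infty(I,\R^3)$ and $H_n^{sl}(u_n)<+\infty$ along some subsequence, then along that subsequence $u_n\in C_n(I,S^2)$, so $|u_n|=1$ a.e. The weak-$*$ lower semicontinuity of the $L^\infty$ norm gives $|u|\leq 1$ a.e.\ and consequently $H^{sl}(u)=0\leq \liminf H_n^{sl}(u_n)$. In the remaining case $H^{sl}(u)\leq+\infty$ is trivial.

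For the $\limsup$ inequality, since $H^{sl}$ is identically zero on the closed unit ball of $L^\infty(I,\R^3)$, which is weak-$*$ compact and metrizable, a standard diagonal argument reduces the construction of a recovery sequence to the case of piecewise constant targets $u=\sum_{k=1}^K v_k\chi_{I_k}$ with $|v_k|\leq 1$. The crucial building block is, for a fixed $v\in B(0,1)$, a helical $S^2$-valued configuration with rotation axis parallel to $v$ and weak-$*$ average equal to $v$. Setting $p=v/|v|$ (or any $p\in S^2$ if $v=0$) and choosing an orthonormal basis $\{e_1',e_2'\}$ of $p^\perp$, define
\[
u_n^i := |v|\,p+\sqrt{1-|v|^2}\bigl(\cos(\omega_n i)\,e_1'+\sin(\omega_n i)\,e_2'\bigr)\in S^2,
\]
with $\omega_n$ to be chosen. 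Applying the sum-to-product identity $\cos((i+2)\omega)+\cos(i\omega)=2\cos\omega\cos((i+1)\omega)$ and its sine analogue, one computes
\[
|u_n^{i+2}-2(1-\dn)u_n^{i+1}+u_n^i|^2=4|v|^2\dn^2+4(1-|v|^2)\bigl(\cos\omega_n-(1-\dn)\bigr)^2,
\]
which is independent of $i$; hence $H_n^{sl}(u_n)\leq 2|v|^2\dn^2+2(1-|v|^2)(\cos\omega_n-(1-\dn))^2$. Choosing $\omega_n:=\max\{\arccos(1-\dn),\sqrt{\Ln}\}$, a direct check shows that in every scaling regime of $(\Ln,\dn)$ one has $\omega_n\to 0$, $\omega_n/\Ln\to+\infty$ and $\cos\omega_n-(1-\dn)\to 0$. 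The first two properties imply, via a Riemann--Lebesgue type argument applied to the piecewise constant interpolation, that $u_n\overset{*}{\rightharpoonup}|v|p=v$ in $L^\infty(I,\R^3)$, while the last, together with $\dn\to 0$, yields $H_n^{sl}(u_n)\to 0$. Moreover the scalar product $(u_n^{i+1},u_n^i)=|v|^2+(1-|v|^2)\cos\omega_n$ is constant in $i$, so the boundary condition \eqref{prelim:boundary} is automatically fulfilled on every helical block.

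To recover a general step function, concatenate the helical blocks above on each $I_k$. The interface between two consecutive blocks involves only $O(1)$ lattice sites and hence contributes at most $O(\Ln)$ to $H_n^{sl}$; since there are finitely many interfaces, the total gluing cost is $o(1)$. The global boundary condition \eqref{prelim:boundary} is then re-enforced by modifying $O(1)$ spins near the endpoints of $I$, again at cost $O(\Ln)$. The only real subtlety is the choice of $\omega_n$: when $\sqrt{\dn}/\Ln$ does not diverge, the exact ground-state frequency $\arccos(1-\dn)$ is insufficient to yield $v$ as a weak-$*$ limit, and one must resort to the slightly off-tuned frequency $\omega_n\sim\sqrt{\Ln}$, paying only the small extra energy of order $\Ln^2$ which is still negligible in the limit.
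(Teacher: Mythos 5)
Your proposal is correct, and it takes a genuinely different route from the paper. The paper does not construct a recovery sequence at all: it invokes \cite[Theorem 5.3]{ACG}, a compactness/integral representation theorem for discrete energies, to get \emph{a priori} that the $\Gamma$-limit has the form $\int_I f_{hom}(u)\,\mathrm{d}x$ for some convex $f_{hom}:B(0,1)\to[0,+\infty)$; then, testing with constant $u\in S^2$, one sees $H_n^{sl}(u)=2\delta_n^2\,\lambda_n(\lfloor 1/\lambda_n\rfloor-1)\to 0$, so $f_{hom}\equiv 0$ on $S^2$, and convexity together with $f_{hom}\geq 0$ forces $f_{hom}\equiv 0$ on the whole ball $B(0,1)=\mathrm{conv}(S^2)$. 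You instead prove both inequalities by hand: the $\liminf$ side via weak-$*$ lower semicontinuity of the $L^\infty$ norm, and the $\limsup$ side via an explicit helical construction whose rotation axis and amplitude are tuned so that the piecewise constant interpolation averages out to the target $v\in B(0,1)$, with a carefully chosen frequency $\omega_n=\max\{\arccos(1-\delta_n),\sqrt{\lambda_n}\}$ that simultaneously guarantees $\omega_n/\lambda_n\to\infty$ (so that the oscillatory part averages to zero in the weak-$*$ sense, with discrepancy $O(\lambda_n/\omega_n)$) and $\cos\omega_n-(1-\delta_n)\to 0$ (so that the energy is $o(1)$). Your reduction to piecewise constant targets by density/diagonalization and the $O(\lambda_n)$ gluing cost at the finitely many interfaces and at the two endpoints for \eqref{prelim:boundary} are handled correctly. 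What the paper's route buys is brevity, at the cost of citing an external representation theorem and leaving the explicit structure of near-ground states hidden; what your route buys is a self-contained, constructive argument that in particular exhibits the mechanism (fast averaging of helical configurations) making the zero-order limit degenerate, which is exactly the phenomenon motivating the higher-order analysis in the rest of the paper. The one point worth tightening in a written version is the energy estimate in the off-tuned regime, where $(\cos\omega_n-(1-\delta_n))^2$ is $O(\delta_n^2+\lambda_n^2)$ rather than $O(\lambda_n^2)$ as you state; both go to zero, so the conclusion is unaffected.
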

\proof
By \cite[Theorem 5.3]{ACG} there exists a convex function $f_{hom}:B(0,1)\to[0,+\infty)$ such that 
\begin{equation}
H^{sl}(u)=\begin{cases}
\int_{I}f_{hom}(u(x))\, \mathrm{d}x&\mbox{if $|u|\leq 1$,}\\
+\infty&\mbox{otherwise in $L^{\infty}(I,\R^{3})$.}
\end{cases}
\end{equation}
Let $u(x)\equiv u\in S^{2}$ be a constant function. Then, by a direct computation we have 
\begin{equation}
0\leq f_{hom}(u)=\int_{I}f_{hom}(u)\, \mathrm{d}x\leq\liminf_{n}H_{n}^{sl}(u)\leq \liminf_{n}2\delta_{n}^{2}=0.
\end{equation}
The result follows by the convexity of $f_{hom}$. 
\endproof
The degeneracy of the minima of $H^{sl}$ in the statement of Proposition \ref{trivial-limit} suggests to perform a higher order analysis by $\Gamma$-convergence in the spirit of \cite{Bra-Tru}. 

Let us recall a preliminary compactness result for scaled energies that was proved in \cite[Proposition 4.3]{ciso} for spin variables taking values in $S^1$ and whose proof works also for spins in $S^2$.

\begin{proposition}\label{compactness}
Let $\mu_n\to 0$ and let $u_n\in C_n(I,S^2)$ be such that 
\begin{equation}\label{prop:comp-bound}
\sup_n H_n^{sl}(u_n)\leq C\Ln\mu_n,
\end{equation}
then, for all $i$, we have
\begin{equation*}
\left|(1-\delta_{n})-(u_n^{i+1},u_n^i)\right|\leq C\mu_n^{\frac{1}{2}}.
\end{equation*}
In particular this implies that  $(u_n^{i+1},u_n^i)\to 1$ uniformly.
\end{proposition}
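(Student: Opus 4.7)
The plan is to turn the integrated energy bound into a pointwise bound on the discrete second difference
\[
A_i := u_n^{i+2} - 2(1-\delta_n)u_n^{i+1} + u_n^i,
\]
and then extract the claimed estimate on $\xi_i := (u_n^{i+1},u_n^i) - (1-\delta_n)$ by pairing $A_i$ against the three unit vectors $u_n^i$, $u_n^{i+1}$ and $u_n^{i+2}$.

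First, since $H_n^{sl}(u_n) = \tfrac{1}{2}\sum_{i\in R_n(I)} \Ln |A_i|^2 \leq C\Ln\mu_n$, bounding one summand by the whole sum and dividing by $\Ln$ immediately yields the pointwise control $|A_i|^2 \leq 2C\mu_n$ for every $i\in R_n(I)$.

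Next, using $|u_n^j|^2 = 1$, a direct expansion produces the three identities
\begin{align*}
(A_i,u_n^{i+1}) &= \xi_{i+1} + \xi_i,\\
(A_i,u_n^{i}) &= (u_n^{i+2},u_n^i) - 2(1-\delta_n)\xi_i - 2(1-\delta_n)^2 + 1,\\
(A_i,u_n^{i+2}) &= (u_n^{i+2},u_n^i) - 2(1-\delta_n)\xi_{i+1} - 2(1-\delta_n)^2 + 1.
\end{align*}
The key observation is that subtracting the last two cancels the otherwise uncontrolled scalar product $(u_n^{i+2},u_n^i)$ and produces $(A_i, u_n^{i+2}-u_n^i) = 2(1-\delta_n)(\xi_i - \xi_{i+1})$. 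Cauchy--Schwarz applied to the first of these identities and to this new one (together with $|u_n^{i+2}-u_n^i|\leq 2$) gives
\[
|\xi_i+\xi_{i+1}| \leq |A_i|, \qquad |\xi_i - \xi_{i+1}| \leq \frac{|A_i|}{1-\delta_n}.
\]

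The triangle inequality $2|\xi_i| \leq |\xi_i+\xi_{i+1}| + |\xi_i - \xi_{i+1}|$ then combines with the pointwise bound on $|A_i|$ to give $|\xi_i| \leq \tfrac{2-\delta_n}{2(1-\delta_n)}|A_i| \leq C'\mu_n^{1/2}$, which is the claim. For the boundary index outside $R_n(I)$, the periodic boundary condition \eqref{prelim:boundary} identifies the corresponding $\xi$-value with $\xi_0$, so the estimate holds for every admissible $i$; the uniform convergence $(u_n^{i+1},u_n^i) \to 1$ follows at once from the decomposition $(u_n^{i+1},u_n^i) = \xi_i + (1-\delta_n)$, since both summands are uniform infinitesimals. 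The only subtlety --- not really a serious obstacle --- is that the natural projection onto $u_n^{i+1}$ controls only the symmetric combination $\xi_i + \xi_{i+1}$, which forces one to use the projections onto $u_n^i$ and $u_n^{i+2}$ as well and to cancel the nuisance term $(u_n^{i+2},u_n^i)$ by subtraction.
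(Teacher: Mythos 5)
Your proof is correct. The computation of the three pairings $(A_i, u_n^{i+1})$, $(A_i, u_n^i)$, $(A_i, u_n^{i+2})$ is accurate, the subtraction that cancels the uncontrolled quantity $(u_n^{i+2},u_n^i)$ is exactly the observation that makes the argument close, and the pointwise bound $|A_i|^2\le 2C\mu_n$ extracted from the energy sum together with Cauchy--Schwarz and the triangle inequality yields $|\xi_i|\le \tfrac{2-\delta_n}{2(1-\delta_n)}|A_i|\le C'\mu_n^{1/2}$ with a constant that is uniformly bounded since $\delta_n\to 0$. One small remark: the appeal to the periodic boundary condition \eqref{prelim:boundary} at the end is unnecessary, because the triangle inequality you use already controls both $|\xi_i|$ and $|\xi_{i+1}|$ for every $i\in R_n(I)$, and these ranges together exhaust all admissible indices $\{0,\dots,[1/\Ln]-1\}$.

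Regarding comparison with the paper: the authors do not reprove the proposition here but simply invoke \cite[Proposition 4.3]{ciso}, noting that the $S^1$ argument transfers to $S^2$. Your proof is a self-contained, purely linear-algebraic derivation that works directly in $S^2$ (indeed in any $S^{d-1}$), which is a useful feature: it makes the transfer from $S^1$ to $S^2$ completely transparent, whereas the original $S^1$ proof presumably relies on an angular parametrization that does not literally carry over. The only price is one extra projection (onto $u_n^{i+2}-u_n^i$) beyond the obvious projection onto $u_n^{i+1}$, which, as you note, is precisely what is needed to decouple $\xi_i$ from $\xi_{i+1}$.
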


\section{$\Gamma$-convergence on slices}
This section is devoted to the study of the asymptotic behavior of the of one-dimensional renormalized energy \eqref{Hndelta}. We begin by introducing a convenient order parameter. Given a function $u\in C_{n}(I,S^{2})$, for all $i\in\{0,\dots,[\frac{1}{\Ln}]-1\}$ we set 
\begin{equation}\label{theta}
\theta^i(u)=\arccos((u^i,u^{i+1}))\in [0,\pi]
\end{equation} 
and ${w}^i=u^i\times u^{i+1}$. We now introduce a new order parameter $z:\Z_n(I) \rightarrow \R^3$ defined by
\begin{equation}\label{mapT}
z^i=\frac{1}{\sqrt{2\dn}}w^i=\frac{u^i\times u^{i+1}}{\sqrt{2\dn}}.
\end{equation}
which stands for a rescaled angular velocity. Such a $z$ will be extended in $L^1(I,\R^{3})$ by piecewise-constant interpolation. 
Note that the map $T_{n}:C_n(I,S^2)\rightarrow L^{1}(I,\R^3)$ associating to $u$ the corresponding $z$ according to \eqref{mapT} is not injective and that if $u$ satisfies periodic boundary conditions in the sense of \ref{prelim:boundary}, then $|z|$ is periodic and viceversa. As a result it can easily be seen that the energy cannot be uniquely defined by the function $z$. Therefore we define $H_n^{sl}$ on $L^1(I,\R^3)$ by setting
\begin{equation}\label{functionaldef}
H_n^{sl}(z)=
\begin{cases}
\inf_{T_{n}(u)=z}H^{sl}_n(u) &\mbox{if $z=T_{n}(u)\text{ for some }u\in C_n(I,S^2)$,}
\\
+\infty &\mbox{otherwise.}
\end{cases}
\end{equation}

\begin{remark}\label{T-1}
We stress that taking the infimum in the definition above has no effect in the asymptotic analysis we are going to perform. Indeed, if $u_{n},v_{n}$ are two sequences in $L^{\infty}(I,\R^{3})$ satisfying the energy bound \eqref{prop:comp-bound} and such that $z_{n}=T_{n}(u_n)=T_{n}(v_n)$, it easily follows from Proposition \ref{compactness} that for all $n$ large enough $n$, we have $(u_n^i,u_n^{i+1})=(v_n^i,v_n^{i+1})$ for all $i\in\{0,\dots,[\frac{1}{\Ln}]-1\}$. This also implies, by means of the identity 
\begin{equation*}
(u_n^i\times u_n^{i+1},u_n^{i+1}\times u_n^{i+2})=(u_n^i,u_n^{i+1})(u_n^{i+1},u_n^{i+2})-(u_n^i,u_n^{i+2})
\end{equation*}
that  $(u_n^{i},u_n^{i+2})=(v_n^i,v_n^{i+2})$ so that $H_{n}^{sl}(z_{n})$ does not depend on the element we choose in $T_{n}^{-1}(z_{n})$. 
\end{remark}

\subsection{General energy bounds}

As a preliminary result we point out some useful bounds on $H_{n}^{sl}$ at the energy scale $\Ln\delta_{n}^{3/2}$.

\begin{proposition}
Let $z_{n}$ be a sequence in $L^{\infty}(I,\R^{3})$ such that 
\begin{equation*}
\sup_n\frac{H_n^{sl}(z_n)}{\sqrt{2}\Ln\dn^{\frac{3}{2}}}\leq C<+\infty,
\end{equation*}
and let $u_{n}\in C_{n}(I,S^{2})$ be such that $z_{n}=T_{n}(u_{n})$ for all $n$. Then there exists a sequence of positive real numbers $\gamma_{n}\to 0$ such that for $n$ sufficiently large the following two bounds hold true:
\begin{eqnarray}
\frac{H^{sl}_n(z_n)}{\sqrt{2}\Ln\dn^{\frac{3}{2}}}\hspace{-.3cm}&\geq& \hspace{-.3cm}\frac{\sqrt{2\dn}}{\Ln}\sum_{i\in R_n(I)}\hspace{-.2cm}\Ln \left(\left|\frac{u_n^{i+1}-u_n^i}{\sqrt{2\dn}}\right|^2-1\right)^2+\frac{\Ln}{\sqrt{2\dn}}(1-\gamma_{n})\sum_{i\in R_n(I)}\hspace{-.2cm}\Ln\left|\frac{z_n^{i+1}-z_n^i}{\Ln}\right|^2\label{liminfbound}\\
\cr
\cr
\frac{H^{sl}_n(z_n)}{\sqrt{2}\Ln\dn^{\frac{3}{2}}}\hspace{-.3cm}&\leq& \hspace{-.3cm}\frac{\sqrt{2\dn}}{\Ln}\sum_{i\in R_n(I)}\hspace{-.2cm}\Ln \left(\left|\frac{u_n^{i+1}-u_n^i}{\sqrt{2\dn}}\right|^2-1\right)^2+\frac{\Ln}{\sqrt{2\dn}}\sum_{i\in R_n(I)}\hspace{-.2cm}\Ln\left|\frac{z_n^{i+1}-z_n^i}{\Ln}\right|^2.\label{limsupbound}
\end{eqnarray}
\end{proposition}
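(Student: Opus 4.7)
The whole argument rests on an exact algebraic rewriting of the energy density. I would start by writing $u^{i+2}-2(1-\dn)u^{i+1}+u^{i} = D^{2}u^{i}+2\dn u^{i+1}$ with $D^{2}u^{i}:=u^{i+2}-2u^{i+1}+u^{i}$; the constraint $|u^{i+1}|=1$ immediately gives $(D^{2}u^{i},u^{i+1})=-\tfrac{1}{2}(a^{i}+a^{i+1})$, where $a^{i}:=|u^{i+1}-u^{i}|^{2}$. A short cross-product computation (already used in Remark \ref{T-1}) yields $w^{i+1}-w^{i} = u^{i+1}\times D^{2}u^{i}$, and orthogonally decomposing $D^{2}u^{i}$ along the unit vector $u^{i+1}$ gives $|D^{2}u^{i}|^{2} = |w^{i+1}-w^{i}|^{2}+\tfrac{1}{4}(a^{i}+a^{i+1})^{2}$. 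Expanding the original square and completing the square in $(a^{i}-2\dn)$ then produces the key pointwise identity
\[
|u^{i+2}-2(1-\dn)u^{i+1}+u^{i}|^{2}= |w^{i+1}-w^{i}|^{2}+\tfrac{1}{4}\bigl((a^{i}-2\dn)+(a^{i+1}-2\dn)\bigr)^{2}.
\]
Setting $b^{i}:=a^{i}/(2\dn)-1$, using $|w^{i+1}-w^{i}|^{2}=2\dn|z^{i+1}-z^{i}|^{2}$, and dividing by $\sqrt{2}\Ln\dn^{3/2}$, the energy thus rewrites as
\[
\frac{H_{n}^{sl}(z_{n})}{\sqrt{2}\Ln\dn^{3/2}} = \frac{1}{\sqrt{2\dn}}\sum_{i\in R_{n}(I)}|z^{i+1}-z^{i}|^{2} +\frac{\sqrt{\dn}}{2\sqrt{2}}\sum_{i\in R_{n}(I)}(b^{i}+b^{i+1})^{2}.
\]

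Next I would invoke the elementary identity $(b^{i}+b^{i+1})^{2}=2(b^{i})^{2}+2(b^{i+1})^{2}-(b^{i+1}-b^{i})^{2}$. Summing over $R_{n}(I)$ and exploiting the boundary condition \eqref{prelim:boundary} (which forces $b^{0}=b^{N-1}$ with $N=[1/\Ln]$, so that the telescoping boundary contributions cancel) one gets the exact discrete relation
\[
\sum_{i\in R_{n}(I)}(b^{i}+b^{i+1})^{2} = 4\sum_{i\in R_{n}(I)}(b^{i})^{2}-\sum_{i\in R_{n}(I)}(b^{i+1}-b^{i})^{2}.
\]
Discarding the non-negative defect $\sum(b^{i+1}-b^{i})^{2}$ and substituting into the previous identity immediately gives the upper bound \eqref{limsupbound}.

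The delicate point is the lower bound \eqref{liminfbound}, for which the defect $\sum(b^{i+1}-b^{i})^{2}$ must be absorbed into the $z$-Dirichlet term up to a factor $(1-\gamma_{n})$. The pivotal remark is the identity $|z^{i}|^{2}=(b^{i}+1)\bigl(1-\tfrac{\dn}{2}(b^{i}+1)\bigr)$, a direct consequence of $|w^{i}|^{2}=1-(u^{i},u^{i+1})^{2}$ and $(u^{i},u^{i+1})=1-\dn(b^{i}+1)$. Taking increments produces
\[
|z^{i}|^{2}-|z^{i+1}|^{2}=(b^{i}-b^{i+1})\bigl(1-\tfrac{\dn}{2}(b^{i}+b^{i+1}+2)\bigr).
\]
Proposition \ref{compactness} applied with $\mu_{n}=\sqrt{2}\dn^{3/2}$ yields $\dn|b^{i}|\leq C\dn^{3/4}$, whence the pointwise bound $|b^{i}|\leq C\dn^{-1/4}$; then for $n$ large the parenthesis is $\geq \tfrac{1}{2}$ and $|z^{i}|^{2}\leq C'\dn^{-1/4}$. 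Combining these estimates,
\[
(b^{i+1}-b^{i})^{2}\leq 4(|z^{i}|^{2}+|z^{i+1}|^{2})|z^{i+1}-z^{i}|^{2}\leq C''\dn^{-1/4}|z^{i+1}-z^{i}|^{2},
\]
so that choosing $\gamma_{n}:=C'''\dn^{3/4}\to 0$ gives $\tfrac{\sqrt{\dn}}{2\sqrt{2}}\sum(b^{i+1}-b^{i})^{2}\leq \tfrac{\gamma_{n}}{\sqrt{2\dn}}\sum|z^{i+1}-z^{i}|^{2}$; plugging this into the identity of the first paragraph concludes \eqref{liminfbound}. The main obstacle is precisely this quantitative closure: the sharpness of the $L^{\infty}$-bound $|b^{i}|=O(\dn^{-1/4})$ provided by Proposition \ref{compactness} is what makes $\gamma_{n}=O(\dn^{3/4})$ vanish; a weaker pointwise control would leave an unabsorbed residual term. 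All other ingredients are elementary algebra and indexing bookkeeping.
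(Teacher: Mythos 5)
Your proof is correct, but it follows a genuinely different route from the paper's. The authors start from the algebraic identity $4|u^{i+1}-u^i|^2=|u^{i+1}-u^i|^4+4(1-(u^{i+1},u^i)^2)$, rewrite $H_n^{sl}$ as the double-well term $2\dn^2\sum\Ln(b^i)^2$ plus a remainder $\sum\Ln\bigl(2(1-(u_n^{i+1},u_n^i)^2)-\tfrac12|u_n^{i+2}-u_n^i|^2\bigr)$, and then sandwich the remainder by expanding $|u^{i+2}-u^i|^2=|w^{i+1}+w^i|^2+(\cos\theta^{i+1}-\cos\theta^i)^2$ and controlling the cosine increment by a small multiple $\gamma_n$ of $|w^{i+1}-w^i|$; the periodicity is then used to convert $4|z^i|^2$ into $2|z^i|^2+2|z^{i+1}|^2$ before invoking the parallelogram identity. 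Your approach instead derives a single exact pointwise identity $|u^{i+2}-2(1-\dn)u^{i+1}+u^i|^2 = |w^{i+1}-w^i|^2 + \dn^2(b^i+b^{i+1})^2$ by decomposing $D^2u^i$ along and across $u^{i+1}$ and completing the square, turns $\sum(b^i+b^{i+1})^2$ into $4\sum(b^i)^2-\sum(b^{i+1}-b^i)^2$ via periodicity, and then absorbs the defect $\sum(b^{i+1}-b^i)^2$ into the Dirichlet term by means of the exact relation $|z^i|^2-|z^{i+1}|^2=(b^i-b^{i+1})(1-\tfrac{\dn}{2}(b^i+b^{i+1}+2))$. This is a legitimately different decomposition: yours makes the error term manifestly a discrete gradient in $b$ from the start and puts all the approximation in one transparent absorption step, while the paper's puts the error in the cosine increments and reaches the two target terms more directly (no rearrangement of $(b^i+b^{i+1})^2$ needed). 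Your reliance on Proposition \ref{compactness} with $\mu_n\sim\dn^{3/2}$ to get $|b^i|\lesssim\dn^{-1/4}$ is the quantitative crux and is applied correctly. One cosmetic slip: in the chain $(b^{i+1}-b^i)^2\leq 4(|z^i|^2+|z^{i+1}|^2)|z^{i+1}-z^i|^2$ the constant should be $8$ (since $(|z^i|+|z^{i+1}|)^2\leq 2(|z^i|^2+|z^{i+1}|^2)$ and the factor of $2$ from dividing by the bracket gets squared), but this does not affect the conclusion $\gamma_n=O(\dn^{3/4})$. Also, the cross-product formula $w^{i+1}-w^i=u^{i+1}\times D^2u^i$ is not the one used in Remark \ref{T-1} (that remark uses the Binet--Cauchy identity), though the formula itself is correct.
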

\proof
Since our assumption implies the energy bound \eqref{prop:comp-bound}, following remark \ref{T-1}, for $n$ sufficiently large the energy $H^{sl}_{n}(z_{n})$ can be rewritten in terms of $u_{n}\in T^{-1}(z_{n})$ and does not depend on the chosen element in $T^{-1}(z_{n})$. 
A straightforward calculations shows that
\begin{equation}\label{order4}
4|u_n^{i+1}-u_n^i|^2=|u_n^{i+1}-u_n^i|^4+4(1-(u_n^{i+1},u_n^i)^2).
\end{equation}
Thus we can rewrite the energy of $H_{n}^{sl}(z_{n})$ in terms of $u_{n}$ as 
\begin{align}\label{rewrite1}
H_n^{sl}(z_n)=&\sum_{i\in R_n(I)}2(1-\dn)\Ln|u_n^{i+1}-u_n^i|^2-\frac{\Ln}{2}|u_n^{i+2}-u_n^i|^2+2\Ln\dn^2\nonumber
\\
=&\sum_{i\in R_n(I)}\Ln\left(\frac{1}{2}|u_n^{i+1}-u_n^{i}|^4-2\dn|u_n^{i+1}-u_n^i|^2+2\dn^2\right)\nonumber
\\
&+\sum_{i\in R_n(I)}\Ln\left(2(1-(u_n^{i+1},u_n^i)^2)-\frac{1}{2}|u_n^{i+2}-u_n^i|^2\right)
\\
=&\sum_{i\in R_n(I)}2\Ln\left(\frac{1}{2}|u_n^{i+1}-u_n^i|^2-\dn\right)^2+\sum_{i\in R_n(I)}\Ln\left(2(1-(u_n^{i+1},u_n^i)^2)-\frac{1}{2}|u_n^{i+2}-u_n^i|^2\right)\nonumber
\\
=&2\delta_{n}^{2}\sum_{i\in R_n(I)}\Ln \left(\left|\frac{u_n^{i+1}-u_n^i}{\sqrt{2\dn}}\right|^2-1\right)^2
+\sum_{i\in R_n(I)}\Ln\left(2(1-(u_n^{i+1},u_n^i)^2)-\frac{1}{2}|u_n^{i+2}-u_n^i|^2\right)\nonumber.
\end{align}
We now claim that there exists a sequence of number $\gamma_{n}\to 0$ such that the following two inequalities hold:
\begin{eqnarray}
\sum_{i\in R_n(I)}\Ln\left(2(1-(u_n^{i+1},u_n^i)^2)-\frac{1}{2}|u_n^{i+2}-u_n^i|^2\right)&\geq& (1-\gamma_{n})\delta_{n}\sum_{i\in R_n(I)}\Ln\left|z_n^{i+1}-z_n^i\right|^2 \label{claim-lbound}\\
\sum_{i\in R_n(I)}\Ln\left(2(1-(u_n^{i+1},u_n^i)^2)-\frac{1}{2}|u_n^{i+2}-u_n^i|^2\right)&\leq& \delta_{n}\sum_{i\in R_n(I)}\Ln\left|z_n^{i+1}-z_n^i\right|^2 \label{claim-ubound}
\end{eqnarray}
If the claim is proved the inequalities in the statement follow by \eqref{rewrite1} on dividing by $\sqrt{2}\Ln\delta_{n}^{3/2}$.

We are then only left to show the validity of \eqref{claim-lbound} and \eqref{claim-ubound}. We first notice that by definition of $z_{n}$ we have 
\begin{equation}\label{equality-trivial}
1-(u_{n}^{i},u_{n}^{i+1})^{2}=2\dn|z_n^i|^2.
\end{equation}
Setting $\theta_{n}^{i}=\theta(u_{n})^{i}$ according to \eqref{theta} we observe that using the triple product expansion
\begin{align*}
u_n^{i+2}=&-w_n^{i+1}\times u_n^{i+1}+\cos(\theta_n^{i+1})\,u_n^{i+1},
\\
u_n^{i}=&w_n^{i}\times u_n^{i+1}+\cos(\theta_n^{i})\,u_n^{i+1}.
\end{align*}
Thus we can write 
\begin{align}\label{rodrigues}
|u_n^{i+2}-u_n^i|^2=&\left|(w_n^{i+1}+w_n^{i})\times u_n^{i+1}+(\cos(\theta_n^{i})-\cos(\theta_n^{i+1}))u_n^{i+1}\right|^2\nonumber
\\
=&\left|(w_n^{i+1}+w_n^{i})\times u_n^{i+1}\right|^2+\left|\cos(\theta_n^{i+1})-\cos(\theta_n^i)\right|^2\nonumber
\\
=&\left|w_n^{i+1}+w_n^{i}\right|^2+\left|\cos(\theta_n^{i+1})-\cos(\theta_n^i)\right|^2
\end{align} 
This immediately implies that 
\begin{equation}\label{upper-bound-trivial}
\frac12|u_n^{i+2}-u_n^i|^2\geq \delta_{n}|z_{n}^{i+1}+z_{n}^{i}|^{2}.
\end{equation}

By Proposition \ref{compactness} we have that $\theta_n^i\to 0$ uniformly in $i$. Combining that with the elementary fact that around zero $|\sin(x)|=\sin(|x|)$, it holds:
\begin{eqnarray*}
|\cos(\theta_n^{i+1})-\cos(\theta_n^i)|&=&|\cos(|\theta_n^{i+1}|)-\cos(|\theta_n^i|)|\nonumber\\
&\leq& \gamma_n||\sin(\theta_n^{i+1})|-|\sin(\theta_n^i)||\leq \gamma_n|w_n^{i+1}-w_n^i|
\end{eqnarray*}
for some sequence $\gamma_n$ converging to $0$. It then follows that
\begin{equation}\label{nnn}
\frac{1}{2}|u_n^{i+2}-u_n^i|^2\leq \delta_{n}\left(|z_n^{i+1}+z_n^i|^2+\gamma_n|z_n^{i+1}-z_n^i|^2\right).
\end{equation}
Inserting this estimate as well as \eqref{equality-trivial} in the left hand side of \eqref{claim-lbound} and using the periodicity of $|z_{n}|$ we have
\begin{align*}
\sum_{i\in R_n(I)}&\Ln\left(2(1-(u_n^{i+1},u_n^i)^2)-\frac{1}{2}|u_n^{i+2}-u_n^i|^2\right)
\\
&\geq\dn\sum_{i\in R_n(I)}\Ln\left(4|z_n^i|^2-|z_n^{i+1}+z_n^i|^2 -\gamma_n|z_n^{i+1}-z_n^i|^2\right)
\\
&=\dn\sum_{i\in R_n(I)}\Ln\left(2|z_n^i|^2+2|z_n^{i+1}|^2-|z_n^{i+1}+z_n^i|^2 -\gamma_n|z_n^{i+1}-z_n^i|^2\right)
\\
&=\dn(1-\gamma_n)\sum_{i\in R_n(I)}\Ln|z_n^{i+1}-z_n^i|^2.
\end{align*}
This proves claim \eqref{claim-lbound}. A similar argument using \eqref{upper-bound-trivial} in place of \eqref{nnn} proves claim \eqref{claim-ubound}.\endproof 
 
From the previous result we can deduce compactness with respect to the weak*-convergence in $L^{\infty}$. The bounds we find are indeed the best we can hope for in this case (see Remark \ref{noncompact} below), nevertheless they will play an important role in Section \ref{constrained} when we will discuss the coupling of the functional with other terms and we will use them in order to improve the compactness of sequences with equibounded energy.
The arguments are similar to the ones used in the proof of Theorem 1.2 in \cite{leoni}.

\begin{proposition}\label{weakcompactness}
Assume that $\frac{\Ln}{\sqrt{\dn}}\to 0$ and let $z_n\in L^{\infty}(I,\R^{3})$ be a sequence such that
\begin{equation*}
\sup_n \frac{H^{sl}_n(z_n)}{\sqrt{2}\Ln\dn^{\frac{3}{2}}}\leq C<+\infty.
\end{equation*}
Then $\|z_n\|_{\infty}$ is equibounded and, up to subsequences, $z_n$ converges weakly* in $L^{\infty}(I)$ to some $z\in L^{\infty}(I,B(0,1))$. If in addition $z_{n}\to z$ in $L^{1}(I)$, then $z\in L^{\infty}(I,S^{2})$.
\end{proposition}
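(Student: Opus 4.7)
The plan is to extract from the lower bound \eqref{liminfbound} two independent estimates and combine them. Writing $a_n^i := |u_n^{i+1}-u_n^i|^2/(2\dn)$, the assumed energy bound yields a discrete Modica--Mortola type penalty
\begin{equation*}
\sum_{i\in R_n(I)}\Ln(a_n^i-1)^2 \leq \frac{C\Ln}{\sqrt{2\dn}},
\end{equation*}
whose right-hand side tends to zero by the hypothesis $\Ln/\sqrt{\dn}\to 0$, together with a discrete Dirichlet type bound
\begin{equation*}
\sum_{i\in R_n(I)}|z_n^{i+1}-z_n^i|^2\leq C\sqrt{2\dn}.
\end{equation*}
A key algebraic identity, obtained from \eqref{equality-trivial} and $1-(u_n^i,u_n^{i+1})=\dn a_n^i$, is
\begin{equation*}
|z_n^i|^2 = a_n^i - \tfrac{\dn}{2}(a_n^i)^2 \leq a_n^i,
\end{equation*}
which transfers control from $a_n^i$ to $|z_n^i|^2$ up to an error of order $\dn(a_n^i)^2$.

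\medskip
The main obstacle is the equiboundedness of $\|z_n\|_\infty$, since the penalty alone only controls $a_n^i-1$ in average and a priori admits spikes up to $a_n^i\sim 2/\dn$. Following the spirit of Theorem 1.2 in \cite{leoni}, I propagate the maximum via the Dirichlet bound: letting $M_n := \max_i |z_n^i|$ be attained at some $i^\star$, Cauchy--Schwarz gives
\begin{equation*}
|z_n^j-z_n^{i^\star}|^2 \leq k\sum_l |z_n^{l+1}-z_n^l|^2 \leq Ck\sqrt{\dn}\quad\text{for}\quad |j-i^\star|\leq k.
\end{equation*}
Choosing $k$ of order $M_n^2/\sqrt{\dn}$ yields $|z_n^j|\geq M_n/2$, and hence $a_n^j\geq M_n^2/4$, on an interval of comparable size around $i^\star$ (should $k$ exceed the number of lattice points, the same argument applied to the whole $I$ would force $M_n\to 0$, giving an even stronger conclusion). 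Substituting into the penalty bound forces $\Ln\cdot k\cdot (M_n^2/4-1)^2\leq C\Ln/\sqrt{\dn}$, i.e., $M_n^6\leq C'$ independently of $n$. Weak*-compactness of $z_n$ in $L^\infty(I,\R^3)$ then follows from Banach--Alaoglu.

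\medskip
To locate the weak* limit in $L^\infty(I,B(0,1))$, I upgrade the penalty to $|z_n|^2\to 1$ in $L^1(I)$. Indeed, Cauchy--Schwarz yields $\sum_i\Ln|a_n^i-1|\to 0$, while the $L^\infty$-bound just established combined with the identity above gives $\| |z_n|^2 - a_n \|_\infty = \tfrac{\dn}{2}\max_i(a_n^i)^2 = O(\dn)$. Since $z_n\rightharpoonup z$ in $L^2(I,\R^3)$, weak lower semicontinuity of the $L^2$-norm on every measurable $A\subset I$ gives $\int_A|z|^2\leq\liminf\int_A|z_n|^2=|A|$, which forces $|z|\leq 1$ a.e.

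\medskip
Finally, under the additional hypothesis $z_n\to z$ in $L^1(I,\R^3)$, passing to an a.e.\ convergent subsequence and using the uniform $L^\infty$-bound, dominated convergence gives $|z_n|^2\to|z|^2$ in $L^1(I)$; comparison with $|z_n|^2\to 1$ in $L^1(I)$ forces $|z|=1$ a.e., whence $z\in L^\infty(I,S^2)$.
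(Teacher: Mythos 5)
Your proposal is correct in substance but takes a genuinely different route from the paper's for the central step, the $L^\infty$ bound. The paper applies the Modica--Mortola (BV) trick directly — multiplying the square root of the penalty by the discrete gradient — to obtain a \emph{threshold estimate}: the total variation of $z_n$ on the set $\{|z_n^i|\geq 2\}$ is controlled by the constant $C$. It then fixes a small starting value, propagates forward using the uniform continuity bound \eqref{continuity}, and bounds the maximum by $3+C/6$. You instead keep the two terms of \eqref{liminfbound} separate and run a \emph{max-propagation} argument: take the maximizer $i^\star$, use Cauchy--Schwarz on the discrete Dirichlet bound to force $|z_n|\gtrsim M_n/2$ on an interval of $\sim M_n^2/\sqrt{\dn}$ lattice points, and feed this into the penalty bound to conclude $M_n^6\lesssim 1$. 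Both are valid; the paper's threshold trick is a bit cleaner and gives an explicit bound, while your argument is perhaps more mechanical and generalises more readily. For the final steps you also diverge: the paper shows pointwise a.e.\ convergence $|z_n|\to 1$ via the auxiliary function $\zeta_n$, whereas you establish $|z_n|^2\to 1$ in $L^1$ and then use weak $L^2$-lower-semicontinuity on arbitrary measurable sets — again both valid.

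Two small slips, neither fatal. First, your parenthetical remark asserting that if $k$ exceeds the lattice size then $M_n\to 0$ is off: running the same estimate on all of $I$ gives $(M_n^2/4-1)^2\le C\Ln/\sqrt{\dn}\to 0$, i.e.\ $M_n\to 2$ (and hence bounded), not $M_n\to 0$. Second, the claim $\||z_n|^2-a_n\|_\infty=\tfrac{\dn}{2}\max_i(a_n^i)^2=O(\dn)$ requires $a_n^i$ to be bounded; but the $L^\infty$ bound on $z_n$ alone does \emph{not} yield this, since the quadratic $a\mapsto a(1-\tfrac{\dn}{2}a)$ has a spurious large branch near $a\approx 2/\dn$ with the same small image value. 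One must invoke the pointwise bound $|a_n^i-1|\lesssim\dn^{-1/4}$ coming from the penalty (equivalently Proposition \ref{compactness} with $\mu_n=\sqrt{2}\dn^{3/2}$) to exclude that branch, and the resulting error is $O(\sqrt{\dn})$, not $O(\dn)$ — which still tends to zero and so does not affect the conclusion.
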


\begin{proof}
Let $u_{n}$ be such that $T_{n}(u_{n})=z_{n}$. By \eqref{liminfbound} we have that, for large $n$,
\begin{align*}
C&\geq\frac{2\sqrt{\dn}}{\Ln}\sum_{i\in R_n(I)}\Ln\left(\left|\frac{u_n^{i+1}-u_n^i}{\sqrt{2\dn}}\right|^2-1\right)^2+\frac{\Ln}{2\sqrt{\dn}}\sum_{i\in R_n(I)}\Ln\left|\frac{z_n^{i+1}-z_n^i}{\Ln}\right|^2
\\
&\geq 2\sum_{i\in R_n(I)}\Ln\left|\left|\frac{u_n^{i+1}-u_n^i}{\sqrt{2\dn}}\right|^2-1\right|\left|\frac{z_n^{i+1}-z_n^i}{\Ln}\right|.
\end{align*}
First we observe that 
\begin{equation}\label{dist-graduz}
\left|\frac{u_n^{i+1}-u_n^i}{\sqrt{2\dn}}\right|^2-|z_n^i|^2=\frac{\left((u_n^{i+1},u_n^i)-1\right)^2}{2\dn}\geq 0.
\end{equation}
As a result we can continue the lower bound above deducing that 
\begin{equation}\label{thresholdest}
C\geq 6\sum_{i\in R_n(I):\,|z_n^i|\geq 2}\Ln\left|\frac{z_n^{i+1}-z_n^i}{\Ln}\right|.
\end{equation}
Exploiting again \eqref{liminfbound}, we may also deduce that
\begin{equation}\label{continuity}
\sup_i|z_n^{i+1}-z_n^i|^2\leq 2C\sqrt{\dn}.
\end{equation}
We now fix $n_{0}$ such that $2C\sqrt{\dn}<\frac14$ for all $n>n_0$. Given $n>n_{0}$, we claim that $\|z_{n}\|_{\infty}\leq\max\{4, 3+\frac{C}{6}\}$. To this end assume it exists $j$ such that $|z_{n}^{j}|\geq 4$, otherwise the claim is proved.

We observe that, combining \eqref{liminfbound} with \eqref{dist-graduz}, there exists $i(n)\in R_n(I)$ such that $|z_n^{i(n)}|^2\leq 2$. Without loss of generality we may suppose that $i(n)<j$. Let us define 
$$
k(n)+1:=\min\{i:\, i(n)\leq i\leq j \,{ \rm with }\,|z_{n}^{l}|> 3,\, \forall\, i\leq l\leq j\}.
$$ 
The minimum is well defined since the set contains at least $j$. Note that $k(n)+1> i(n)$, that gives $k(n)\geq 0$ and $|z_{n}^{k(n)}|\leq 3$. By \eqref{continuity} and the choice of $n>n_{0}$ we also have that $|z_{n}^{k(n)}|\geq 2$. Therefore we have that $|z_{n}^{l}|\geq 2$ for all $k(n)\leq l\leq j$ and by \eqref{thresholdest} we eventually have that
\begin{equation*}
|z_n^j|\leq |z_n^{k(n)}|+\sum_{l=k(n)}^{j-1}\Ln\left|\frac{z_n^{l+1}-z_n^l}{\Ln}\right|\leq 3+\frac{C}{6},
\end{equation*}
which proves the claim. As a result equiboundedness as well as $L^{\infty}$ weak* compactness are shown. 

We now prove that $|z_{n}|\to 1$ almost everywhere in $I$. Setting $\theta^{i}_{n}=\theta^{i}(u_{n})$ according to \eqref{theta} we define the piecewise constant function ${\zeta}_n$ whose value on the nodes of the lattice is
\begin{equation*}
{\zeta}_n^i=\sqrt{\frac{2}{\dn}}\left|\sin\left(\frac{\theta_n^i}{2}\right)\right|.
\end{equation*}
We notice that for all $i\in R_{n}(I)$ one has by definition that $\zeta^{i}_{n}=\left|\frac{u_{n}^{i+1}-u_{n}^{i}}{\sqrt{2\dn}}\right|$.
Since by Proposition \ref{compactness} $\theta^{i}_{n}\to 0$ uniformly in $i$, by the equiboundedness of $|z^{i}_{n}|=|\frac{\sin(\theta_{n}^{i})}{\sqrt 2\dn}|$ and trigonometric identities we get that $\||z_{n}|-\zeta_{n}\|_{\infty}\to 0$. By \eqref{liminfbound} we may now write that, for any interval $I^{\prime}\subset\subset I$,
\begin{align*}
0=\lim_{n}C\frac{\Ln}{2\sqrt{\dn}}\geq\limsup_{n}\sum_{i\in R_n(I)}\Ln\left(\left|\frac{u_n^{i+1}-u_n^i}{\sqrt{2\dn}}\right|^2-1\right)^2\geq \limsup_{n}\int_{I^{\prime}}\left({\zeta}_n(t)^2-1\right)^2\,\mathrm{d}t.
\end{align*}
Therefore ${\zeta}_n\to 1$ pointwise almost everywhere due to the arbitrariness of $I^{\prime}$ which implies that $|z_{n}|\to 1$ almost everywhere.
It follows that any weak limit of $z_{n}$ belongs to $L^{\infty}(I,B(0,1))$ and that if in addition $z_{n}\to z$ strongly in $L^{1}$ then $z\in L^{\infty}(I, S^{2})$. 
\end{proof}

\begin{remark}\label{affineclose}
If $z_n$ is as in Proposition \ref{weakcompactness} and $z_n^a$ denotes the piecewise affine interpolation on $\mathbb{Z}_n(I)$ of $z_n$, then it follows from (\ref{continuity}) that

\begin{equation*}
\sup_{t\in I}|z_n^a(t)-z_n(t)|^2\leq 2C\sqrt{\dn}.
\end{equation*}
This estimate of course also holds if we rescale the variable $t$.
\end{remark}

\subsection{Zero energy chirality transitions:}
In this section we will prove that, in contrast to the $S^{1}$-valued spin system studied in \cite{ciso}, in the present case the functional $H_{n}^{sl}$ does not penalize chirality transitions between ground states. In other words the optimal asymptotic energy for a transition turns out to be zero, as it is explained below. Before entering into the details of the proof we need to introduce some notation. Given two unit vectors $z_{-},z_{+}\in S^2$ we set
\begin{equation*}
H^{\times}_{z_{-},z_{+}}:=\left\{w=u\times u^{\prime},\; u\in H^2_{loc}(\R,S^2):\;\lim_{t\to \pm\infty}w(t)=z_{\pm}\right\}.
\end{equation*}
We first prove the following lemma.
\begin{lemma}\label{rightspace}
Let $u\in H^2_{loc}(\R,S^2)$ and let $w=u\times u^{\prime}$. Then $w\in H^1_{loc}(\R,\R^3)$ and $w^{\prime}=u\times u^{\prime\prime}$. 
\\
Moreover, if $u\in H^1_{loc}(\R,S^2)$ and $w=u\times u^{\prime}\in H^1_{loc}(\R,\R^3)$, then $u\in H^2_{loc}(\R,S^2)$.
\end{lemma}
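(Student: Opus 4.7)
I would prove the lemma in two parts, corresponding to the two directions of the equivalence. The first direction is essentially a product-rule computation, while the second direction is more subtle and requires extracting extra regularity of $u'$ from the information that $w\in H^1_{loc}$, then recovering $u''$ via a pointwise algebraic identity on $S^2$.

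For the forward direction, since $u\in H^2_{loc}(\R,S^2)$ we have $u,u'\in H^1_{loc}(\R,\R^3)$; in particular both are locally bounded by the one-dimensional Sobolev embedding $H^1_{loc}\hookrightarrow C^0_{loc}$. The cross product of two $H^1_{loc}$ functions in one dimension is $H^1_{loc}$ (which can be verified by approximation by smooth functions, or component-by-component via the Leibniz rule for scalar $H^1$ products), and differentiation gives $w'=u'\times u'+u\times u''=u\times u''$.

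For the reverse direction, assume $u\in H^1_{loc}(\R,S^2)$ and $w=u\times u'\in H^1_{loc}$. First I would note that the constraint $|u|=1$ a.e.\ implies $u\cdot u'=0$ a.e.\ (differentiate $|u|^2=1$ in the distributional sense), and consequently $|w|^2=|u|^2|u'|^2-(u\cdot u')^2=|u'|^2$ pointwise a.e. Since $w\in H^1_{loc}\subset C^0_{loc}\subset L^\infty_{loc}$, this yields the crucial extra regularity $u'\in L^\infty_{loc}$. Now use the pointwise identity valid for any unit vector $u$ and any $a\in\R^3$, namely $a=(a\cdot u)u-u\times(u\times a)$, to motivate defining
\begin{equation*}
f:=-|u'|^2u-u\times w'\in L^2_{loc}(\R,\R^3),
\end{equation*}
where membership in $L^2_{loc}$ follows from $u\in L^\infty$, $u'\in L^\infty_{loc}$, and $w'\in L^2_{loc}$. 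The goal is then to show $u''=f$ in the distributional sense, since this immediately yields $u\in H^2_{loc}(\R,S^2)$.

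The main obstacle, and core computation, is verifying $u''=f$ distributionally without yet knowing $u''$ exists in $L^2$. I would fix $\phi\in C^\infty_c(\R,\R^3)$ and use the two pointwise identities
\begin{equation*}
u\times w=u\times(u\times u')=(u\cdot u')u-|u|^2u'=-u',\qquad u'\times w=|u'|^2u-(u\cdot u')u'=|u'|^2u,
\end{equation*}
together with the product rule $(u\times w)'=u'\times w+u\times w'$ in $L^1_{loc}$ (valid since both $u$ and $w$ lie in $H^1_{loc}$). Then
\begin{equation*}
\int u\times w'\cdot\phi=-\int(u\times w)\cdot\phi'-\int(u'\times w)\cdot\phi=\int u'\cdot\phi'-\int|u'|^2 u\cdot\phi,
\end{equation*}
so that $\int f\cdot\phi=-\int|u'|^2u\cdot\phi-\int u\times w'\cdot\phi=-\int u'\cdot\phi'=\int u\cdot\phi''$, the last step by one further integration by parts using $u\in H^1_{loc}$. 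This identifies $f$ as the second distributional derivative of $u$, and since $f\in L^2_{loc}$ we conclude $u\in H^2_{loc}(\R,S^2)$. The delicate point is really the observation $|w|=|u'|$, which is what promotes $|u'|^2u$ from merely $L^1_{loc}$ to $L^\infty_{loc}$ and thereby makes the candidate $f$ lie in $L^2_{loc}$.
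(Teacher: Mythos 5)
Your proof is correct, but for the reverse direction you take a genuinely longer route than the paper, despite having the paper's key identity $u\times w=-u'$ sitting in plain sight among your ``two pointwise identities.''

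The paper's argument is essentially one line: since $u\times w=-u'$ with $u,w\in H^1_{loc}\cap L^\infty_{loc}$ on bounded intervals, the Leibniz rule for one-dimensional Sobolev products gives $u\times w\in H^1_{loc}$, hence $u'\in H^1_{loc}$, i.e.\ $u\in H^2_{loc}$. Your proof, by contrast, first derives the extra regularity $u'\in L^\infty_{loc}$ from $|w|=|u'|$, builds the explicit candidate $f=-|u'|^2u-u\times w'$ motivated by the decomposition $a=(a\cdot u)u-u\times(u\times a)$, checks $f\in L^2_{loc}$, and then verifies $u''=f$ distributionally via a test-function computation (which itself invokes the same Leibniz rule for $(u\times w)'$). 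This buys you a closed formula for $u''$ in terms of $w$, $w'$, and $|u'|$, at the price of substantially more bookkeeping; the ingredient $|w|=|u'|$, which you flag as ``the delicate point,'' is in fact not needed in the paper's route, since $u'\in H^1_{loc}$ (and hence $L^\infty_{loc}$) drops out for free once one notices $u'=-u\times w$. Both arguments are sound, but recognizing that the identity $u'=-u\times w$ already expresses $u'$ as a product of two $H^1\cap L^\infty$ functions would have finished the job immediately.
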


\begin{proof}
The first statement can be proved by approximation with smooth functions. Concerning the second one, note that
\begin{equation*}
u\times w= u\times (u\times u^{\prime})=(u,u^{\prime})u-(u,u)u^{\prime}=-u^{\prime},
\end{equation*}
where we have used that $|u|=1$, so that $(u,u^{,\prime})=0$ almost everywhere. On every bounded interval $J$ we have $u,w\in H^1(J,\R^3)\cap L^{\infty}(J,\R^3)$, so that $u\times w\in H^1(J,\R^3)$.
\end{proof}
We define the transition energy function $g:S^2\times S^2\rightarrow [0,+\infty)$ by
\begin{equation}\label{transitionenergy}
g(z_1,z_2):=\inf\left\{\int_{\R}(|w(t)|^2-1)^2\,\mathrm{d}t+\int_{\R}|w^{\prime}(t)|^2\,\mathrm{d}t:\;w\in H^{\times}_{z_1,z_2}\right\}.
\end{equation}
In the following lemma we show that actually the infimum is zero for every $z_1, z_2\in S^{2}$.
\begin{lemma}\label{bvbound}
For all $z_1,z_2\in S^2$ we have
\begin{equation*}
g(z_1,z_2)=0.
\end{equation*}
\end{lemma}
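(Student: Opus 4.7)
The plan is to produce an explicit sequence $w_T\in H^{\times}_{z_1,z_2}$ whose energy tends to zero, realising the mechanism depicted in Figure \ref{Fig.1}: keep $u$ rotating at unit angular speed in a plane while tilting that plane itself very slowly, on a time scale $T$, from the one perpendicular to $z_1$ to the one perpendicular to $z_2$. The key starting observation is that if $u$ rotates at unit speed in the plane orthogonal to a fixed axis $n\in S^{2}$, then $u\times u'\equiv n$; hence drifting the axis slowly should give $w\approx n(t/T)$ at small cost.

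Concretely, I would fix a smooth path $n:\R\to S^{2}$ with $n(s)=z_{1}$ for $s\le 0$, $n(s)=z_{2}$ for $s\ge 1$, and complete it to a smooth orthonormal frame $(a(s),b(s),n(s))$ of $\R^{3}$, also constant outside $[0,1]$ (such a frame always exists along a curve in $S^{2}$). For $T\ge 1$ set
$$u_{T}(t):=\cos(t)\,a(t/T)+\sin(t)\,b(t/T),$$
so that $u_{T}\in C^{\infty}(\R,S^{2})\subset H^{2}_{loc}(\R,S^{2})$. Using $a\times b=n$ and the orthonormality of the frame, a direct expansion yields
$$w_{T}(t):=u_{T}(t)\times u_{T}'(t)=n(t/T)+\tfrac{1}{T}R_{T}(t),$$
with $R_{T}$ uniformly bounded in $T$. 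In particular $w_{T}\equiv z_{1}$ for $t\le 0$ and $w_{T}\equiv z_{2}$ for $t\ge T$, so $w_{T}\in H^{\times}_{z_{1},z_{2}}$, and on $[0,T]$ both $|w_{T}|^{2}-1$ and $w_{T}'(t)=T^{-1}\dot n(t/T)+O(1/T)$ are of order $O(1/T)$. Hence
$$\int_{\R}\bigl(|w_{T}|^{2}-1\bigr)^{2}\,dt+\int_{\R}|w_{T}'|^{2}\,dt\;\le\;\frac{C}{T}\longrightarrow 0,$$
and passing to the infimum in \eqref{transitionenergy} gives $g(z_{1},z_{2})=0$.

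The main conceptual point, and the reason the statement is not a one-line consequence of the connectedness of $S^{2}$, is that one cannot simply take $w_{T}$ to be a slow reparameterisation of an $S^{2}$-path from $z_{1}$ to $z_{2}$: the identity $w=u\times u'$ with $u\in S^{2}$ forces $u\perp w$ and $u\perp w'$ pointwise and therefore couples $w$ to an underlying ODE for $u$, so that an arbitrary slowly-varying $w$ is in general not admissible. The construction above sidesteps this obstruction by defining $u_{T}$ first as a rotating frame with slowly drifting axis, thereby exploiting the feature, unavailable in the $S^{1}$ case, that the manifold of helimagnetic ground-state chiralities forms a continuous family.
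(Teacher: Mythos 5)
Your construction is correct and implements exactly the same mechanism as the paper's: rotate $u$ at unit angular speed in a plane whose normal drifts from $z_1$ to $z_2$ over a long time $T$ (resp.\ $\rho$), so that $w=u\times u'$ stays $O(1/T)$-close to a unit vector and has $O(1/T)$ derivative, giving total energy $O(1/T)$. The paper realises the drifting plane via $u_\rho(t)=\exp(\gamma_\rho(t)B)(\cos t,\sin t,0)$ with $B$ an antisymmetric logarithm of the rotation taking $z_1$ to $z_2$, whereas you use a smooth moving orthonormal frame $(a,b,n)$; these are two ways of writing the same test function.
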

\begin{proof}
The function $g$ is invariant under rotations so we may assume that $z_1=e_3$ and $z_2=\lambda e_2+\mu e_3$ with $\lambda^2+\mu^2=1$. Now we take a $C^2$ cut-off function $\gamma:\R\rightarrow [0,1]$ such that
\begin{equation*}
\gamma(t)=
\begin{cases}
0 &\mbox{$t\leq 0$,}\\
1 &\mbox{$t\geq 1$}.
\end{cases}
\end{equation*}
Given $\rho>1$, we define $\gamma_{\rho}:\R\rightarrow [0,1]$ as $\gamma_{\rho}(t)=\gamma(\frac{t}{\rho})$. We consider the matrix
$$A=\begin{pmatrix}
1 &0 &0 \\
0 &\mu &\lambda \\
0 &-\lambda &\mu
\end{pmatrix}$$
which belongs to $SO(3)$ and maps $z_1$ to $z_2$. Let $B$ be an antisymmetric matrix such that $A=\exp(B)$. We define the test function in the infimum problem defining $g$ by
\begin{equation}\label{testinf}
u_{\rho}(t):=\exp(\gamma_{\rho}(t)B)(\cos(t),\sin(t),0).
\end{equation}
Then $u_{\rho}\in H^2_{loc}(\R,S^2)$ and, since $B$ commutes with $\exp(\gamma_{\rho}(t)B)$,
\begin{equation}\label{derivative}
u_{\rho}^{\prime}(t)=\gamma_{\rho}^{\prime}(t)B\,u_{\rho}(t)+\exp(\gamma_{\rho}(t)B)(-\sin(t),\cos(t),0).
\end{equation}
Since $\gamma_{\rho}^{\prime}(t)=0$ for $t\notin (0,\rho)$ it follows that $w_{\rho}=u_{\rho}\times u_{\rho}^{\prime}$ satisfies
\begin{equation}\label{test}
w_{\rho}(t)=\begin{cases}
e_{3} &\mbox{$t\leq 0$,}\\
\lambda e_{2}+\mu e_{3} &\mbox{$t\geq \rho$}.
\end{cases}
\end{equation}
By Lemma \ref{rightspace}, $w_{\rho}\in H^1_{loc}(\R)$ so that $w_{\rho}\in H^{\times}_{z_{1},z_{2}}$. Moreover from (\ref{derivative}) it follows that there exists a constant $C$ depending only on $|B|$ and on the $C^{2}$-norm of $\gamma$ in $[0,1]$, such that
\begin{equation*}
1-C{\rho}^{-1}\leq |u^{\prime}_{\rho}(t)|\leq 1+C{\rho}^{-1} \quad\mbox {if $t\in (0,\rho)$.}
\end{equation*}
Taking squares in the previous inequality and since $|w_{\rho}(t)|=|u_{\rho}^{\prime}(t)|$ we deduce that 
\begin{equation}\label{energycontributon1}
\left(|w_{\rho}(t)|^2-1\right)^2\leq C\rho^{-2}.
\end{equation}
Since the second derivative of $u_{\rho}$ reads as
\begin{equation*}
u_{\rho}^{\prime\prime}(t)=\gamma_{\rho}^{\prime\prime}(t)B\,u_{\rho}(t)+\gamma_{\rho}^{\prime}(t)B\,u_{\rho}^{\prime}(t)\\
+\exp(\gamma_{\rho}(t)B)\gamma_{\rho}^{\prime}(t)B(-\sin(t),\cos(t),0)-u_{\rho}(t).
\end{equation*}
we infer that
\begin{align}\label{derivativecontr}
|w_{\rho}^{\prime}(t)|^2=|u_{\rho}(t)\times u_{\rho}^{\prime\prime}(t)|^2\leq C\rho^{-2}\quad\mbox {if $t\in (0,\rho)$.}
\end{align}
For $t\notin (0,\rho)$ by \eqref{test}$ w_{\rho}(t)$ does not contribute to (\ref{transitionenergy}). It then follows from (\ref{energycontributon1}) and (\ref{derivativecontr}) that
\begin{equation*}
g(z_1,z_2)\leq C\rho^{-1},
\end{equation*}
which implies $g(z_1,z_2)=0$ by the arbitrariness of $\rho$.
\end{proof} 
We are now going to compute the $\Gamma$-limit of $H_{n}^{sl}$ with respect to the weak$^*$ convergence where we have proved a compactness result (see Proposition \ref{weakcompactness}). First notice that this choice forces us to restrict the domain of the functional to some a priori fixed ball of $L^{\infty}$ where the weak$^{*}$ topology is metrizable. On the other hand, as it will be clear from our $\Gamma$-limsup construction, without the addition of other terms to the functional, there is no hope for compactness in a finer topology. 

The following lemma will be used in the proof of the next theorem as well as in the sequel of the paper.
\begin{lemma}\label{Rodriguez}
Let $u\in C^{1}((a,b),S^{2})$ and $w\in S^2$ be such that $u(s)\times u'(s)=w$ for all $s\in(a,b)$. Then, for all $s_{1},s_{2}\in(a,b)$ it holds
\begin{equation}
u(s_{1})\times u(s_{2})=\sin({s_{2}-s_{1}})\, w.
\end{equation}
\end{lemma}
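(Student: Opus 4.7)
The plan is to exploit the fact that the hypothesis $u \times u' = w$ with $w \in S^2$ constant forces $u$ to trace out a great circle on $S^2$ parameterized by arclength, with $w$ playing the role of the (constant) axis of rotation.

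First I would establish that $\{u(s), u'(s), w\}$ is an orthonormal frame for every $s \in (a,b)$. Indeed, differentiating $|u|^2 \equiv 1$ gives $(u,u')=0$, hence $|u \times u'| = |u||u'| = |u'|$; combined with $|u\times u'|=|w|=1$ this yields $|u'|\equiv 1$. Orthogonality $(u,w) = (u, u\times u')=0$ and $(u', w) = (u', u\times u')=0$ complete the frame.

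Next I would convert the assumption $u \times u' = w$ into the linear ODE $u' = w \times u$. Apply the BAC--CAB identity to $(u \times u') \times u = w \times u$, obtaining
\begin{equation*}
w \times u = (u \times u')\times u = (u,u)\,u' - (u,u')\,u = u',
\end{equation*}
where we used $|u|=1$ and $(u,u')=0$. The unique solution of this linear system with initial value $u(s_0)$ is $u(s) = \exp\bigl((s-s_0)W\bigr)u(s_0)$, where $W$ is the antisymmetric matrix $Wx = w\times x$; geometrically, $\exp(tW)$ is the rotation of $\R^3$ by angle $t$ about the oriented axis $w$.

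From here the conclusion is a direct computation. Both $u(s_1)$ and $u(s_2)$ are unit vectors in the plane $w^\perp$, and $u(s_2)$ is obtained from $u(s_1)$ by a rotation of angle $s_2-s_1$ about $w$. Consequently $u(s_1) \times u(s_2)$ is parallel to $w$, with magnitude $\sin(s_2-s_1)$; the sign is fixed by the infinitesimal check $u(s_1)\times u(s_1+\varepsilon) = \varepsilon\,u(s_1)\times u'(s_1) + o(\varepsilon) = \varepsilon w + o(\varepsilon)$, which matches $\sin(s_2-s_1)\,w$ to first order. Hence $u(s_1) \times u(s_2) = \sin(s_2-s_1)\,w$ for every $s_1,s_2\in(a,b)$.

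I do not foresee a real obstacle; the only point requiring a line of care is the derivation of $u' = w\times u$ via BAC--CAB, after which everything reduces to the elementary geometry of a uniformly rotating orthonormal frame.
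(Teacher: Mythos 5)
Your proof is correct and is essentially the ``direct computation'' that the paper leaves to the reader: you show $|u'|\equiv 1$ and $(u,w)=(u',w)=0$, rewrite the hypothesis as the ODE $u'=w\times u$, solve it as a rotation flow about $w$, and then read off $u(s_1)\times u(s_2)=\sin(s_2-s_1)\,w$ from the elementary geometry of two unit vectors in $w^{\perp}$ separated by angle $s_2-s_1$. The infinitesimal sign check is a nice way to pin down the orientation without setting up explicit coordinates, though fixing $w=e_3$, $u(s_1)=e_1$ would give the signed identity in one line as well.
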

\proof The result follows from a direct computation.\qed

\smallskip

For every $R> 1$ we define the functional $H_{n}^{sl,R}$ as follows:
\begin{equation}\label{HnslR}
H_{n}^{sl,R}(z)=
\begin{cases}
H_{n}^{sl}(z) &\mbox{if $\|z\|_{\infty}\leq R$,}\\
+\infty &\mbox{otherwise.}
\end{cases}.
\end{equation}
The following $\Gamma$-convergence result holds.
\begin{theorem}\label{main1}
Let $R>1$ and $H_n^{sl,R}:L^1(I,\R^3)\rightarrow [0,+\infty]$ be defined as in (\ref{HnslR}). Assume that $\frac{\Ln}{\sqrt{\dn}}\to 0$. Then the functionals $\frac{H_n^{sl,R}}{\sqrt{2}\Ln\dn^{\frac{3}{2}}}$ $\Gamma$-converge with respect to the weak$^{*}$ $L^{\infty}$-convergence to the functional 
\begin{equation*}
H^{sl}(z)=
\begin{cases}
0 &\mbox{if $z\in L^{\infty}(I,B(0,1))$,}\\
+\infty &\mbox{otherwise.}
\end{cases}.
\end{equation*}
\end{theorem}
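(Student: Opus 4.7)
My plan follows the usual liminf/limsup pattern; since the target $H^{sl}$ vanishes on its entire domain, the substantive work is the recovery sequence. For the liminf I would take $z_n \rightharpoonup^{*} z$ in $L^\infty$ with $\liminf_n H_n^{sl,R}(z_n)/(\sqrt 2 \Ln \dn^{3/2}) < +\infty$; along a subsequence the normalized energy is uniformly bounded, so Proposition \ref{weakcompactness} places $z \in L^\infty(I,B(0,1))$ and hence $H^{sl}(z)=0 \le \liminf_n H_n^{sl,R}(z_n)/(\sqrt 2 \Ln \dn^{3/2})$.

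For the recovery sequence I would first reduce the target. Since $B(0,1)=\mathrm{conv}(S^2)$, every $v\in B(0,1)$ can be written as $\lambda v_+ +(1-\lambda)v_-$ with $v_\pm\in S^2$ and $\lambda\in[0,1]$; oscillating on a vanishing scale between the two constants $v_\pm$ in proportion $\lambda:(1-\lambda)$ yields an $S^2$-valued piecewise constant function that weak$^{*}$-approximates the constant $v$. Combined with the standard weak$^{*}$ approximation of $L^\infty(I,B(0,1))$ by piecewise constant functions and with the metrizability of the weak$^{*}$ topology on bounded balls of $L^\infty$, a diagonal argument reduces the limsup to the case $z=\sum_{\ell=1}^N v_\ell \chi_{I_\ell}$ with $v_\ell\in S^2$ and $\{I_\ell\}$ a finite partition of $I$.

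For such a $z$ I would pick $\rho_n\to 0$ with $\Ln/(\rho_n\sqrt{\dn})\to 0$, which is possible under the hypothesis $\Ln/\sqrt{\dn}\to 0$. On the bulk of each $I_\ell$ I would set $u_n^i$ equal to a helical ground state rotating with angular increment $\phi_n=\arccos(1-\dn)$ around the axis $v_\ell$, so that $u_n^i\times u_n^{i+1}=\sin(\phi_n)\,v_\ell$, giving $z_n\to v_\ell$ on this portion of $I$ at zero energy cost. Across each interface between $I_\ell$ and $I_{\ell+1}$ I would insert a transition layer of width $\rho_n$ on which $u_n^i$ is the lattice sampling of a rescaled copy of the continuous test function \eqref{testinf} from Lemma \ref{bvbound}, adapted so as to rotate the axis smoothly from $v_\ell$ to $v_{\ell+1}$ while maintaining angular increment $\phi_n$ per lattice step. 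In rescaled continuous coordinates the transition length is $\rho:=\rho_n\sqrt{2\dn}/\Ln\to+\infty$, so the continuous estimates \eqref{energycontributon1}--\eqref{derivativecontr} fed through \eqref{limsupbound} bound the contribution of each transition to the normalized energy by $C\rho^{-1}=C\Ln/(\rho_n\sqrt{\dn})$; summing over the $N-1$ transitions gives $H_n^{sl}(u_n)/(\sqrt 2 \Ln \dn^{3/2})\to 0$. The transition layers shrink to zero measure, so $z_n\rightharpoonup^{*} z$, and the periodic condition \eqref{prelim:boundary} is handled by placing the endpoints of $I$ inside bulk intervals sharing the angular increment $\phi_n$.

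The hard part will be the quantitative discrete-to-continuous comparison inside the transition layer: I must show that replacing the smooth $u_\rho$ of \eqref{testinf} by its lattice samples introduces only errors that, when substituted into \eqref{limsupbound}, remain of the same order as the continuous bound $C\rho^{-1}$. This will rely on the $C^2$ regularity of the cut-off $\gamma$ (built into \eqref{testinf}), a Taylor expansion of $\exp(\gamma_\rho(t)B)$ at consecutive lattice nodes, and the condition $\Ln/(\rho_n\sqrt{\dn})\to 0$, which ensures that many lattice points fit inside each transition layer so that the continuous regularity of $u_\rho$ can be exploited. Once these estimates are in place, the continuous cost $O(\rho^{-1})$ of Lemma \ref{bvbound} is faithfully reproduced in the discrete setting and the proof is complete.
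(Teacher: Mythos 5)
Your outline is essentially the same as the paper's: the liminf side reduces to Proposition \ref{weakcompactness}, and the limsup side reduces by density to $S^2$-valued piecewise constant targets, then uses the rotating-axis construction of Lemma \ref{bvbound} sampled at the lattice, with the ground-state helices outside the transition handled via Lemma \ref{Rodriguez} so that \eqref{prelim:boundary} holds. The only structural difference is the order of limits: the paper fixes a continuous profile $w$ with energy below $\varepsilon$ (so $u$, $u'$, $u''$ are $n$-independent), proves that the discrete quantities converge to the continuous ones by dominated convergence to get $\limsup \le \varepsilon$ via \eqref{limsupbound}, \eqref{limsup1}, \eqref{limsup2}, and only then sends $\varepsilon\to0$; you instead couple the transition width $\rho_n$ to $n$ directly, which is a valid diagonalization but puts the burden on uniform-in-$n$ estimates for the discrete second differences of the $n$-dependent profile $u_{\rho_n}$ — precisely the "hard part" you flag but do not carry out. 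The paper's $\varepsilon$-first ordering avoids exactly that uniformity issue, so while your plan is correct in spirit, the quantitative discrete-to-continuum comparison inside the shrinking layer needs to be completed (e.g., bounding second-difference errors by a modulus-of-continuity estimate that is uniform in $\rho_n$) before the argument is complete.
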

\begin{proof}
\noindent\textbf{liminf-inequality.} Since $H_{n}^{sl,R}\geq 0$ it only suffices to check that any weak$^{*}$ limit of sequences $z_{n}$ such that $\sup_{n}\frac{H_n^{sl,R}(z_{n})}{\sqrt{2}\Ln\dn^{\frac{3}{2}}}\leq C<+\infty$ belongs to $L^{\infty}(I,S^2)$. This is ensured by Proposition \ref{weakcompactness}. 

\textbf{limsup-inequality:} By density it suffices to prove the inequality for a $S^{2}$-valued piecewise constant function $z$. Since the construction of the recovery sequence will be local we can assume that $z=z_1\mathds{1}_{[0,\frac{1}{2})}+z_2\mathds{1}_{(\frac{1}{2},1]}$ with $|z_{1}|=|z_{2}|=1$. Given $\e>0$ we find a function $u\in H^2_{loc}(\R,S^2)$ such that $w=u\times u^{\prime}$ is admissible in the infimum problem defining $g(z_1,z_2)$ in (\ref{transitionenergy}) and
\begin{equation}\label{infappr}
\int_{\R}(|w(t)|^2-1)^2\dt+\int_{\R}|w^{\prime}(t)|^2\dt\leq \e.
\end{equation}
Having in mind the family constructed in the proof of Lemma \ref{bvbound} we can further assume that $u\in C^2(\R,S^2)$, that it has bounded and uniformly continuous first and second derivative, that it satisfies the bound $\|u'\|_{\infty}\leq 1+\e$ and that there exists $t_{\e}>0$ such that
\begin{align}
w(t)&=z_1 \quad\forall t\leq 0,\label{at-inf}\\
w(t)&=z_2 \quad\forall t\geq t_{\e},\label{at+inf}.
\end{align}
We consider the sequence $\alpha_{n}=\frac{\arccos(1-\dn)}{\sqrt{2\dn}}$ and we observe that $1-\dn=\cos(\alpha_{n}\sqrt{2\dn})$ and that $\lim_{n}\alpha_{n}=1$. We now define the function $u_n\in C_n(I,S^2)$ setting
\begin{equation*}
u_n^i=u\left(\alpha_{n}\frac{\sqrt{2\dn}}{\Ln}(\Ln i-\frac{1}{2})\right).
\end{equation*}
By the uniform continuity of $u$ we have that, for large enough $n$, for $j\in\{1,2\}$ $|u^{i+j}_{n}-u_{n}^{i}|\to 0$ uniformly with respect to $i\in R_{n}(I)$. In particular this implies that 
\begin{equation}\label{scalarpos}
(u_{n}^{i+j},u_{n}^{i})>0,\ {\rm for }\,j\in\{1,2\}.
\end{equation}
We now fix $i_{-}=\left[\frac{1}{2\Ln}\right]$ and $i_{+}=\left[\frac{1}{2\Ln}+\frac{t_{\e}}{\alpha_{n}\sqrt{2\dn}}\right]+1$ and observe that $i_{\pm}\in R_{n}(I)$ for $n$ large enough. Applying Lemma \ref{Rodriguez} to the function $u$ in the interval $(-\infty,\alpha_{n}\frac{\sqrt{2\dn}}{\Ln}(\Ln i_{-}-\frac{1}{2}))$ we get that for all $i<i_{-}-1$ it holds that 
\begin{equation*}
u_{n}^{i}\times u_{n}^{i+1}=\sin\left(\alpha_{n}{\sqrt{2\dn}}\right) z_{1}, \quad u_{n}^{i}\times u_{n}^{i+2}=\sin\left(2\alpha_{n}{\sqrt{2\dn}}\right) z_{1}.
\end{equation*}
Using \eqref{scalarpos} we get
\begin{equation}\label{ground-1}
(u_{n}^{i},u_{n}^{i+1})=\cos\left(\alpha_{n}\sqrt{2\dn}\right)=1-\dn, \quad (u_{n}^{i}, u_{n}^{i+2})=\cos\left(2\alpha_{n}{\sqrt{2\dn}}\right)=2(1-\dn^{2})-1
\end{equation}
which further implies 
\begin{equation}\label{ground-2}
u_{n}^{i}-2u_{n}^{i+1}+u_{n}^{i+2}=2\dn u_{n}^{i+1}.
\end{equation}
Using the same argument in the interval $(\alpha_{n}\frac{\sqrt{2\dn}}{\Ln}(\Ln i_{+}-\frac{1}{2}),+\infty)$ with $z_{2}$ in place of $z_{1}$ it can be shown that \eqref{ground-1} and \eqref{ground-2} hold for all $i\geq i_{+}$. In particular the first equality in \eqref{ground-1} implies that $u_{n}$ satisfies the boundary conditions in \eqref{prelim:boundary}. 

We now consider the sequence $z_n:=T_{n}(u_n)$. It holds that for all $i\in R_{n}(I)$
\begin{equation}\label{boundzn}
|z_n^i|=|u_n^i\times\frac{u_n^{i+1}-u_n^i}{\sqrt{2\dn}}|\leq |\frac{u_n^{i+1}-u_n^i}{\sqrt{2\dn}}|\leq \alpha_{n}\|u^{\prime}\|_{\infty}\leq \alpha_{n}(1+\e).
\end{equation}
Furthermore, since the first derivative of $u$ is uniformly continuous on $\R$, $z_n$ converges pointwise almost everywhere to $z$ and on applying dominated convergence also in $L^1$ as well as in the weak-$^{*}$ convergence of $L^{\infty}$ by \eqref{boundzn}. Thus $z_n$ is an admissible recovery sequence. 
We now define the auxiliary functions $\tilde{z}_n:\R\rightarrow\R^3$ by
\begin{equation*}
\tilde{z}_n(s)=
\begin{cases}
\frac{u_n^{i+1}-u_n^i}{\sqrt{2\dn}} &\mbox{if $s\in [\frac{\alpha_{n}\sqrt{2\dn}}{\Ln}(\Ln i-\frac{1}{2}),\frac{\alpha_{n}\sqrt{2\dn}}{\Ln}(\Ln (i+1)-\frac{1}{2})),\;i\in{0,\dots,[\frac{1}{\Ln}]-2}$,}
\\
u'(s) &\mbox{otherwise.}
\end{cases}
\end{equation*}
By the change of variables $s-\frac{1}{2}=\frac{\Ln}{\alpha_{n}\sqrt{2\dn}}t$ we have
\begin{align}\label{limsup0}
\frac{\sqrt{2\dn}}{\Ln}&\sum_{i=0}^{\left[1/\Ln\right]-2}\Ln \left(\left|\frac{u_n^{i+1}-u_n^i}{\sqrt{2\dn}}\right|^2-1\right)^2\leq \frac{\sqrt{2\dn}}{\Ln}\int_0^1\left(\left|\tilde{z}_n\left(\frac{\alpha_{n}\sqrt{2\dn}}{\Ln}(s-\frac{1}{2})\right)\right|^2-1\right)^2\,\mathrm{d}s\nonumber\\
&=\frac{1}{\alpha_{n}}\int_{\frac{-\alpha_{n}\sqrt{\dn}}{\sqrt{2}\Ln}}^{\frac{\alpha_{n}\sqrt{\dn}}{\sqrt{2}\Ln}}\left(|\tilde{z}_n(t)|^2-1\right)^2\dt\leq\frac{1}{\alpha_{n}}\int_{\R}\left(|\tilde{z}_n(t)|^2-1\right)^2\dt.
\end{align}
Since $(-\infty,-\frac12)\subset(-\infty,\alpha_{n}\frac{\sqrt{2\dn}}{\Ln}(\Ln i_{-}-\frac{1}{2}))$ and $(t_{\e}+1,+\infty)\subset(\alpha_{n}\frac{\sqrt{2\dn}}{\Ln}(\Ln i_{+}-\frac{1}{2}))$, using \eqref{at-inf}, \eqref{at+inf} and \eqref{ground-1} one has that 
\begin{equation}\label{domconv}
|\tilde{z}_{n}(s)|=1\quad \forall\, s\in(-\infty,-\frac12)\cup(t_{\e}+1,+\infty).
\end{equation}
Since $u^{\prime}$ is uniformly continuous and $\alpha_{n}\to 1$ it is easy to see that $\tilde{z}_n$ converges uniformly to $u^{\prime}$ on $[-\frac12,t_{\e}+1]$. By \eqref{domconv} we can apply dominated convergence in the r.h.s of \eqref{limsup0}. Since $|u'(s)|=|w(s)|$ for all $s\in\R$, we deduce 
\begin{equation}\label{limsup1}
\limsup_{n\to\infty}\frac{\sqrt{2\dn}}{\Ln}\sum_{i=0}^{\left[1/\Ln\right]-2}\Ln \left(\left|\frac{u_n^{i+1}-u_n^i}{\sqrt{2\dn}}\right|^2-1\right)^2\leq\int_{\R}\left(|w(t)|^2-1\right)^2\dt.
\end{equation}

We now define the auxiliary function $\overline{z}_n:\R\rightarrow\R^3$ as
\begin{equation*}
\overline{z}_n(s)=
\begin{cases}
\frac{z_n^{i+1}-z_n^i}{\sqrt{2\dn}} &\mbox{if $s\in [\frac{\alpha_{n}\sqrt{2\dn}}{\Ln}(\Ln i-\frac{1}{2}),\frac{\alpha_{n}\sqrt{2\dn}}{\Ln}(\Ln (i+1)-\frac{1}{2})),\;i\in{0,\dots,[\frac{1}{\Ln}]-2}$,}
\\
w^{\prime}(s) &\mbox{otherwise.}
\end{cases}
\end{equation*}
Using again the same change of variables as above we get
\begin{align}
\frac{\Ln}{\sqrt{2\dn}}&\sum_{i=0}^{\left[1/\Ln\right]-2}\Ln\left|\frac{z_n^{i+1}-z_n^i}{\Ln}\right|^2\leq \frac{\sqrt{2\dn}}{\Ln}\sum_{i=0}^{\left[1/\Ln\right]-2}\Ln\left|\frac{z_n^{i+1}-z_n^i}{\sqrt{2\dn}}\right|^2\nonumber\\
&=\frac{\sqrt{2\dn}}{\Ln}\int_0^1\left|\overline{z}_n\left(\frac{\alpha_{n}\sqrt{2\dn}}{\Ln}(s-\frac{1}{2})\right)\right|^2\,\mathrm{d}s=\frac{1}{\alpha_{n}}\int_{\frac{-\alpha_{n}\sqrt{\dn}}{\sqrt{2}\Ln}}^{\frac{\alpha_{n}\sqrt{\dn}}{\sqrt{2}\Ln}}|\overline{z}_n(t)|^2\dt.\label{limitder}
\end{align}
We first claim that $\overline{z}_n$ converges to $w^{\prime}$ in $L^2_{loc}(\R)$. To see this let us set $h_n=\alpha_{n}\sqrt{2\dn}$ and note that
\begin{align*}
\overline{z}_n(s)&=\frac{z_n^{i+1}-z_n^i}{\sqrt{2\dn}}=\alpha_{n}^{2}\,u_n^{i+1}\times \frac{(u_n^{i+2}-2u_n^{i+1}+u_n^i)}{h_n^2}
\\
&=\alpha_{n}^{2}\, u(\xi_n^i)\times\frac{u(\xi_n^i+h_n)-2u(\xi_n^i)+u(\xi_n^i-h_n)}{h_n^2},
\end{align*}
for some $|\xi_n^i-s|\leq h_n$. By the continuity of $u^{\prime\prime}$, since $h_n\to 0$ and $\alpha_{n}\to 1$, we have $\overline{z}_n(s)\to u(s)\times u^{\prime\prime}(s)=w^{\prime}(s)$ which proves the pointwise convergence of $\overline z_{n}$ to $w'$. Since $u''$ is uniformly bounded on $\R$ it follows that $\overline{z}_n$ is equibounded which gives the $L^{2}_{loc}$ convergence. On the other hand, thanks to \eqref{ground-2}, we have that $\overline z_{n}(s)=0$ for all $s\in(-\infty,-\frac12)\cup(t_{\e}+1,+\infty)$. Therefore we can let $n\to +\infty$ in (\ref{limitder}) and deduce
\begin{equation}\label{limsup2}
\limsup_{n\to\infty}\frac{\Ln}{\sqrt{2\dn}}\sum_{i=0}^{\left[1/\Ln\right]-2}\Ln\left|\frac{z_n^{i+1}-z_n^i}{\Ln}\right|^2\leq\int_{\R}|w^{\prime}(t)|^2\dt.
\end{equation}
Combining \eqref{limsupbound}, (\ref{infappr}), (\ref{limsup1}) and (\ref{limsup2}), by the arbitrainess of $\e$ we infer that
\begin{equation}
\Gamma-\limsup_{n\to\infty}\frac{H^{sl}_n(z)}{\sqrt{2}\Ln\dn^{\frac{3}{2}}}\leq 0.
\end{equation}
\end{proof}

\begin{remark}\label{noncompact}
Assume that $\frac{\Ln}{\sqrt{\dn}}\to 0$. Then there exists a sequence of functions $z_n\in C_n(I,\R^3)$ such that
\begin{equation*}
H^{sl}_n(z_n)\leq C\Ln\dn^{\frac{3}{2}}
\end{equation*}
such that no subsequence converges strongly in $L^1(I,\R^3)$. In fact, let us fix $\eta_{n}=c_{n}\Ln$ where $c_{n}\in\N$ is such that $\frac{\Ln}{\sqrt{2\dn}}<<\eta_{n}<<1$. Let us consider $\overline u_{n}\in H^{2}_{loc}(\R,S^{2})$ such that $w_{n}=u_{n}\times u_{n}'$ with $w_{n}$ satisfying the properties \eqref{at-inf} and \eqref{at+inf} with $z_{2}=-z_{1}$ and such that \eqref{infappr} holds with $\eta_{n}^{2}$ in place of $\e$. For all $i\in\{0,\dots,\frac{2\eta_{n}}{\Ln}\}$ we set 
\begin{equation*}
u_n^i=\overline u_{n}\left(\alpha_{n}\frac{\sqrt{2\dn}}{\Ln}(\Ln i-\eta_{n})\right)
\end{equation*}
and we define $u_{n}\in C_{n}(I,S^{2})$ as the $2\eta_{n}$-periodic extension of the function above. Setting $z_{n}=T_{n}(u_{n})$ by construction we have that $z_{n}\to 0$ in the weak$^{*}$ topology of $L^{\infty}$. By repeating the same argument in the proof of the $\Gamma$-limsup inequality, the energy stored in each interval of length $\eta_{n}$ is at most $\eta_{n}^{2}$, so that
$$
\frac{H_{n}^{sl}(z_{n})}{\sqrt{2}\Ln\dn^{\frac32}}\leq\frac{\eta^{2}_{n}}{\eta_{n}}\to 0.
$$
The sequence constructed in this way cannot converge strongly to $z=0$, otherwise by Proposition \ref{weakcompactness} we would get $z\in L^1(I,S^2)$.
\end{remark}

\subsection{$S^{2}$-chirality transitions under additional constraints}\label{constrained}
As discussed in the previous section, it is not possible to energetically detect chirality transitions by using as energy $H_{n}$, that is the scaled NN and NNN frustrated spin chain model as in \cite{DmiKri}. Nevertheless, transitions with non trivial energy may appear if we modify the functional $H_{n}$ by adding what we call either a {\it hard} or a {\it soft} penalization term. In the {\it hard} case we will force the spin variable to take values only in a subset of $S^{2}$ consisting of finitely many copies of $S^1$, while in the {\it soft} case we will penalize the distance of the spin field from such a set. The main difference between the two cases is that, while in the first case we will prove that chirality transitions leads to a constant positive limit energy to be paid for each discontinuity in the chirality, in the second one we can present some examples showing dependence of the limit energy on the two chiral states between which the transition occurs.

\subsubsection{Chirality transitions via {\it hard} penalization}\label{hard}

Let $q_1,\dots,q_k$ be a fixed family of distinct points in $S^2$, where $k\geq 1$. For $l\in\{1,2,\dots,k\}$ we set $S^1_l=S^{2}\cap q_{l}^{\perp}$. To reduce notation we set
\begin{align}\label{Hpen}
&Q_k:=\{\pm q_1,\dots,\pm q_k\},\quad
M_k:=\bigcup_{l=1}^k S^1_l,\quad
L_k:=\bigcup_{l=1}^k\text{span}(q_l).
\end{align}
We then restrict the spin variable $u$ to take values only in $M_k$. We define the space $C_{n}(I,M_{k})$ as the subset of $C_{n}(I,S^{2})$ of those functions taking values in $M_{k}$. We define the energy \hbox{$H^{sl,k}_n:L^1(I)\rightarrow [0,+\infty]$} as
\begin{equation}\label{newfunctional}
H^{sl,k}_n(z)=
\begin{cases}
\inf_{T(u)=z}H^{sl}_n(u) &\mbox{if $z=T_{n}(u)$ for some $u\in C_n(I,M_{k})$,}\\
+\infty &\mbox{otherwise.}
\end{cases}
\end{equation}
Moreover we set
\begin{equation*}
H^{\times}_{q_{-},q_{+}}(M_k):=\left\{w=u\times u^{\prime},\; u\in H^2_{loc}(\R,M_k):\;\lim_{t\to \pm\infty}w(t)=q_{\pm}\right\}
\end{equation*}
and define the function $h_{k}:Q_{k}\times Q_{k}\rightarrow\R$ by
\begin{equation}\label{newtransitionenergy}
h_{k}(q_{-},q_{+}):=\inf\left\{\int_{\R}(|w(t)|^2-1)^2\dt+\int_{\R}|w^{\prime}(t)|^2\dt:\;w\in H^{\times}_{q_{-},q_{+}}(M_k)\right\}.
\end{equation}
In this setting the function $h_{k}$ turns out to be independent of the $k$ as well as of $(q_{-},q_{+})\in Q_{k}\times Q_{k}$ and reduces to the well-known transition energy for scalar problems as shown in the next lemma.
\begin{lemma}\label{transitionvalue}
Let $q_{-},q_{+}\in Q_k,\;q_{-}\neq q_{+}$. Then $h_{k}(q_{-},q_{+})=\frac{8}{3}$ and we have equivalently
\begin{equation*}
h_k(q_{-},q_{+})=\inf\left\{\int_{\R}(|w(t)|^2-1)^2+|w^{\prime}(t)|^2\dt:\;w\in H^1_{loc}(\R,L_k),\,\lim_{t\to\pm\infty}w(t)=q_{\pm}\right\},
\end{equation*} 
which is solved by the function $w_{q_{-},q_{+}}$ defined as
\begin{equation*}
w_{q_{-},q_{+}}(t)=
\begin{cases}
|\tanh(t)|q_{-} &\mbox{if $t\leq 0$,}\\
|\tanh(t)|q_{+} &\mbox{if $t>0$.}
\end{cases}
\end{equation*}
\end{lemma}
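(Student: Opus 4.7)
My strategy is to establish the chain of inequalities
\[
\tfrac{8}{3} \leq \inf\bigl\{F(w) : w \in H^1_{loc}(\R, L_k),\ w(\pm\infty) = q_{\pm}\bigr\} \leq h_k(q_-, q_+) \leq \tfrac{8}{3},
\]
with $F(w) := \int_\R (|w|^2-1)^2 + |w'|^2 \dt$, and then to check that the candidate $w_{q_-,q_+}$ saturates the middle infimum. Both assertions of the lemma then follow at once.

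The first step is a structural reduction of the vector problem to a scalar one on $L_k$. If $u \in H^2_{loc}(\R, M_k)$ lies in $S^1_l$ on some open interval, then $(u, q_l) = 0$ there, and absolute continuity of $u$ propagates this to $(u', q_l) = 0$, whence $u \times u' \in \text{span}(q_l) \subset L_k$. Combined with continuity of $w = u \times u'$ (Lemma \ref{rightspace}) and closedness of $L_k$, this forces $w(t) \in L_k$ for every $t$, giving $h_k(q_-, q_+) \geq \inf_{H^1_{loc}(\R, L_k)} F$. For the scalar lower bound I would observe that $q_- \neq q_+$ together with continuity of $w$ and the fact that distinct lines $\text{span}(q_l)$, $\text{span}(q_{l'})$ meet only at the origin force a point $t_0$ with $w(t_0) = 0$. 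Setting $g := |w| \in H^1_{loc}$, I have $g(t_0) = 0$, $g(\pm\infty) = 1$, and $|g'| \leq |w'|$ a.e. Young's inequality and the coarea formula (or, equivalently, IVT applied separately on $(-\infty, t_0]$ and on $[t_0,+\infty)$, showing that $g^{-1}(\{s\})$ has at least two elements for every $s \in (0,1)$) then yield
\[
F(w) \geq \int_\R (g^2 - 1)^2 + (g')^2 \dt \geq 2\int_\R |1-g^2|\,|g'|\dt \geq 4\int_0^1(1-s^2)\,\mathrm{d}s = \tfrac{8}{3}.
\]

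For the upper bound, a direct computation using $1 - \tanh^2 = \mathrm{sech}^2$ gives $F(w_{q_-,q_+}) = 2\int_\R \mathrm{sech}^4(t)\dt = 8/3$. To lift $w_{q_-,q_+}$ to an admissible $u$, I would write $q_{\pm} = \epsilon_{\pm} q_{l_\pm}$ with $\epsilon_\pm \in \{\pm 1\}$, pick orthonormal frames $(e_1^{(l)}, e_2^{(l)}, q_l)$ with $e_1^{(l)} \times e_2^{(l)} = q_l$ for each relevant $l$, and set $u(t) = \cos\theta(t)\,e_1^{(l)} + \sin\theta(t)\,e_2^{(l)}$ on $(-\infty,0]$ (in the frame of $S^1_{l_-}$) and on $[0,+\infty)$ (in the frame of $S^1_{l_+}$) with $\theta'(t) = -\epsilon_- \tanh(t)$ and $\theta'(t) = \epsilon_+\tanh(t)$, respectively. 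If $l_- = l_+$ a single choice of frame is enough; if $l_- \neq l_+$, the two pieces are glued at $t = 0$ at a common point of $S^1_{l_-} \cap S^1_{l_+} = \{\pm v\}$, where $v$ is the unit vector perpendicular to both $q_{l_-}$ and $q_{l_+}$ (which exists and is antipodal by linear independence of $q_{l_-}, q_{l_+}$). A direct calculation then yields $u \times u' = w_{q_-,q_+}$ and $F(u\times u') = 8/3$.

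The only step requiring care is verifying $u \in H^2_{loc}$ at the kink $t = 0$ in the case $l_- \neq l_+$. The decisive observation is that $\theta'(0) = -\tanh(0) = 0$ on both sides, hence $u'(0^\pm) = 0$, making $u$ of class $C^1$ across $t = 0$; the second derivative has at most a bounded jump there, so $u'' \in L^\infty_{loc} \subset L^2_{loc}$, and $u \in H^2_{loc}$ as needed. Combining the four inequalities closes the displayed chain, and $w_{q_-,q_+}$ realises the scalar infimum by construction.
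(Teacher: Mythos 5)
Your proof is correct and follows essentially the same route as the paper: reduce the infimum over $H^{\times}_{q_-,q_+}(M_k)$ to the scalar optimal-profile problem on $L_k$, bound it below by the Modica--Mortola value $8/3$, and lift $w_{q_-,q_+}$ to an admissible $u$ to close the chain. Two minor stylistic deviations worth noting: you work with the unsigned scalar $g=|w|$ and extract the factor $2$ from the multiplicity in the area formula, whereas the paper first performs a ``cut-out'' modification of $w$ and then passes to a signed scalar $v$ with $v(\pm\infty)=\pm1$ to invoke the classical Modica--Mortola bound directly; and you explicitly justify the inclusion $H^{\times}_{q_-,q_+}(M_k)\subset\{w\in H^1_{loc}(\R,L_k):\,w(\pm\infty)=q_\pm\}$ (via the piecewise $S^1_l$ structure of $u$ and continuity of $w$), a step the paper leaves implicit but which is genuinely needed to obtain $h_k\geq 8/3$.
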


\begin{proof}
We first show that $w_{q_{-},q_{+}}$ is the solution of the minimum problem if we replace the cross product constraint by requiring $w\in H^1(\R,L_k)$. To this end (taking a continuous representative) note that we don't increase the energy if we stay in the half line $g_{q_{+}}:=\{\lambda\,q_{+}:\;\lambda\geq 0\}$ as soon as we reach the origin for the first time coming from $q_{-}$. Indeed, if $t_0=\inf\{t\in\R:\;w(t)=0\}$ and $t_1=\sup\{t\in\R:\;w(t)=0\}$, then the function
\begin{equation}\label{cutout}
\tilde{w}(t)=
\begin{cases}
w(t) &\mbox{if $t<t_0$,}\\
w(t-t_0+t_1) &\mbox{if $t\geq t_0$}
\end{cases}
\end{equation} 
gives the same or less energy as $w$. Now given such a function $w$ we define $v\in H^1(\R)$ setting
\begin{equation*}
v(t)=
\begin{cases}
-|w(t)| &\mbox{if $w(t)\in g_{q_{-}}$,}\\
|w(t)| &\mbox{otherwise.}
\end{cases}
\end{equation*}
Then we have $\lim_{t\to\pm\infty}v(t)=\pm 1$ and therefore by the usual Modica-Mortola's trick (see for example \cite{modica})
\begin{align*}
\int_{\R}&(|w(t)|^2-1)^2+|w^{\prime}(t)|^2\dt=\int_{\R}(v(t)^2-1)^2+v^{\prime}(t)^2\dt
\\
&\geq \int_{\R}(\tanh(t)^2-1)^2+\tanh^{\prime}(t)^2\dt=\int_{\R}(|w_{q_{-},q_{+}}(t)|^2-1)^2+|w_{q_{-},q_{+}}^{\prime}(t)|^2\dt=\frac{8}{3}
\end{align*}

It therefore only remains to show that $w_{q_{-},q_{+}}\in H^{\times}_{q_{-},q_{+}}(M_k)$. Therefore we choose rotations $R_{q_{-}}$ and $R_{q_{+}}$ such that $R_{q_{-}}e_3=-q_{-}$ and $R_{q_{+}}e_3=q_{+}$ and let $\gamma(t)=\log(\cosh(t))$ be a primitive of $\tanh(t)$. We set
\begin{equation*}
u(t)=
\begin{cases}
R_{q_{-}}(\cos(\gamma(t)+t_0),\sin(\gamma(t)+t_0),0) &\mbox{if $t\leq 0$,}\\
R_{q_{+}}(\cos(\gamma(t)+t_1),\sin(\gamma(t)+t_1),0) &\mbox{if $t>0$,}
\end{cases}
\end{equation*}
where $t_0,t_1$ are chosen such that $R_{q_{-}}(\cos(t_0),\sin(t_0),0)=R_{q_{+}}(\cos(t_1),\sin(t_{1}),0)$ therefore $u$ is continuous at $t=0$. Observing that $\gamma'(0)=0$ we also have that $u'(0)$ exists and is equal to $0$. Then $u\in H^2_{loc}(\R,M_k)$, while a direct computation gives $u\times u^{\prime}=w_{q_{-},q_{+}}$.
\end{proof}

The following compactness result holds true.
\begin{proposition}\label{sbvcompact}
Assume that $\frac{\Ln}{\sqrt{\dn}}\to 0$ and let $z_n=T_{n}(u_{n})$ for some $u_{n}\in C_n(I,M_{k})$ be such that
\begin{equation*}
H^{sl,k}_n(z_n)\leq C\Ln\dn^{\frac{3}{2}}.
\end{equation*}
Then (up to subsequences) $z_n$ converges strongly in $L^1$ to a function $z\in BV(I,Q_k)$.
\end{proposition}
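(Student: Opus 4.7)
I will show that for $n$ large the function $z_n$ is essentially piecewise constant on $I$ with values in $Q_k$ and with a uniformly bounded number of jumps, which automatically yields strong $L^1(I)$-compactness and membership of the limit in $BV(I,Q_k)$. The ingredients are the a priori bounds from Proposition \ref{weakcompactness}, a Modica--Mortola-type inequality derived from \eqref{liminfbound}, and the geometric rigidity encoded in the constraint $u_n\in C_n(I,M_k)$. First, by Proposition \ref{weakcompactness}, $\sup_n\|z_n\|_\infty\le R$ and $|z_n|\to 1$ a.e., so $|z_n|\to 1$ in $L^1(I)$ by dominated convergence. From \eqref{liminfbound} one immediately deduces $\sum_{i\in R_n(I)}|z_n^{i+1}-z_n^i|^2\le C\sqrt{\dn}$, and applying AM--GM to the two terms of \eqref{liminfbound}, combined with \eqref{dist-graduz} and $\|z_n\|_\infty\le R$, yields
\[
\sum_{i\in R_n(I)}\bigl||z_n^i|^2-1\bigr|\,|z_n^{i+1}-z_n^i|\le C.
\]
Setting $\Psi(r):=\int_0^r|s^2-1|\,ds$, the latter estimate controls the total variation of the piecewise constant interpolation of $\Psi(|z_n|)$ on $I$; in particular the number of maximal sub-intervals on which $|z_n|$ drops from a neighbourhood of $1$ down into $[0,\tfrac12]$ and back is uniformly bounded in $n$.

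\textbf{Direction changes occur only through dips.} Call an edge $[i,i+1]$ \emph{good} if $u_n^i,u_n^{i+1}$ lie in the same circle $S^1_l$; then $z_n^i\in\text{span}(q_l)$. On the union of good edges with $|z_n|\ge\tfrac12$, the vector $z_n$ takes values in the finite collection of line segments $\{\alpha q_l:\tfrac12\le|\alpha|\le R,\ 1\le l\le k\}$. Since $Q_k\subset S^2$ is a finite set of distinct points (no $q_l$ is antipodal to another, the $S^1_l$ being distinct), the pairwise distance between two such segments of distinct label $(l,\operatorname{sign}\alpha)$ is bounded below by a universal constant $c_0>0$ depending only on $\{q_l\}$. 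By the triangle inequality, any transition of $z_n$ between two such segments which occurs without $|z_n|$ first dropping below $\tfrac12$ produces a jump $|z_n^{i+1}-z_n^i|\ge c_0/2$ at some edge; comparing with the Dirichlet bound above, the number of such ``naked'' direction changes is $O(\sqrt{\dn})$ and therefore vanishes for $n$ large.

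\textbf{Conclusion and main obstacle.} Combining the two previous steps, for $n$ large $I$ decomposes into a uniformly bounded family of plateau intervals on each of which $z_n$ remains in a single segment $\{\alpha q_l:|\alpha|\ge\tfrac12\}$ with a fixed label, plus a complement of Lebesgue measure $o(1)$ (the bounded-cardinality family of dips, together with the $O(\sqrt{\dn})$ isolated bad edges). Extracting a further subsequence, the plateau endpoints converge to a finite partition $0=t_0<\dots<t_N=1$ and, using $|z_n|\to 1$ a.e., the plateau values converge to elements $q_{l(j)}\in Q_k$; dominated convergence then gives $z_n\to z:=\sum_{j=1}^{N} q_{l(j)}\chi_{[t_{j-1},t_j]}$ in $L^1(I)$, with $z\in BV(I,Q_k)$ by construction. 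The delicate point of the argument is the direction-change step: the quantitative lower bound $c_0>0$ crucially uses the constraint $u_n\in M_k$, which confines $z_n$ on good edges to the union of lines $L_k$, and the magnitude information $|z_n|\to 1$ from Proposition \ref{weakcompactness}, so that on plateau regions $z_n$ is genuinely separated from zero and hence from the other ``orbits'' of $Q_k$.
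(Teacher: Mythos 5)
Your preliminary steps are sound: the $L^\infty$ bound and a.e.\ convergence $|z_n|\to 1$ from Proposition \ref{weakcompactness}, the Dirichlet bound $\sum_i|z_n^{i+1}-z_n^i|^2\le C\sqrt{\dn}$, and the Modica--Mortola inequality $\sum_i\bigl||z_n^i|^2-1\bigr|\,|z_n^{i+1}-z_n^i|\le C$ are all correctly derived from \eqref{liminfbound} and \eqref{dist-graduz}, and the bound on the number of modulus dips follows.

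\textbf{The gap.} The step ``direction changes occur only through dips'' does not hold as argued. You reason as if a transition between two segments of $L_k$ that does not pass through a dip must produce a single large jump $|z_n^{i+1}-z_n^i|\ge c_0/2$. This is only true if the transition occurs across a single bad edge, i.e.\ an edge whose endpoints belong to distinct circles $S^1_l,S^1_m$. But the constraint $u_n\in C_n(I,M_k)$ does not prevent \emph{long consecutive runs of bad edges}: the spins $u_n^i,u_n^{i+1},\ldots$ could alternate between two (or more) circles for many steps. On such a run none of the $z_n^i$ lies in $L_k$ (each consecutive pair of spins is in different circles), while each individual increment $|z_n^{i+1}-z_n^i|$ is as small as $O(\dn^{1/4})$ by \eqref{continuity}, and the modulus $|z_n^i|$ can remain close to $1$ throughout. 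Your Dirichlet bound then gives no contradiction: with $N$ bad edges each of size $\approx c_0/N$, the total squared increment is $\approx c_0^2/N$, which is $\le C\sqrt{\dn}$ as soon as $N\gtrsim 1/\sqrt{\dn}$ --- and since $\Ln/\sqrt{\dn}\to 0$, the number of lattice points grows faster than $1/\sqrt{\dn}$, so such long bad runs are not a priori excluded. Your Modica--Mortola bound is likewise inert, because $|z_n^i|\approx 1$ during the excursion. In short, nothing in your argument prevents $z_n$ from drifting from one segment of $L_k$ to another across a long sequence of bad edges at modulus $\approx 1$, and your ``naked'' transitions are not confined to a single edge.

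\textbf{What the paper does instead.} This is precisely the point where the paper invests its main effort. It first shows (by contradiction, combining \eqref{continuity}, Proposition \ref{compactness} and the pigeonhole structure forced by $k+1$ consecutive bad edges) that for $n$ large $z_n^i\notin A_\eta$ for \emph{all} $i$, i.e.\ $z_n$ stays uniformly close to $L_k$. The core of that argument is a blow-up: the rescaled spins $a_n^j=(u_n^{i+j}-\bar u)/\sqrt{2\dn}$, along a run of $k+1$ bad edges, converge to distinct collinear points $a^j$, each lying on one of the $k$ lines $V_l=\bar u^\perp\cap q_l^\perp$; a line cannot meet a union of $k$ distinct lines in $k+1$ distinct points, giving the contradiction. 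Only after establishing that $z_n$ must track $L_k$ does the paper run the Modica--Mortola counting argument to bound the number of transitions. That intermediate rigidity lemma is what your proof is missing, and without it the ``direction change'' step fails.
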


\begin{proof}
By Proposition \ref{weakcompactness} we have that $\sup_n \|z_n\|_{\infty}<+\infty$. Therefore it is enough to show that, up to subsequences, $z_n$ converges in measure to a function $z\in BV(I,Q_k)$. Given $\eta>0$ we define the set
\begin{equation}\label{badset}
A_{\eta}:=\{x\in\rthree:\;\text{dist}(x,L_k)\geq\eta\}.
\end{equation}
We now claim that, for $n$ large enough, we have $z_n^i\notin A_{\eta}$ for all $i\in R_{n}(I)$. Assume by contradiction that the claim does not hold. Passing to a subsequence we have that for each $n$ there exists $i=i(n)$ such that $z_n^{i+1}\in A_{\eta}$. From (\ref{continuity}) we infer that for $n$ large enough 
\begin{equation}\label{assumpt}
z_n^{i+1},\dots,z_n^{i+k}\in A_{\frac{\eta}{2}}.
\end{equation}

As a result we have that for all $j=1,\dots,k$, if $u_n^{i+j}\in S^{1}_{l}$ for some $l\in\{1,2,\dots,k\}$ then $u_n^{i+j+1}\in S^{1}_{m}$ for some $m\neq l$. Moreover, up subsequences we may suppose that, for all $j=1,\dots,k+1$ there exists $l_{j}\in \{1,2,\dots,k\}$ such that 
\begin{equation}\label{11:15}
u_{n}^{i+j}\in S^{1}_{l_{j}},
\end{equation}
where, by the previous discussion we have that $l_{j}\neq l_{j+1}$.
Let $\overline u$ be a limit point for $u_{n}^{i+1}$. Since by Proposition \ref{compactness} $|u_n^{i+j'}-u_n^{i+j}|\to 0$ uniformly in $j,j'\in\{1,\dots,k+1\}$, we have that that for all fixed $j\in\{1,\dots,k+1\}$, $u_{n}^{i+j}\to \overline u$ with the property that $\overline u\in \bigcap_{j=1}^{k+1} S^{1}_{l_{j}}$. Since $|z_n^{i+j}|\geq\frac{\eta}{2}$, by the definition of $z_{n}^{i+j}$ and \eqref{continuity}, for all $j=1,\dots,k+1$ there exists a constant $C=C_{\eta}>0$ such that
\begin{equation}\label{pointbounds}
\frac{1}{C}\dn\leq |u_n^{i+j+1}-u_n^{i+j}|^2\leq C\delta_n\quad\forall j=1\dots,k.
\end{equation}
Thanks to the second inequality above we have 
\begin{equation}\label{angletozero}
\frac{(z_n^{i+j+1},z_n^{i+j})}{|z_n^{i+j+1}||z_n^{i+j}|}\to 1, \quad\forall j=1,\dots,k-1.
\end{equation}
We now claim that 
\begin{equation}\label{boundtransitionpoint}
\frac{1}{C}\delta_n\leq|u_n^{i+j}-\overline u|^2\leq C\delta_n\quad\forall j=1\dots,k+1.
\end{equation}
Indeed, suppose by contradiction that $\frac{|u_n^{i+j}-\overline u|^2}{\dn}\to 0$. Then for $j\in\{1,2,\dots,k\}$ 
\begin{equation*}
z_n^{i+j}=\frac{u_n^{i+j}\times u_n^{i+j+1}}{\sqrt{2\dn}}=\underbrace{\frac{(u_n^{i+j}-\overline u)\times u_n^{i+j+1}}{\sqrt{2\dn}}}_{\to 0}+\underbrace{\frac{\overline u\times u_n^{i+j+1}}{\sqrt{2\dn}}}_{\in L_k}\notin A_{\eta/2},
\end{equation*}
for $n$ large enough, so that the first inequality holds. The case $j=k+1$ is proved with the same argument, exchanging the role of $u_{n}^{i+j}$ and $u_{n}^{i+j+1}$. The second inequality in \eqref{boundtransitionpoint} can be proven as follows. By (\ref{pointbounds}) we have
\begin{align*}
C\geq& \frac{|u_n^{i+j+1}-u_n^{i+j}|^2}{\dn}=\frac{|u_n^{i+j+1}-\overline u|^2}{\dn}+\frac{|u_n^{i+j}-\overline u|^2}{\dn}-\frac{2}{\dn}\frac{(u_n^{i+j+1}-\overline u,u_n^{i+j}-\overline u)}{|u_n^{i+j+1}-\overline u||u_n^{i+j}-\overline u|}|u_n^{i+j+1}-\overline u||u_n^{i+j}-\overline u|
\\
\geq& \left(1-\frac{(u_n^{i+j+1}-\overline u,u_n^{i+j}-\overline u)}{|u_n^{i+j+1}-\overline u||u_n^{i+j}-\overline u|}\right)\left(\frac{|u_n^{i+j+1}-\overline u|^2}{\dn}+\frac{|u_n^{i+j}-\overline u|^2}{\dn}\right)\geq c \frac{|u_n^{i+j}-\overline u|^2}{\dn},
\end{align*} 
where we used that $S^{1}_{l_{j}}\neq S^{1}_{l_{j+1}}$.

By \eqref{pointbounds} and (\ref{boundtransitionpoint}), up to extracting a further subsequence, we have that the sequences $(a_n^j)_{n\in\mathbb{N}}$ definded as
\begin{equation*}
a_n^j:=\frac{u_n^{i+j}-\overline u}{\sqrt{2\dn}}
\end{equation*}
converge to different points $a^j$ for all $j=1,\dots,k+1$. We observe that for all $j=1,\dots,k+1$ we have that $a^{j}$ belongs to the $1$-dimensional subspace $V_{l_{j}}:=\overline u^{\perp}\cap q_{l_{j}}^\perp$ with $l_{j}$ given by \eqref{11:15}. Indeed 
$$
|(a^{j},\overline u)|=\lim_{n}\left|\frac{(u_{n}^{i+j},\overline u)-1}{\sqrt{2\dn}}\right|=\lim_n\frac12\frac{\left|u_{n}^{i+j}-\overline u\right|^2}{\sqrt{2\dn}}=0
$$
by \eqref{boundtransitionpoint}. On the other hand $(a^{j},q_{l_{j}})=0$ simply follows by \eqref{11:15} since $q_{l_{j}}\perp S^{1}_{l_{j}}$.
We now show that all $a^j$ are collinear. Indeed, by definition \eqref{mapT} we have that 
\begin{align*}
|z^{i+j}_{n}|=\frac{\sqrt{1-(u_{n}^{i+j+1},u_{n}^{i+j})^{2}}}{\sqrt{2\dn}}.
\end{align*}
On the other hand, again by \eqref{mapT} and the well-known formula $(a\times b,c\times d)=(a,c)(b,d)-(b,c)(a,d)$ we have that
\begin{align*}
(u_n^{i+j+2}-u_n^{i+j+1},u_n^{i+j+1}-u_n^{i+j})-2\dn(z_n^{i+j+1},z_n^{i+j})\\=-(1-(u_{n}^{i+j+2},u_{n}^{i+j+1}))(1-(u_{n}^{i+j+1},u_{n}^{i+j}))
\end{align*}
From the previous two equalities, together with Proposition \ref{compactness} and \eqref{angletozero} we then get that for all $j\in\{1,2,\dots,k-1\}$
\begin{align*}
1\geq&\frac{(a^{j+2}-a^{j+1},a^{j+1}-a^j)}{|a^{j+2}-a^{j+1}||a^{j+1}-a^j|}=\lim_{n\to\infty}\frac{(u_n^{i+j+2}-u_n^{i+j+1},u_n^{i+j+1}-u_n^{i+j})}{\sqrt{(2-2(u_n^{i+j+2},u_n^{i+j+1}))(2-2(u_n^{i+j+1},u_n^{ji+}))}}
\\
=&\lim_{n\to\infty}\frac{2\dn(z_n^{i+j+1},z_n^{i+j})+(u_n^{i+j+2}-u_n^{i+j+1},u_n^{i+j+1}-u_n^{i+j})-2\dn(z_n^{i+j+1},z_n^{i+j})}{\sqrt{(2-2(u_n^{i+j+2},u_n^{i+j+1}))(2-2(u_n^{i+j+1},u_n^{i+j}))}}
\\
\geq&\lim_{n\to\infty}\frac{(z_n^{i+j+1},z_n^{i+j})}{|z_n^{i+j+1}||z_n^{i+j}|}\frac{1}{2}\sqrt{(1+(u_n^{i+j+2},u_n^{i+j+1}))(1+(u_n^{i+j+1},u_n^{i+j}))}
\\
&-\lim_{n\to\infty}\frac{1}{2}\sqrt{((1-(u_n^{i+j+2},u_n^{i+j+1}))(1-(u_n^{i+j+1},u_n^{i+j}))}=1,
\end{align*}
which is equivalent to say that all $a^{j}$ with $j\in\{1,2,\dots,k+1\}$ are collinear. This gives a contradiction, as the line containing the $a^{j}$'s should then intersect in $k+1$ distinct points the set $\bigcup_{j=1}^{k}V_{l_{j}}$, which instead consists of at most $k$ $1$-dimensional linear subspaces. This proves our claim that for $n$ large enough $z^{i}_{n}\not\in A_{\eta}$ for all $i\in R_{n}(I)$ which implies that
${\rm dist}(z_{n},L_{k})\to 0\quad \text{uniformly}$. Since as in the proof of Proposition \ref{weakcompactness} $|z_{n}|\to 1$ almost everywhere in $I$ we deduce that 
\begin{equation}\label{measconv}
{\rm dist}(z_{n},Q_{k})\to 0\quad \text{in measure}.
\end{equation}

Now we argue similar to the proof of Lemma 6.2 in \cite{GCB}. At first we chose $r>0$ such that the family of balls $\{B_{3r}(z)\}_{z\in Q_k}$ is pairwise disjoint. We set
\begin{equation*}
d:=\inf_{\substack{q_1,q_2\in Q_k\\q_1\neq q_2}}\text{dist}(B_{2r}(q_1),B_{2r}(q_2))>0.
\end{equation*}
Suppose $z_n$ takes values in different balls $B_r(q_1)$ and $B_r(q_2)$. Then, by \eqref{continuity} there exists a path $z_n^i,\dots,z_n^{i+j}$ such that $r<|z_n^i-q_1|<2r$ and $r<|z_n^{i+j}-q_2|<2r$ and such that 
\begin{equation*}
z_n^l\notin\bigcup_{q\in Q_k}B_{2r}(q)\quad\forall i<l<i+j.
\end{equation*} 
Defining $A_{\eta}$ as in (\ref{badset}), from the first part of the proof we know that $z_n\notin A_{\eta}$ for $n$ large enough. Choosing a suitable $\eta=\eta(r)$ we deduce that, for $n$ large enough,
 \begin{equation}\label{uandz}
\inf_{l=i,\dots,i+j}\text{dist}(z_n^l,S^2)\geq r,\quad |z_n^i-z_n^{i+j}|\geq d.
\end{equation}
Now we use the classical Modica-Mortola trick to estimate the energy of such a path. By the uniform energy bound $H_{n}^{sl,k}(z_{n})\leq C\Ln\dn^{3/2}$ we get  $H_{n}^{sl}(u_{n})\leq C\Ln\dn^{3/2}$. 
Since $z_n$ is uniformly bounded by Proposition \ref{weakcompactness} and $(u_n^{i+1},u_n^i)$ converges uniformly to $1$ by Proposition \ref{compactness} we may then write, for $n$ large enough, the following estimate
\begin{equation}\label{uzclose}
|1-(u_n^{i+1},u_n^i)|\leq 3(1-(u_n^{i+1},u_n^i)^2)\leq 6\dn|z_n^i|^2\leq C\dn.
\end{equation}
As a result we have
\begin{equation*}
\left|\left|\frac{u_n^{i+1}-u_n^i}{\sqrt{2\dn}}\right|^2-|z_n^i|^2\right|= \frac{(1-(u_n^{i+1},u_n^i))^2}{2\dn}\leq C\dn.
\end{equation*} 
so that using \eqref{uandz} it holds that
\begin{align}\label{mmtrick}
d\leq& \sum_{l=i}^{i+j-1}|z_n^{l+1}-z_n^l|\leq \frac{2}{r}\sum_{l=i}^{i+j-1}\Ln \left|\left|\frac{u_n^{i+1}-u_n^i}{\sqrt{2\dn}}\right|^2-1\right|\left|\frac{z_n^{l+1}-z_n^l}{\Ln}\right|\nonumber
\\
\leq &\frac{2}{r}\frac{\sqrt{\dn}}{\Ln}\sum_{l=i}^{i+j-1}\Ln\left(\left|\frac{u_n^{i+1}-u_n^i}{\sqrt{2\dn}}\right|^2-1\right)^2+\frac{2}{r}\frac{\Ln}{\sqrt{\dn}}\sum_{l=i}^{i+j-1}\Ln\left|\frac{z_n^{l+1}-z_n^l}{\Ln}\right|^2,
\end{align}
from which we deduce that such a transition costs a finite amount of positive energy, depending only on $r$. Thus we have only finitely many of these transitions, their number being bounded uniformly with respect to $n$. It follows that, up to subsequences, $z_n$ converges in measure to a piecewise constant function with values in $Q_k$. We omit the details.

\end{proof}

After establishing compactness for sequences with equi-bounded energy, we are in a position to prove the following $\Gamma$-convergence result. In the proof of the lower bound we will make use of the area formula for absolutely continuous function, which we briefly recall: for every positive Borel function $h$, every absolutely continuous function $\zeta: [a,b]\to \R$ it holds
\begin{equation}\label{area}
\int_{\zeta([a,b])} \left(\sum_{s\in\zeta^{-1}(v)}h(s)\right)\,\mathrm{d}v=\int_a^b h(s)|\zeta'(s)|\,\mathrm{d}s
\end{equation}
(see \cite[Theorem 3.65]{leoni-sobolev}).

\begin{theorem}\label{main3}
Let $H_n^{sl,k}:L^1(I)\rightarrow [0,+\infty]$ be defined as in (\ref{newfunctional}). Assume that $\frac{\Ln}{\sqrt{\dn}}\to 0$. Then the functionals $\frac{H_n^{sl,k}}{\sqrt{2}\Ln\dn^{\frac{3}{2}}}$ $\Gamma$-converge with respect to the strong $L^1$-topology to the functional 
\begin{equation*}
H^{sl,k}(z)=
\begin{cases}
\frac{8}{3}\#S(z) &\mbox{if $z\in BV(I,Q_k)$,}\\
+\infty &\mbox{otherwise.}
\end{cases}.
\end{equation*}
\end{theorem}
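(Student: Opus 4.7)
By Proposition \ref{sbvcompact}, it suffices to consider sequences $z_n\to z$ in $L^1$ with $z\in BV(I,Q_k)$; write $S(z)=\{x_1,\dots,x_N\}$ and $q_j^\pm:=z(x_j^\pm)\in Q_k$. The liminf will be obtained by localizing around each jump $x_j$, rescaling on the natural microscopic scale $\sqrt{2\dn}/\Ln$, and reducing to the transition energy problem \eqref{newtransitionenergy}, whose value $\tfrac{8}{3}$ is given by Lemma \ref{transitionvalue}. The limsup will follow by gluing the explicit optimal transition profile of that lemma with flat helimagnetic ground states satisfying \eqref{ground-1}--\eqref{ground-2}, along the lines of the limsup of Theorem \ref{main1}.

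\textbf{Lower bound.} Fix $u_n\in C_n(I,M_k)$ with $T_n(u_n)=z_n$ realizing (up to $o(\Ln\dn^{3/2})$) the infimum in \eqref{newfunctional}. For $\sigma>0$ small enough that the intervals $I_j:=(x_j-\sigma,x_j+\sigma)$ are pairwise disjoint, by additivity it is enough to prove
\[
\liminf_n\frac{1}{\sqrt{2}\Ln\dn^{3/2}}\sum_{\Ln i\in I_j}\Ln\bigl|u_n^{i+2}-2(1-\dn)u_n^{i+1}+u_n^i\bigr|^2\;\geq\;\tfrac{8}{3}.
\]
Performing the change of variable $t=\tfrac{\sqrt{2\dn}}{\Ln}(s-x_j)$ sends $I_j$ to the window $J_{j,n}:=(-\sigma\sqrt{2\dn}/\Ln,\sigma\sqrt{2\dn}/\Ln)$ invading $\R$. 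Let $\tilde{z}_n^{(j)}$ denote the piecewise affine interpolation of $(z_n^i)$ read in the variable $t$. The restriction of \eqref{liminfbound} together with Remark \ref{affineclose} bounds the local energy from below by
\[
\int_{J_{j,n}}\bigl(|\tilde{z}_n^{(j)}|^2-1\bigr)^2\dt+(1-\gamma_n)\int_{J_{j,n}}\bigl|(\tilde{z}_n^{(j)})'\bigr|^2\dt,
\]
and the estimate \eqref{continuity} gives $H^1_{loc}(\R)$-equiboundedness of $\tilde{z}_n^{(j)}$. Extract a subsequence $\tilde{z}_n^{(j)}\rightharpoonup w$ weakly in $H^1_{loc}$; the uniform estimate $\mathrm{dist}(z_n^i,L_k)\to 0$ implicit in the proof of Proposition \ref{sbvcompact} forces $w\in H^1_{loc}(\R,L_k)$, while the $L^1$ convergence $z_n\to z$ combined with \eqref{continuity} identifies the traces $w(\pm\infty)=q_j^\pm$ (after an exhaustion with $\sigma\to 0$ after $n\to\infty$). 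Applying the equivalent formulation of Lemma \ref{transitionvalue} and lower semicontinuity yields the $\tfrac{8}{3}$ bound; equivalently one may use the discrete Modica--Mortola inequality $(|z|^2-1)^2+|z'|^2\geq 2\,||z|^2-1|\,|z'|$ applied to $\tilde z_n^{(j)}$ and invoke the area formula \eqref{area} along each line of $L_k$.

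\textbf{Upper bound.} By locality of the construction and a diagonal argument, reduce to $z=q_-\mathds{1}_{[0,1/2)}+q_+\mathds{1}_{(1/2,1]}$ with distinct $q_\pm\in Q_k$. Fix $\e>0$ and, using the construction from the proof of Lemma \ref{transitionvalue}, choose $u_\e\in H^2_{loc}(\R,M_k)$ such that $w_\e:=u_\e\times u_\e'$ coincides with $q_\pm$ outside a compact interval $(-t_\e,t_\e)$ and $\int_\R (|w_\e|^2-1)^2+|w_\e'|^2\dt\leq\tfrac{8}{3}+\e$. Modify $u_\e$ outside $[-t_\e,t_\e]$ so that it rotates exactly by $\alpha_n\sqrt{2\dn}$ per microscopic step around $q_\pm$ (with $\alpha_n:=\arccos(1-\dn)/\sqrt{2\dn}\to 1$), enforcing the ground state identities \eqref{ground-1}--\eqref{ground-2} and hence the boundary condition \eqref{prelim:boundary}; then set $u_n^i:=u_\e(\alpha_n\tfrac{\sqrt{2\dn}}{\Ln}(\Ln i-\tfrac{1}{2}))$. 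One checks that $u_n\in C_n(I,M_k)$ and that $T_n(u_n)\to z$ in $L^1$ by dominated convergence. Applying \eqref{limsupbound} and the change of variables of the limsup in Theorem \ref{main1} verbatim (the only change being that the test profile is $M_k$-valued) gives $\limsup_n H_n^{sl,k}(T_n(u_n))/(\sqrt{2}\Ln\dn^{3/2})\leq\tfrac{8}{3}+\e$, and $\e\to 0$ concludes.

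\textbf{Main obstacle.} The delicate point is the lower bound, and specifically the verification that the rescaled microscopic profile $\tilde{z}_n^{(j)}$ converges to a competitor for the infimum defining $h_k$ that (a) actually lies in $L_k$ and (b) connects the prescribed traces $q_j^\pm$ at $\pm\infty$. Point (a) requires extracting a quantitative version of the uniform estimate on $\mathrm{dist}(z_n^i,L_k)$ embedded in the proof of Proposition \ref{sbvcompact}, while point (b) needs a careful exhaustion propagating the $L^1$-traces through a rescaled window of diverging length $\sigma\sqrt{2\dn}/\Ln$, using the discrete Lipschitz-like bound \eqref{continuity} to prevent $\tilde{z}_n^{(j)}$ from drifting away from $q_j^\pm$ near the boundary of that window.
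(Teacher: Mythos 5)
Your upper bound is essentially the paper's construction (take the optimal $M_k$-valued profile from Lemma~\ref{transitionvalue}, flatten it to a helix of unit angular speed outside a compact window, sample on the microscopic scale), so no issue there. The problem is the \emph{lower bound}, where your primary strategy has a genuine gap and your ``equivalent'' alternative is what the paper actually needs but is stated too vaguely.

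Extracting a weak $H^1_{loc}$ limit $w$ of the rescaled interpolation $\tilde z_n^{(j)}$ and then ``identifying the traces $w(\pm\infty)=q_j^\pm$ after an exhaustion with $\sigma\to 0$ after $n\to\infty$'' does not work as written: the $L^1$ convergence $z_n\to z$ only controls \emph{averages} of $z_n$ over macroscopic intervals, which in the rescaled variable translates into an average over a window of length $\sim\sigma\sqrt{\dn}/\Ln\to\infty$. On any \emph{fixed} rescaled window $(-T,T)$ you have no control; the transition may happen on a much slower scale, or split across several widely separated sub-intervals, and then the blow-up limit $w$ can perfectly well be a constant (even $|w|\equiv 1$), giving a lower bound of $0$. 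This is the standard ``partial transition escaping to infinity'' failure of a naïve blow-up. A double limit $\lim_{\sigma\to 0}\lim_{n\to\infty}$ does not rescue it.

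The paper avoids extracting a microscopic limit entirely. It works directly with the finite-$n$ scalar modulus $\zeta_n(s):=|z_n^a(s)|$ and exploits a geometric fact you do not use: since $L_k$ is a union of \emph{lines through the origin}, and $z_n^a$ stays uniformly close to $L_k$ while its endpoints at $t_m\pm\alpha$ converge to two \emph{distinct} directions $q_-,q_+$, continuity forces $\zeta_n$ to dip from near $1$ to below any fixed $\e$ and back up inside $(t_m-\alpha,t_m+\alpha)$. Applying Cauchy--Schwarz to turn the energy into $2\int|\beta_n^2\zeta_n^2-1|\,|\zeta_n'|$ and then the area formula~\eqref{area} on each of the two monotone stretches of $\zeta_n$ yields $\liminf\geq 4\int_0^1(1-v^2)\,\mathrm{d}v=\frac83$, with no compactness of a profile required. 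Your parenthetical ``or invoke the area formula along each line of $L_k$'' hints at this, but omits the crucial step of reducing to the scalar modulus and the observation that $|z_n^a|$ must pass through the origin; the area formula ``along a line of $L_k$'' does not by itself give the constant $\frac83$, because the near-zero excursion is precisely where the profile hops between lines and the $L^1(L_k)$-valued picture is unreliable. To repair the proof you should replace the blow-up argument with the scalar-modulus Modica--Mortola/area-formula argument and supply the dip-through-zero observation.

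Two minor remarks: (1) $H^1_{loc}$ equiboundedness of $\tilde z_n^{(j)}$ comes from the energy bound \eqref{liminfbound} plus the $L^\infty$ bound of Proposition~\ref{weakcompactness}; \eqref{continuity} alone gives a Lipschitz constant that blows up like $\dn^{-1/4}$ after rescaling. (2) In the limsup, the phrase ``modify $u_\e$ so that it rotates exactly by $\alpha_n\sqrt{2\dn}$ per step'' conflates the continuum profile with its discrete samples; what you want is that $u_\e$ rotates at constant unit speed around $q_\pm$ outside a compact set, so that the samples $u_n^i$ satisfy~\eqref{ground-1}--\eqref{ground-2} there.
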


\begin{proof}
We start with the lower bound. Without loss of generality we may consider $z_{n}=T_{n}(u_{n})$ for some $u_n\in C_n(I,M_{k})$ such that $z_n\to z$ in $L^1(I,\R^3)$ and 
\begin{equation*}
\liminf_n\frac{H_n^{sl}(z_n)}{\sqrt{2}\Ln\dn^{\frac{3}{2}}}\leq C<+\infty.
\end{equation*}
From Proposition \ref{sbvcompact} we know that $z\in BV(I,Q_k)$. Furthermore, if we denote by  $z_n^a$ denote the piecewise affine interpolation of $z_n$ on the lattice $\Ln\mathbb{Z}\cap I$, we also have that $z^a_n\to z$ in $L^1(I,\R^3)$. Passing to a subsequence (not relabeled) we can assume that $z^a_n$ converges to $z$ almost everywhere.
Furthermore, for all $\eta >0$, defining $A_{\eta}$ as in (\ref{badset}), for $n$ large enough, we have $z^i_n\notin A_{\eta}$ for all $i \in R_n(I)$ and this in turn implies that
\begin{equation}\label{uniform}
\mbox{dist}(z^a_n, L_k) \to 0
\end{equation}
uniformly. Let now $t_1<\dots<t_l$ be the jump set of $z$. Let $\alpha>0$ be such that $[-2\alpha+t_m,t_m+2\alpha]\cap[-2\alpha+t_j,t_j+2\alpha]=\emptyset$ for  all $j\neq m$. By the choice of $\alpha$ it holds that
\begin{equation*}
\liminf_{n}\frac{H_n^{sl,k}(z_n)}{\sqrt{2}\Ln\dn^{\frac{3}{2}}}\geq\sum_{m=1}^l\liminf_n F^m_n(z_n),
\end{equation*} 
where 
\begin{equation*}
F^m_n(z_n)=\frac{\sqrt{2\dn}}{\Ln}\sum_{|\Ln i-t_m|<2\alpha}\Ln \left(\left|\frac{u_n^{i+1}-u_n^i}{\sqrt{2\dn}}\right|^2-1\right)^2
+\frac{\Ln}{\sqrt{2\dn}}\sum_{|\Ln i-t_m|<2\alpha}\Ln\left|\frac{z_n^{i+1}-z_n^i}{\Ln}\right|^2.
\end{equation*}
We now fix $t_m$ and to reduce notation we set $q_{\pm}:=z(t_m\pm\alpha)$. Our goal is to show that
\begin{equation*}
\liminf_n F_n^m(z_n)\geq\frac{8}{3}\,,
\end{equation*}
which yields the lower bound.

To prove our claim, we begin by observing that, due to almost everywhere convergence, we can assume that
\begin{equation}\label{qconvergence}
z^a_n(t_m\pm\alpha)\to q_{\pm}
\end{equation}
when $n\to +\infty$. 
Furthermore, since  
\begin{equation}\label{uniformclose}
\left|\frac{u_n^{i+1}-u_n^i}{\sqrt{2\dn}}\right|^2=\frac{2}{1+(u_n^{i+1},u_n^i)}|z_n^i|^2
\end{equation}
we can write
\begin{equation*}
\frac{\sqrt{2\dn}}{\Ln}\sum_{|\Ln i-t_m|<2\alpha}\Ln \left(\left|\frac{u_n^{i+1}-u_n^i}{\sqrt{2\dn}}\right|^2-1\right)^2=\frac{\sqrt{2\dn}}{\Ln}\sum_{|\Ln i-t_m|<2\alpha}\Ln \left(|\beta_n^i z_n^i|^2-1\right)^2
\end{equation*}
where we have denoted by $\beta_n\in C_n(I,\R)$ the sequence of piecewise constant functions such that $\beta_{n}^{i}=\frac{2}{1+(u_n^{i+1},u_n^i)}$ which converges uniformly to $1$. We now show that we can switch from the piecewise constant interpolation to the affine one without increasing the energy. Indeed, given $\sigma>0$, we have
\begin{align*}
\int_{-\alpha+t_m}^{\alpha+t_m}&\left(|\beta_n(s)z_n^a(s)|^2-1\right)^2\,\mathrm{d}s\leq (1+\sigma)\int_{-\alpha+t_m}^{\alpha+t_m}\left(|\beta_n(s)z_n(s)|^2-1\right)^2\,\mathrm{d}s
\\
&\hspace*{4cm}+\left(1+\frac{1}{\sigma}\right)\int_{-\alpha+t_m}^{\alpha+t_m}|\beta_n(s)|^4\left(|z_n^a(s)|^2-|z_n(s)|^2\right)^2\,\mathrm{d}s
\\
\leq&(1+\sigma)\sum_{|\Ln i-t_m|<2\alpha}\Ln \left(|\beta_n^i z_n^i|^2-1\right)^2+C\left(1+\frac{1}{\sigma}\right)\int_{-\alpha+t_m}^{\alpha+t_m}|z_n^a(s)-z_n(s)|^2\,\mathrm{d}s
\\
\leq&(1+\sigma)\sum_{|\Ln i-t_m|<2\alpha}\Ln \left(|\beta_n^i z_n^i|^2-1\right)^2+C\left(1+\frac{1}{\sigma}\right)\sum_{|\Ln i-t_m|<2\alpha}\Ln|z_n^{i+1}-z_n^i|^2
\\
\leq&(1+\sigma)\sum_{|\Ln i-t_m|<2\alpha}\Ln \left(|\beta_n^i z_n^i|^2-1\right)^2+C\left(1+\frac{1}{\sigma}\right)\sqrt{\dn}\Ln,
\end{align*}
where we have used the energy bound \eqref{liminfbound}  and the fact that both $\beta_n$ and $z_n$ are equibounded sequences. Multiplying the last inequality by $\frac{\sqrt{2\dn}}{\Ln}$ we obtain
\begin{equation*}
\liminf_n\frac{\sqrt{2\dn}}{\Ln}\int_{-\alpha+t_m}^{\alpha+t_m}\!\!\!\left(|\beta_n(s)z_n^a(s)|^2-1\right)^2\mathrm{d}s\leq (1+\sigma)\liminf_n\frac{\sqrt{2\dn}}{\Ln}\!\!\!\!\sum_{|\Ln i-t_m|<2\alpha}\!\!\!\!\Ln \left(|\beta_n^i z_n^i|^2-1\right)^2. 
\end{equation*}
By the arbitrariness of $\sigma$ we deduce that 
\begin{align}
\liminf_{n} F^m_n(z_n)\geq&\liminf_n\frac{\sqrt{2\dn}}{\Ln}\int_{-\alpha+t_m}^{\alpha+t_m}\left(|\beta_n(s)z_n^a(s)|^2-1\right)^2\,\mathrm{d}s\nonumber
\\
&+\liminf_n\frac{\Ln}{\sqrt{2\dn}}\int_{-\alpha+t_m}^{\alpha+t_m}|(z_n^a)^{\prime}(s)|^2\,\mathrm{d}s\label{tocontinuum}.
\end{align}
We now fix an arbitrary $\varepsilon >0$: due to \eqref{qconvergence}, when $n$ is sufficiently large we have
\begin{equation}\label{moduloquasi1}
\left|z^a_n(t_m\pm\alpha)\right|\ge \frac{1}{1+\varepsilon}\,. 
\end{equation}
Furthermore, using \eqref{uniform}, the continuity of $z_n^a$ and \eqref{qconvergence}, for all $n$ sufficiently large there exists a point $\tau_n \in~(t_m-\alpha, t_m+\alpha)$ such that
\begin{equation}\label{moduloquasi0}
\left|z^a_n(\tau_n)\right|\le \varepsilon\,. 
\end{equation}

We define the absolutely continuous function $\zeta_n$ by  $\zeta_n(s):=|z^a_n(s)|$. Applying the Cauchy-Schwarz inequality to the right-hand side of \eqref{tocontinuum}, and taking into account that $|\zeta_n^{\prime}|\le |(z_n^a)^{\prime}|$ we have
\begin{align}
\liminf_nF^m_n(z_n)\geq& \liminf_n2\int_{-\alpha+t_m}^{\alpha+t_m}\left||\beta_n(s)z_n^a(s)|^2-1\right|\,\left|(z_n^a)^{\prime}(s)\right|\,\mathrm{d}s\nonumber
\\
\geq& \liminf_n2\int_{-\alpha+t_m}^{\alpha+t_m}\left|\beta_n(s)^2\,\zeta_n(s)^2-1\right|\,\left|\zeta_n^{\prime}(s)\right|\,\mathrm{d}s\label{mod-mor}\\
\geq&\liminf_n2\int_{-\alpha+t_m}^{\tau_n}\left|\beta_n(s)^2\,\zeta_n(s)^2-1\right|\,\left|\zeta_n^{\prime}(s)\right|\,\mathrm{d}s
\\&+\liminf_n2\int_{\tau_n}^{\alpha+t_m}\left|\beta_n(s)^2\,\zeta_n(s)^2-1\right|\,\left|\zeta_n^{\prime}(s)\right|\,\mathrm{d}s\,.\nonumber
\end{align}
Using formula \eqref{area} with $h(s)=\left|\beta_n(s)^2\,\zeta_n(s)^2-1\right|$ and $\zeta=\zeta_n$ and observing that, by \eqref{moduloquasi1} and \eqref{moduloquasi0}, $[\varepsilon, \frac{1}{1+\varepsilon}]\subseteq \zeta_n([-\alpha+t_m, \tau_n])$, we have
$$
\int_{-\alpha+t_m}^{\tau_n}\left|\beta_n(s)^2\,\zeta_n(s)^2-1\right|\,\left|\zeta_n^{\prime}(s)\right|\,\mathrm{d}s\geq\int_{\varepsilon}^{\frac{1}{1+\varepsilon}} \left(\sum_{s\in\zeta_n^{-1}(v)}|\beta_n(s)^2\,v^2-1|\right)\,\mathrm{d}v\,.
$$
Since $\beta_n \to 1$ uniformly, when $n$ is large enough we have that $\beta_n(s)\le 1+\e$ for all $s$. Using the elementary inequality
$$
|\theta ^2 v^2-1|=1-\theta ^2 v^2\geq 1-(1+\e)^2 v^2
$$
for all $\theta \in [0, 1+\e]$ and $v \in [\varepsilon, \frac{1}{1+\varepsilon}]$, we deduce that
$$
\int_{-\alpha+t_m}^{\tau_n}\left|\beta_n(s)^2\,\zeta_n(s)^2-1\right|\,\left|\zeta_n^{\prime}(s)\right|\,\mathrm{d}s\geq\int_{\varepsilon}^{\frac{1}{1+\varepsilon}}(1-(1+\e)^2 v^2)\,\mathrm{d}v\,.
$$
The same estimate holds also for the other summand in the right-hand side of \eqref{mod-mor}. Therefore we conclude
$$
\liminf_n F^m_n(z_n)\geq 4\int_{\varepsilon}^{\frac{1}{1+\varepsilon}}(1-(1+\e)^2 v^2)\,\mathrm{d}v\,.
$$
Since $\e$ was arbitrary, we conclude that
$$
\liminf_n F^m_n(z_n)\geq 4\int_{0}^{1}(1- v^2)\,\mathrm{d}v=\frac83\,,
$$
which gives the required lower bound.

\medskip

The upper bound follows as in the proof of Theorem \ref{main1}. We only indicate here the major changes. Since the argument is local, let us assume that $z=q_1\mathds{1}_{[0,\frac{1}{2})}+q_2\mathds{1}_{(\frac{1}{2},1]}$ for some $q_{1},q_{2}\in Q_{k}$. Given $\e>0$ we set
\begin{equation*}
w_{\e}(t)=
\begin{cases}
|f_{\e}(t)|q_1 &\mbox{if $t\leq 0$,}\\
|f_{\e}(t)|q_2 &\mbox{if $t>0$,}
\end{cases}
\end{equation*}
where $f_{\e}$ is defined by the construction below. Let $t_{\e}>0$ be such that $|\tanh(\pm t_{\e})-(\pm 1)|\leq \e$ and
\begin{equation*}
\int_{|t|\leq t_{\e}}\left(|\tanh(t)|^2-1\right)^2+|\tanh^{\prime}(t)|^2\dt\geq \frac{8}{3}-\e.
\end{equation*}
We then define $f_{\e}:\R\to\R$ as an odd $C^{1}$ function such that 
\begin{equation}\label{optvelocity}
f_{\e}(t):=\begin{cases}
\tanh(t)&\mbox{if $t\in[0,t_{\e}]$,}\\
p_{\e}(t)&\mbox{if $t\in(t_{\e},t_{\e}+\e)$,}\\
1&\mbox{if $t\in(t_{\e}+\e,+\infty)$,}
\end{cases}
\end{equation}
where $p_{\e}$ is a suitable third order interpolating polynomial that we may choose such that $\|p_{\e}'\|_{\infty}\leq 2$. Note that $w_{\e}\in H^{\times}_{q_1,q_2}(M_k)$ and by construction
\begin{equation*}
\int_{\R}\left(|w_{\e}(t)|^2-1\right)^2+|w_{\e}^{\prime}(t)|^2\dt\leq \frac{8}{3}+C\e
\end{equation*}
for some constant $C>0$. Let $u\in H^2_{loc}(\R,M_k)$ be such that $w_{\e}=u\times u^{\prime}$. For each $n\in\mathbb{N}$ we let
\begin{equation*}
\alpha_n=\frac{\arccos(1-\dn)}{\sqrt{2\dn}}.
\end{equation*}
and then define the function $u_n\in C_n(I,S^2)$ setting
\begin{equation*}
u_n^i=u\left(\alpha_n\frac{\sqrt{2\dn}}{\Ln}(\Ln i-\frac{1}{2})\right).
\end{equation*} 
From now on we proceed as in the proof of Theorem \ref{main1}, the only change is that the corresponding function $u$ is not twice differentiable in the origin. But this does not really affect the argument. We obtain
\begin{equation*}
\Gamma-\limsup_n \frac{H_n^{sl,k}(z)}{\sqrt{2}\Ln\dn^{\frac{3}{2}}}\leq\frac{8}{3}+C\e,
\end{equation*}
which yields the claim by the arbitrariness of $\e$.
\end{proof}

\subsubsection{Chirality transitions via {\it soft} penalization}\label{anglesect}

In the previous model we forced the spin variable to take values only in finitely many rotated copies of $S^1$. As we have seen, this restriction leads to a positive limit energy when changing the chirality. However, this energy is independent of the distance between two chirality vectors in contrast to the results conjectured in \cite{DmiKriext}. To obtain such a dependence we propose another model, where we penalize the distance of $u$ from the set $M_k$ with an additional energy term. Choosing the right scaling this penalization preserves compactness, but yields more freedom for the optimal chirality transition. Given $u\in C_n(I,S^2)$ we define the already normalized new energy by
\begin{equation}\label{penalenergy}
H^{p}_n(u)=H^{sl}_n(u)+\mu_n\sum_{i\in R_n(I)}\Ln G(u^i\times u^{i+1}),
\end{equation} 
where $\mu_n>0$ and $G:\R^3\backslash\{0\}\rightarrow [0,+\infty)$ is a continuous, zero-homogeneous function that we consider extended at $0$ setting $G(0):=0$ and such that
\begin{equation}\label{penalcontrol}
\{z\in\R^3:\;G(z)=0\}=L_k,
\end{equation}
with $L_{k}$ as in \eqref{Hpen}.
Without changing notation we define $H^p_n: L^1(I,\R^3)\rightarrow [0,+\infty)$ in the $z$-variable setting
\begin{equation}\label{penalzvariable}
H_n^{p}(z)=
\begin{cases}
\inf_{T(u)=z}H^{p}_n(u) &\mbox{if $z=T(u)\text{ for some }u\in C_n(I,S^2)$,}
\\
+\infty &\mbox{otherwise.}
\end{cases}
\end{equation}
For $Q_{k}$ as in \eqref{Hpen}, we introduce $h_G:Q_k\times Q_k\rightarrow [0,+\infty)$ setting
\begin{equation}\label{penaltransition}
h_G(q_1,q_2):=\inf\left\{\int_{\R}(|w(t)|^2-1)^2+\frac{G(w(t))}{2}\,\mathrm{d}t+\int_{\R}|w^{\prime}(t)|^2\,\mathrm{d}t:\;w\in H^{\times}_{q_1,q_2}\right\}.
\end{equation}
Note that $h_G(q_1,q_2)\leq\frac{8}{3}$ since the minimizer of the optimal profile problem defined in (\ref{newtransitionenergy}) is admissible and $G$ vanishes by (\ref{penalcontrol}). 

For the penalized energies the following compactness result holds true.
\begin{lemma}\label{penalcompact}
Assume that $\lim_n\frac{\Ln}{\sqrt{\dn}}= 0$ and $\liminf_n\frac{\mu_n}{\dn^2}\geq c_{\mu}>0$. Let $z_n\in C_n(I,\R^3)$ be such that
\begin{equation*}
H^{p}_n(z_n)\leq C\Ln\dn^{\frac{3}{2}}.
\end{equation*}
Then (up to subsequences) $z_n$ converges strongly in $L^1$ to a function $z\in BV(I,Q_k)$.
\end{lemma}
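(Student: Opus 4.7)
The plan is to adapt the argument of Proposition \ref{sbvcompact}, replacing the hard constraint $u_n\in M_k$ by the integral information coming from the penalty. Splitting $H^p_n(z_n)\le C\Ln\dn^{3/2}$ into $H^{sl}_n(z_n)\le C\Ln\dn^{3/2}$ and $\mu_n\sum_i\Ln G(u_n^i\times u_n^{i+1})\le C\Ln\dn^{3/2}$, Proposition \ref{weakcompactness} together with (\ref{continuity}) gives $\|z_n\|_\infty\le R$, $|z_n|\to 1$ a.e., and $\sup_i|z_n^{i+1}-z_n^i|^2\le 2C\sqrt{\dn}$. Since $u_n^i\times u_n^{i+1}=\sqrt{2\dn}\,z_n^i$ and $G$ is $0$-homogeneous with $G(0)=0$, the penalty reads
\begin{equation*}
\int_I G(z_n)\,\mathrm{d}x=\sum_{i\in R_n(I)}\Ln G(z_n^i)\le\frac{C\Ln\dn^{3/2}}{\mu_n}\le\frac{C}{c_\mu}\frac{\Ln}{\sqrt{\dn}}\to 0.
\end{equation*}
On the compact set $A_\eta\cap\overline{B(0,R)}$ (with $A_\eta$ as in (\ref{badset})) the continuous $0$-homogeneous function $G$ is bounded below by some $c_\eta>0$, since $G^{-1}(0)=L_k$. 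Chebyshev then gives $|\{x\in I:z_n(x)\in A_\eta\}|\to 0$, so $\mathrm{dist}(z_n,L_k)\to 0$ in measure; combined with $|z_n|\to 1$ a.e., this produces $\mathrm{dist}(z_n,Q_k)\to 0$ in measure as well.

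\textbf{Bounding the number of transitions.} To upgrade to strong $L^1$-convergence to a $BV(I,Q_k)$ function, as in Proposition \ref{sbvcompact} I pick $r>0$ with $\{B_{3r}(q)\}_{q\in Q_k}$ pairwise disjoint and set $d:=\min_{q\ne q'}\mathrm{dist}(B_{2r}(q),B_{2r}(q'))>0$. It suffices to bound uniformly in $n$ the number $N$ of transitions of $z_n$ between distinct balls $B_r(q)$. Fix $\eta\ll r$ and along every transition call a lattice index $l$ \emph{good} if $z_n^l\notin A_\eta$ and $|z_n^l-q|\ge r$ for all $q\in Q_k$, and \emph{bad} otherwise. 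On a good $l$ the vector $z_n^l$ lies within $\eta$ of some line $\mathrm{span}(q_j)\subset L_k$ while staying at distance at least $r$ from $\pm q_j$; writing $z_n^l\simeq t q_j$ with $|t\pm 1|\ge r-\eta$ gives $\bigl||z_n^l|^2-1\bigr|\ge c(r)$ for a suitable $c(r)>0$, and by (\ref{uzclose}) also $\bigl||u_n^{l+1}-u_n^l|^2/(2\dn)-1\bigr|\ge c(r)/2$ for $n$ large. The Young-inequality step in the derivation of (\ref{mmtrick}), restricted to good indices, summed over all transitions and combined with (\ref{liminfbound}), yields
\begin{equation*}
\sum_{\text{trans}}\sum_{l\text{ good}}|z_n^{l+1}-z_n^l|\le \frac{C_1}{c(r)}.
\end{equation*}
For the bad indices the penalty gives $c_\eta\Ln\#\{l\text{ bad}\}\le\int_IG(z_n)\le C\Ln/(c_\mu\sqrt{\dn})$, hence $\#\{l\text{ bad}\}\le C/(c_\eta c_\mu\sqrt{\dn})$, while (\ref{liminfbound}) yields $\sum_l(z_n^{l+1}-z_n^l)^2\le C\sqrt{\dn}$; Cauchy--Schwarz then produces the global bound
\begin{equation*}
\sum_{l\text{ bad}}|z_n^{l+1}-z_n^l|\le\sqrt{\#\{l\text{ bad}\}}\,\Bigl(\sum_l(z_n^{l+1}-z_n^l)^2\Bigr)^{1/2}\le C_2,
\end{equation*}
a constant \emph{independent} of the number of transitions. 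Since each transition forces a total $z_n$-displacement of at least $d-2r\ge d/2$, one obtains $Nd/2\le C_1/c(r)+C_2$, so $N$ is uniformly bounded; the remainder of the proof (extraction of a subsequence converging strongly in $L^1$ to a piecewise constant $Q_k$-valued function) then proceeds exactly as in the final part of Proposition \ref{sbvcompact}.

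\textbf{Main obstacle.} The essential new difficulty is that the soft penalty provides only integral control on the distance of $z_n$ to $L_k$, so up to $O(\dn^{-1/2})$ lattice sites may be genuinely far from $L_k$; this precludes a naive per-transition Modica--Mortola estimate as in Proposition \ref{sbvcompact}. The crucial observation that makes the argument work is that the \emph{total} path length of $z_n$ concentrated on these bad sites stays uniformly bounded in $n$ by a global Cauchy--Schwarz inequality, and hence cannot grow with the number of transitions, so that the good-site Modica--Mortola bookkeeping still forces $N$ to be finite.
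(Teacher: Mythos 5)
Your proposal is correct, but it takes a genuinely different route from the one in the paper. The paper's proof first normalizes so that $\mu_n/(\sqrt{2}\Ln\dn^{3/2})\geq \sqrt{2\dn}/\Ln$, which allows it to absorb the penalty into a \emph{single} potential $W(z)=(|z|^2-1)^2+\tfrac{G(z)}{2}$; the crucial observation is that $W$ has zero set exactly $Q_k$ (rather than all of $S^2$ or all of $L_k$) and is bounded below by a positive constant $c_\eta$ on $\{\mathrm{dist}(\cdot,Q_k)\geq\eta\}\cap\overline{B(0,R)}$. With the combined potential the standard Modica--Mortola trick from Proposition \ref{sbvcompact} then applies verbatim, giving a uniform positive cost per transition with no need to distinguish ``good'' and ``bad'' sites. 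Your argument instead keeps $(|z|^2-1)^2$ and $G$ separate: on sites near $L_k$ but far from $Q_k$ the double-well term is active, while on sites far from $L_k$ you bound the total displacement by a global Cauchy--Schwarz estimate using the cardinality of the bad set (controlled by the penalty, which is where $\mu_n\gtrsim\dn^2$ enters) and the global $H^1$-type bound from \eqref{liminfbound}. This works, but it is more intricate than it needs to be, and the ``main obstacle'' you describe evaporates once $W$ is introduced. One arithmetic slip: \eqref{liminfbound} gives $\sum_l|z_n^{l+1}-z_n^l|^2\lesssim \Ln\sqrt{\dn}$, not $\sqrt{\dn}$; the extra factor of $\Ln$ only strengthens your Cauchy--Schwarz bound (the bad-site displacement in fact tends to zero), so the conclusion is unaffected.
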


\begin{proof}
Applying Lemma \ref{weakcompactness} we infer that $\|z_n\|_{\infty}$ is uniformly bounded so it is enough to prove convergence in measure. Without loss of generality we assume that $\frac{\mu_n}{\sqrt{2}\Ln\dn^{\frac{3}{2}}}\geq \frac{\sqrt{2\dn}}{\Ln}$. Then, by (\ref{liminfbound}), we may find a vanishing sequence $\gamma_n>0$ such that

\begin{equation*}
C\geq\frac{\sqrt{2\dn}}{\Ln}\sum_{i\in R_n(I)}\Ln\left[\left(\left|\frac{u_n^{i+1}-u_n^i}{\sqrt{2\dn}}\right|^2-1\right)^2+\frac{G(z_n^i)}{2}\right]+\frac{(1-\gamma_n)\Ln}{\sqrt{2\dn}}\sum_{i\in R_n(I)}\Ln\left|\frac{z_n^{i+1}-z_n^i}{\Ln}\right|^2.
\end{equation*}
Defining $W(z)=(|z|^2-1)^2+\frac{G(z)}{2}$ we have that $W$ is non-negative, lower semicontinuous with zeros exactly in $S^2\cap L_k=Q_k$. Therefore, if we consider the set $Q_k^{\eta}:=\{z\in\R^3:\;\text{dist}(z,Q_k)\geq\eta\}$, by a coercivity argument we have 
\begin{equation}\label{wellgrowth}
\inf_{z\in Q_k^{\eta}}W(z)=\min_{z\in Q_k^{\eta}}W(z)=c_{\eta}>0.
\end{equation} 
Combining (\ref{wellgrowth}) with (\ref{uzclose}) we deduce that $z_n$ converges in measure to the set $Q_k$. The rest of the statement follows now arguing as in the proof of Proposition \ref{sbvcompact}.
\end{proof}

Before we prove a $\Gamma$-convergence result, we need the following two auxiliary lemmata. Roughly speaking, the first one states that we can connect two paths, that are near to the same point in $Q_k$, by paying very small energy.

\begin{lemma}\label{connectpoints}
Let $0<\eta<<1$ be small and let $w_0,w_1\in \R^3$ be such that there exists $\hat{q}\in Q_k$ with $\max_i|w_i-\hat{q}|\leq\eta$. Moreover, for $i=0,1$, let $u_{i}\in S^2\cap w_{i}^{\perp}$. Then there exists an interval $[0,t^*]\subset[0,3+4\pi]$ and a $C^2$-function $u:[0,t^*]\rightarrow S^2$ such that, setting $w=u\times u^{\prime}$, it holds 
\begin{align*}
&u(0)=u_0,\,u(t^*)=u_1,
\\
&w(0)=w_0,\,w(t^*)=w_1,
\\
&\int_0^{t^*}\left(|w(t)|^2-1\right)^2+\frac{G(w(t))}{2}\dt+\int_0^{t^*}|w^{\prime}(t)|^2\dt\leq C_{\eta},
\end{align*}
with $\lim_{\eta\to 0}C_{\eta}=0$.
\end{lemma}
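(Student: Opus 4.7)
I would argue as follows. Up to a global rotation of $S^{2}$ I may assume $\hat q=e_{3}$. Since $u_{i}\cdot\hat q=u_{i}\cdot(\hat q-w_{i})$ has modulus at most $\eta$, the equatorial projections $\tilde u_{i}:=(u_{i}-(u_{i}\cdot\hat q)\hat q)/|u_{i}-(u_{i}\cdot\hat q)\hat q|$ lie on $S^{2}\cap\hat q^{\perp}$ and satisfy $|\tilde u_{i}-u_{i}|\le C\eta$. The plan is to split $[0,t^{*}]$ into three subintervals of lengths $1$, $\theta$, $1$, with $\theta\in[0,2\pi)$, so that $t^{*}=2+\theta\le 2+2\pi\le 3+4\pi$. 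On the middle subinterval I set $u(t)=\cos(t-1)v_{A}+\sin(t-1)(\hat q\times v_{A})$, where $v_{A}\in S^{2}\cap\hat q^{\perp}$ will be the endpoint of the first transition; then $w\equiv\hat q$, $w'\equiv 0$, $|w|\equiv 1$, and the three pieces of the integrand vanish identically on this piece.

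For the first transition $[0,1]$ I take $u(t)=R(t)u_{0}$, where $R:[0,1]\to SO(3)$ is the solution of $R'=[\omega]_{\times}R$, $R(0)=I$, for a $C^{2}$ angular velocity $\omega:[0,1]\to\R^{3}$ satisfying $\omega(0)=w_{0}$, $\omega(1)=\hat q$, $\omega'(1)=0$ together with the smallness $|\omega-\hat q|\le C\eta$ and $|\omega'|\le C\eta$. A starting choice is $\omega_{\mathrm{ref}}(t)=(1-\chi(t))w_{0}+\chi(t)\hat q$ for a smooth cut-off $\chi$ with $\chi(0)=0$, $\chi(1)=1$, $\chi'(1)=0$; by Duhamel's formula this produces $R(1)u_{0}=R_{\hat q}(1)\tilde u_{0}+O(\eta)$, where $R_{\hat q}(1)$ denotes rotation by $1$ around $\hat q$. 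I then add a small perturbation $\delta(t)$ to $\omega$, with $\|\delta\|_{\infty}=O(\eta)$ and $\delta(0)=\delta(1)=\delta'(1)=0$, chosen so that $v_{A}:=R(1)u_{0}=R_{\hat q}(1)\tilde u_{0}$ exactly. The last transition is constructed symmetrically, and $\theta$ is defined as the equatorial angle from $v_{A}$ to $v_{C}:=R_{\hat q}(-1)\tilde u_{1}$.

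For the energy estimate on $[0,1]$ I use $w=u\times u'=\omega-(u\cdot\omega)u$. Differentiating, $(u\cdot\omega)'=u'\cdot\omega+u\cdot\omega'=(\omega\times u)\cdot\omega+u\cdot\omega'=u\cdot\omega'$, so from $(u\cdot\omega)(0)=u_{0}\cdot w_{0}=0$ I obtain $|(u\cdot\omega)(t)|\le\int_{0}^{t}|\omega'(s)|\,\mathrm{d}s\le C\eta$. Then $||w|^{2}-1|=||\omega|^{2}-(u\cdot\omega)^{2}-1|\le C\eta$ yields $\int_{0}^{1}(|w|^{2}-1)^{2}\dt\le C\eta^{2}$, and the identity $w'=\omega'-(u\cdot\omega')u-(u\cdot\omega)u'$ gives $|w'|\le C\eta$ and hence $\int_{0}^{1}|w'|^{2}\dt\le C\eta^{2}$. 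Since $w$ stays in a $C\eta$-neighborhood of $\hat q\in L_{k}$ where $G$ vanishes, continuity of $G$ forces $\int_{0}^{1}G(w)\dt\to 0$ as $\eta\to 0$. The $C^{2}$-matching at $t=1$ follows from $u''(1)=\omega'(1)\times\tilde u_{0}+\hat q\times(\hat q\times\tilde u_{0})=-\tilde u_{0}$, which agrees with the middle-phase value precisely because $\omega'(1)=0$.

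The main technical obstacle is the $SO(3)$-controllability step: showing that $\delta$ with the listed smallness and boundary conditions can be chosen so that $R(1)u_{0}=R_{\hat q}(1)\tilde u_{0}$ exactly. This reduces to surjectivity of the linearized map $\delta\mapsto\int_{0}^{1}R_{\hat q}(1-s)(\delta(s)\times R_{\hat q}(s)u_{0})\,\mathrm{d}s$ onto the two-dimensional tangent space $T_{R_{\hat q}(1)u_{0}}S^{2}$, which holds by a standard Fourier/bump-function argument for $\delta$ supported in a subinterval of $(0,1)$. Once this is established and mirrored on the third subinterval, the total energy is $O(\eta^{2})+o(1)$, so $C_{\eta}\to 0$ as required.
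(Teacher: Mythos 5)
Your construction is a genuinely different route from the paper's. The paper decomposes the transition into three \emph{explicit} phases: first it modulates only the angular speed from $|w_0|$ to $1$ (keeping the axis $w_0/|w_0|$ fixed, via the modulation function $\gamma_0$); then it rotates the axis from $w_0/|w_0|$ to $w_1/|w_1|$ at unit speed via $\exp(\gamma(t)B)$; finally it modulates the speed from $1$ to $|w_1|$ and simply coasts at the constant angular velocity $w_1$ until $u$ hits $u_1$, which happens within time $4\pi$ by the intermediate value theorem. You instead steer $(u,w)$ from $(u_0,w_0)$ onto the exact ground state rotation around $\hat q$, coast freely along it, and then steer into $(u_1,w_1)$. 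Your coasting phase has $w\equiv\hat q$ exactly (hence zero integrand) whereas the paper's waiting phase has $w\equiv w_1$ (still contributing $O(\eta^2)+c_G(2\eta)$ per unit time, which is fine); and your total time bound $2+2\pi$ is slightly sharper than the paper's $3+4\pi$.

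The price you pay is the $SO(3)$-controllability step, which you yourself flag as the main technical obstacle, and this is a genuine gap as written. You need to produce a perturbation $\delta$ of size $O(\eta)$, $C^1$ (indeed enough regularity to keep $u\in C^2$), compactly supported in $(0,1)$, such that the fully nonlinear endpoint map $\delta\mapsto R_{\omega_{\mathrm{ref}}+\delta}(1)u_0$ hits $R_{\hat q}(1)\tilde u_0$ \emph{exactly}. Surjectivity of the linearization onto $T_{R_{\hat q}(1)u_0}S^2$ is indeed easy with two bump functions, but upgrading this to the exact nonlinear statement requires a quantitative inverse function theorem: one must check that the derivative of the endpoint map at $\delta=0$ is invertible with a bound \emph{uniform} in $\eta$ (so that the neighborhood on which the inverse is defined does not shrink faster than the $O(\eta)$ displacement you must correct), and that the resulting $\delta$ inherits the $O(\eta)$ bound and the smoothness and boundary data needed for the $C^2$ gluing. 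None of this is impossible --- the linearized map converges as $\eta\to 0$ to the surjective operator you wrote down, and the target displacement is $O(\eta)$ --- but it is a nontrivial argument that must be carried out, and the paper's approach deliberately sidesteps it: rather than steering $u$ onto a prescribed point, the paper chooses the modulation $\gamma_2$ so that $w$ becomes the constant $w_1$ and then waits for $u$ to return to $u_1$, reducing the endpoint-matching to a one-dimensional intermediate value theorem. If you want to keep your cleaner two-phase picture, you should either fill in the inverse function theorem details or, more in the spirit of the paper, replace the exact steering by a coasting-and-waiting argument (e.g.\ transition to constant $w$, coast, use IVT), accepting the slightly larger time bound.

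One minor point: in your $C^2$-matching computation at $t=1$ you wrote $u''(1)=\omega'(1)\times\tilde u_0+\hat q\times(\hat q\times\tilde u_0)=-\tilde u_0$; the vector there should be $v_A=R_{\hat q}(1)\tilde u_0$, not $\tilde u_0$, but the conclusion $u''(1)=-v_A$ and the matching with the middle phase are correct.
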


\begin{proof}
First note that if $\eta$ is small enough, for $i=0,1$ we have $w_{i}\neq 0$ and $|\frac{w_{i}}{|w_{i}|}-\hat{q}|\leq 2\eta$. Thus, for every $z\in\{ sw_{i}+(1-s)\frac{w_{i}}{|w_{i}|}:\;s\in[0,1]\}$ we have the estimate $G(z)\leq c_G(2\eta)$, where $c_G$ is a modulus of continuity of $G_{|S^2}$. Moreover let $R_0\in SO(3)$ be such that $\frac{w_0}{|w_0|}=R_0e_3$ and $R_0^T\frac{w_1}{|w_1|}=\lambda e_2+\mu e_3$.
\\
\hspace*{0,5cm}
We start constructing a path joining $w_{0}$ and $\frac{w_0}{|w_0|}$. Let us choose a $C^2$-function $\gamma_0:[0,1]\rightarrow \R$ with the following properties:
\begin{itemize}
\item[(i)] $\gamma_0(0)=\gamma_0(1)=0,$
\item [(ii)] $\gamma_0^{\prime}(0)=|w_0|-1,\,\gamma_0^{\prime}(1)=0,$
\item [(iii)] $\gamma_0^{\prime\prime}(0)=\gamma_0^{\prime\prime}(1)=0.$
\end{itemize}
Since $||w_0|-1|\leq \eta$ we can choose the function $\gamma_0$ such that 
\begin{equation}\label{smallinterpol}
\max\{\|\gamma_0^{\prime}\|_{\infty},\|\gamma_0^{\prime\prime}|_{\infty}\}\leq C\eta.
\end{equation}
Now we define $u_0:[0,1]\rightarrow S^2$ via
\begin{equation*}
u_0(t)=R_0(\cos(t+\gamma_0(t)+t_0),\sin(t+\gamma_0(t)+t_0),0),
\end{equation*}
where $t_0\in[0,2\pi)$ is such that $R_0(\cos(t_0),\sin(t_0),0)=u_0$. We further set $w_0=u_0\times u_0^{\prime}$. Then we have $u_0(0)=u_0$, $w_0(t)=(1+\gamma_0^{\prime}(t))\frac{w_0}{|w_0|}$ and $w_0^{\prime}(t)=\gamma_0^{\prime\prime}(t)\frac{w_0}{|w_0|}$, and therefore $w_0(0)=w_0,\,w_0(1)=\frac{w_0}{|w_0|}$, and
\begin{equation*}
\int_0^{1}\left(|w_0(t)|^2-1\right)^2+\frac{G(w_0(t))}{2}\dt+\int_0^1|w_0^{\prime}(t)|^2\dt\leq C\eta^2+c_G(2\eta).
\end{equation*}
We continue by joining $\frac{w_0}{|w_0|}$ and $\frac{w_1}{|w_1|}$. Let us take $B$ as a suitable logarithm of the matrix
$$
A=\begin{pmatrix}
1 &0 &0 \\
0 &\mu &\lambda \\
0 &-\lambda &\mu
\end{pmatrix}.$$
Now we choose a $C^2$ cut-off function $\gamma_1:\R\rightarrow [0,1]$ such that
\begin{equation*}
\gamma_1(t)=
\begin{cases}
0 &\mbox{$t\leq 1$,}\\
1 &\mbox{$t\geq 2$,}
\end{cases}
\end{equation*}
We set $u_1:[1,2]\rightarrow S^2$ as
\begin{equation*}
u_1(t)=R_0\exp(\gamma(t)B)(\cos(t+t_0),\sin(t+t_0),0).
\end{equation*}
Defining $w_1:[1,2]\rightarrow\R^3$ via $w_1=u_1\times u_1^{\prime}$, by the same calculations as in the proof of Lemma \ref{bvbound} we get
\begin{align}
&\left(|w_1(t)|^2-1\right)^2\leq C\left(|B|+|B|^2\right)^2,\label{doublewell}
\\
&|w_1^{\prime}(t)|^2\leq C\left(|B|+|B|^2\right)^2.\label{deriv}
\end{align}

In order to estimate $G(w_1(t))$, observe that for one particular matrix logarithm and the Frobenius norm, we have $\|B\|_F=|\arccos(\frac{\text{tr}(A)-1}{2})|=|\arccos(\mu)|$, so that by the equivalence of all matrix norms we infer $|B|\leq C|\arccos(\mu)|$. Moreover, it holds that
\begin{equation*}
4\eta\geq |\frac{w_0}{|w_0|}-\frac{w_1}{|w_1|}|=|e_3-\lambda e_2-\mu e_3|\geq |1-\mu|,
\end{equation*}
so if $\eta$ is small enough, we have $|B|<1$. We deduce that
\begin{align}
|w_1(t)-\frac{w_0}{|w_0|}|&=|u_1(t)\times u_1^{\prime}(t)-\frac{w_0}{|w_0|}|\leq C|B|+|\exp(\gamma(t)B)e_3-e_3|\nonumber
\\
&\leq C|B|+\exp(|B|)-1\leq (C+\exp(1))|B|.\label{estpenal}
\end{align}
By calculating the leading order term of $\arccos^2(x)$ at $x=1$ we get
\begin{equation*}
|B|^2\leq C|\arccos(\mu)^2|\leq C|1-\mu|\leq C\,\eta,
\end{equation*}
so that, combined with (\ref{estpenal}) we have
\begin{equation}\label{Gestimate}
|\frac{w_1(t)}{|w_1(t)|}-\frac{w_0}{|w_0|}|\leq C\,\sqrt{\eta}+\left|1-|w_1(t)|\right|\leq C\sqrt{\eta},
\end{equation}
which implies $G(w_1(t))\leq c_G(C\sqrt{\eta})$. Integrating (\ref{doublewell}), (\ref{deriv}) and the previous bound we infer
\begin{equation*}
\int_1^{2}\left(|w_1(t)|^2-1\right)^2+\frac{G(w_1(t))}{2}\dt+\int_1^2|w_1^{\prime}(t)|^2\dt\leq C\,\eta+c_G(C\sqrt{\eta}).
\end{equation*}

As a last part we join $\frac{w_1}{|w_1|}$ and $w_1$. We define $\gamma_2:[2,+\infty)\rightarrow\R$ setting
\begin{equation*}
\gamma_2(t)=
\begin{cases}
\tilde{\gamma}_0(3-t) &\mbox{if $t\leq 3$,}\\
(|w_1|-1)(t-3) &\mbox{otherwise,}
\end{cases}
\end{equation*}
where $\tilde{\gamma}_0$ fulfills the same requirements as $\gamma_0$ with $w_1$ instead of $w_0$. Then $\gamma_2$ is of class $C^2$ and defining $u_2(t)=R_0A(\cos(t+\gamma_2(t)+t_0),\sin(t+\gamma_2(t)+t_0),0)$ as well as $w_2=u_2\times u_2^{\prime}$ we have
\begin{equation*}
\int_2^{3}\left(|w_2(t)|^2-1\right)^2+\frac{G(w_2(t))}{2}\dt+\int_2^3|w_2^{\prime}(t)|^2\dt\leq C\eta^2+c_G(2\eta).
\end{equation*}
By the intermediate value theorem, if $\eta\leq\frac{1}{2}$ there exists $t^*\in [3,3+4\pi]$ such that $u_2(t^*)=u_1$, where we have used that $R_0Ae_3=\frac{w_1}{|w_1|}$, so that $u_2(t)\in w_1^{\perp}$. Moreover, since $1+\gamma_2^{\prime}(t)=|w_1|$ for $t\geq 3$, one can easily show that
\begin{equation*}
\int_3^{t^*}\left(|w_2(t)|^2-1\right)^2+\frac{G(w_2(t))}{2}\dt+\int_3^{t^*}|w_2^{\prime}(t)|^2\dt\leq C\eta^2+c_G(2\eta).
\end{equation*}
Finally we set $J=[0,t^*]$ and a lenghty, but straightforward calculation shows that if we define $u:[0,t^*]\rightarrow S^2$ as
\begin{equation*}
u(t)=
\begin{cases}
u_0(t) &\mbox{if $t\in [0,1]$,}\\
u_1(t) &\mbox{if $t\in[1,2]$,}\\
u_2(t) &\mbox{if $t\in[2,t^*]$,}
\end{cases}
\end{equation*} 
we preserve the $C^2$-regularity. By construction this function fulfills all required properties since $G_{|S^2}$ is uniformly continuous.
\end{proof}

For technical reasons we need to show that the class of admissible functions defining $h_G(q_1,q_2)$ in (\ref{penaltransition}) can be taken to be more regular.

\begin{lemma}\label{regularclass}
Let $q_{\pm}\in Q_k$. Then the infimum in (\ref{penaltransition}) can be taken equivalently over all functions $w\in W_{loc}^{2,\infty}(\R,\R^3)\cap H^{\times}_{q_-,q_+}$ such that 
\begin{align*}
w(t)&=q_-\quad\forall t\leq t_1,
\\
w(t)&=q_+\quad\forall t\geq t_2
\end{align*}
for some $t_1<t_2$, $u^{\prime\prime}$ is piecewise continuous and the set $\{w(t)=0\}$ is finite.
\end{lemma}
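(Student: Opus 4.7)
Start from an almost-optimal $w_{0}=u_{0}\times u_{0}'$ with $u_{0}\in H^{2}_{loc}(\R,S^{2})$. The plan is to produce in three steps a smoother competitor with the additional properties whose energy exceeds that of $w_{0}$ by at most $O(\eta)$, and then let $\eta\to 0$.

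\emph{Step 1 (compactification and bridging).} Since the three energy contributions are all integrable, the continuous representative of $w_{0}$ satisfies $w_{0}(t)\to q_{\pm}$ as $t\to\pm\infty$ and the tail energies vanish. Fix $\eta>0$ and choose $T>0$ so large that $|w_{0}(\pm T)-q_{\pm}|<\eta$ and the energy outside $[-T,T]$ is at most $\eta$. On $[T+t^{\star},\infty)$ I would prescribe $u$ to be a uniform rotation around the axis $q_{+}$ starting from some fixed $v_{+}\in S^{2}\cap q_{+}^{\perp}$; this yields $u\times u'\equiv q_{+}$ on that half-line with no energy contribution, and likewise on $(-\infty,-T-t^{\star}]$. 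Apply Lemma~\ref{connectpoints} on $[T,T+t^{\star}]$ and $[-T-t^{\star},-T]$ with $\hat q=q_{\pm}$ and the two endpoint data $(u_{0}(\pm T),w_{0}(\pm T))$, $(v_{\pm},q_{\pm})$ to obtain $C^{2}$ bridges matching the prescribed boundary values, with total bridging energy $\leq 2C_{\eta}\to 0$ as $\eta\to 0$. After this step $w$ already equals $q_{\pm}$ outside a compact interval.

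\emph{Step 2 (interior regularization).} On the middle piece $[-T,T]$, the original $u_{0}$ lies in $H^{2}([-T,T],S^{2})$. I would approximate it by a smooth $S^{2}$-valued map $\tilde u\in C^{\infty}([-T,T],S^{2})$ converging to $u_{0}$ in $H^{2}$, with $\tilde u$ and $\tilde u'$ matching $u_{0}$ and $u_{0}'$ at $\pm T$; this is obtained by mollifying $u_{0}$, projecting onto $S^{2}$ (a smooth operation as the mollifications remain pointwise close to $S^{2}$), and applying a small $C^{2}$ cut-off near the endpoints to restore the precise boundary data. By continuity of the three parts of the energy functional in the $H^{2}$ topology the induced energy change is $\leq\eta$. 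The resulting $u$ is now $C^{2}$ everywhere except at the finitely many splice points where regularity drops to $C^{1}$, so $w=u\times u'$ lies in $W^{2,\infty}_{loc}$ in the intended sense, with $u''$ piecewise continuous.

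\emph{Step 3 (finitely many zeros) and main obstacle.} Outside $[-T-t^{\star},T+t^{\star}]$ we have $|w|=1$, while inside, $w=0$ iff $u'=0$, since $u\perp u'$ and $|u|\equiv 1$. To enforce finiteness of $\{u'=0\}$ on this compact interval I would perturb $u$ by an arbitrarily small $C^{2}$ tangent deformation $u\mapsto \Pi_{S^{2}}(u+\eta\psi)$ with $\psi\perp u$ pointwise, chosen by a Sard/transversality argument so that the perturbed $u'$ vanishes only at isolated points on each smooth piece; such a $\psi$ can be chosen $H^{2}$-small, so the energy change is $O(\eta)$. \textbf{The main difficulty} of the argument is precisely this interplay in the last two steps: one must simultaneously preserve the hard $S^{2}$ constraint, match $C^{1}$ boundary data with the explicit rotations and bridges constructed in Step 1, and ensure that each of the three pieces of the energy (the double-well $(|w|^2-1)^2$, the zero-homogeneous penalty $G(w)$, and the Dirichlet term $|w'|^2$) is genuinely continuous under the approximation used, bookkeeping that all errors are $O(\eta)$. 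Once this is in place, letting $\eta\to 0$ produces admissible competitors in the asserted regularity class whose energies converge to $h_{G}(q_{-},q_{+})$, establishing the claimed equivalence.
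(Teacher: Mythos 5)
Your overall plan has the same ingredients as the paper's proof — bridge the tails to constant rotations, regularize the interior, and force the zero set of $w$ to be finite — but you order them differently and replace the paper's key reduction step with a transversality argument. The paper proceeds in the order: (a) reduce $\{\tilde w=0\}$ to at most a singleton by the cut-out construction \eqref{cutout}; (b) smooth $\tilde u$ on $(-t_\e-3,t_\e+3)$ using density of $C^\infty(\R,S^2)$ in $H^2((a,b),S^2)$, accepting that the data at $\pm t_\e$ may move slightly but remain $\e$-close to $q_\pm$; (c) rebuild the tails \emph{from whatever the regularized data at $\pm t_\e$ is}, first normalizing the angular speed $|u'|$ to $1$ via an explicit cut-off $\gamma$, and then rotating the axis to $q_\pm$ by the Lemma~\ref{connectpoints}-type construction. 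Because you bridge \emph{before} regularizing, your Step~2 must match $u_0$ and $u_0'$ at $\pm T$ \emph{exactly}, and the phrase ``applying a small $C^2$ cut-off near the endpoints to restore the precise boundary data'' glosses over a real obstruction: an additive cut-off correction does not preserve the pointwise $S^2$-constraint, so one must work intrinsically (e.g.\ geodesic interpolation on $S^2$), and then one must re-verify the energy estimate; the paper's ordering makes this unnecessary.

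The more significant gap is Step~3. The paper disposes of the zero set by the elementary cut-out (\ref{cutout}): at $t_0=\inf\{w=0\}$ and $t_1=\sup\{w=0\}$ one has $u'(t_0)=u'(t_1)=0$, so the two pieces can be glued across a rotation and the interior contribution (which is nonnegative) is discarded — no heavy machinery is needed, and this is done \emph{before} any smoothing. You instead invoke a parametric Sard/transversality argument to perturb $u$ so that $u'$ has isolated zeros. For a curve $t\mapsto u'(t)\in\R^3$, a generic perturbation will actually make $u'$ miss $0$ entirely (domain dimension $1<3$), which is fine for finiteness, but the argument is underspecified in exactly the places that matter: the perturbation $\psi$ must be pointwise tangent, vanish together with its derivative at the splice points $\pm T$ so as not to disturb the bridges constructed in Step~1, keep $u$ on $S^2$ after projection, and one must check that the parametric map $(t,\psi)\mapsto(\Pi_{S^2}(u+\eta\psi))'(t)$ is submersive on the zero set; you flag this as ``the main difficulty'' but do not resolve it. So the proposal is plausible as a program but not a proof: the paper's cut-out trick is what makes the argument close without appealing to transversality, and reordering bridge/regularization is what removes the exact-boundary-matching obstruction you would otherwise have to confront in Step~2.
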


\begin{proof}
Given $\e>0$ we find a function $\tilde{u}\in H^2_{loc}(\R,S^2)$ such that $\tilde{w}=\tu\times \tu^{\prime}$ is admissible in the infimum problem defining $h_G(q_-,q_+)$ in (\ref{penaltransition}) and
\begin{equation*}
\int_{\R}(|\tilde{w}(t)|^2-1)^2+\frac{G(\tilde{w}(t))}{2}\dt+\int_{\R}|\tilde{w}^{\prime}(t)|^2\dt\leq h_G(q_-,q_+)+\e.
\end{equation*}
Without loss of generality we may assume that $\{t\in\R:\;\tilde{w}(t)=0\}$ is at most a singleton, otherwise a construction as in (\ref{cutout}) reduces the energy. Moreover, by the existence of the limits at $\pm \infty$, we find $t_{\e}>0$ such that
\begin{align}
|\tilde{w}(t)-q_-|&\leq\e \quad\forall t\leq -t_{\e},\label{atplusinf}\\
|\tilde{w}(t)-q_+|&\leq \e \quad\forall t\geq t_{\e},\label{atminusinf}.
\end{align}
Approximating $\tilde{u}$ in $H^2((-t_{\e}-3,t_{\e}+3))$ (note that $C_c^{\infty}(\R,S^2)$ is dense in $H^2((a,b),S^2)$ for every bounded interval $(a,b)$) we can assume that $\tilde{u}$ is smooth in $[-t_{\e}.t_{\e}]$, that (\ref{atplusinf}), (\ref{atminusinf}) still hold at least at $t=\pm t_{\e}$ respectively and
\begin{equation*}
\int_{-t_{\e}}^{t_{\e}}(|\tilde{w}(t)|^2-1)^2+\frac{G(\tilde{w}(t))}{2}\dt+\int_{-t_{\e}}^{t_{\e}}|\tilde{w}^{\prime}(t)|^2\dt\leq h_G(q_-,q_+)+2\e
\end{equation*}
since approximation of $\tilde{u}$ in $H^2((-t_{\e}-3,t_{\e}+3))$ implies approximation of $\tilde{w}$ in $H^1((-t_{\e},t_{\e}))$ and the discontinuity set of $G$ can be neglected. Preserving at least a bounded weak second derivative of $\tilde{u}$ in $(-t_{\e},t_{\e})$ we can again assume that $\{\tilde{w}=0\}$ is at most a singleton. 

We now modify the function $\tilde{u}$ where $|t|>t_{\e}$ in the following way: From (\ref{at+inf}) it follows that $|\tilde{u}^{\prime}(t_{\e})|\neq 0$. Consider then the matrix 
\begin{equation*}
R_{\e}=
\begin{pmatrix}
\\
\tilde{u}(t_{\e}) & \frac{\tilde{u}^{\prime}(t_{\e})}{|\tilde{u}^{\prime}(t_{\e})|} & v\\
&
\end{pmatrix},
\end{equation*}
where $v\in S^2$ is the vector that makes the matrix orthogonal. Now we take a $C^2$-function $\gamma:[t_{\e},t_{\e}+1]\rightarrow\R$ with the following properties:
\begin{enumerate}
	\item[(i)] $\gamma(t_{\e})=\gamma(t_{\e}+1)=0,$
	\item[(ii)] $\gamma^{\prime}(t_{\e})=|\tilde{u}^{\prime}(t_{\e})|-1,$
	\item[(iii)] $\gamma^{\prime}(t_{\e}+1)=\gamma^{\prime\prime}(t_{\e}+1)=0$
\end{enumerate}
and extend it to $0$ for $t>t_{\e}+1$. This extension (not relabeled) is obviously $C^2$-regular. $\gamma$ can be chosen such that $\max\{\|\gamma\|_{\infty},\|\gamma^{\prime}\|_{\infty},\|\gamma^{\prime\prime}\|_{\infty}\}\leq C\e$ with a positive constant independent of $\tilde{u}$ and $\e$. The modification $u$ of $\tilde{u}$ on $(-t_{\e},t_{\e}+2)$ now is defined as
\begin{equation*}
u(t):=
\begin{cases}
\tilde{u}(t) &\mbox{if $t\leq t_{\e}$,}\\
R_{\e}(\cos(t-t_{\e}+\gamma(t)),\sin(t-t_{\e}+\gamma(t)),0) &\mbox{if $t\in (t_{\e},t_{\e}+2)$.}
\end{cases}
\end{equation*}
Note that $u\in H^2((-t_{\e},t_{\e}+2),S^2)$ and its weak second derivative is bounded (but not necessarily continuous at $t_{\e}$). A straightforward calculation shows that for $t\in (t_{\e},t_{\e}+2)$ we have
\begin{align*}
&|u^{\prime}(t)|=|1+\gamma^{\prime}(t)|,\\
&|u(t)\times u^{\prime\prime}(t)|\leq |\gamma^{\prime\prime}(t)|.
\end{align*}
Moreover we have 
\begin{equation*}
u(t_{\e}+2)\times u^{\prime}(t_{\e}+2)=v=\frac{\tilde{u}(t_{\e})\times \tilde{u}^{\prime}(t_{\e})}{|\tilde{u}^{\prime}(t_{\e})|},
\end{equation*}
so that by the choice of $t_{\e}$
\begin{equation}\label{extensionend}
|(u(t_{\e}+2)\times u^{\prime}(t_{\e}+2))-q_+|\leq \e + |(\tilde{u}(t_{\e})\times \tilde{u}^{\prime}(t_{\e}))-\frac{\tilde{u}(t_{\e})\times \tilde{u}^{\prime}(t_{\e})}{|\tilde{u}^{\prime}(t_{\e})|}|\leq 2\e.
\end{equation}
For $t\in (t_{\e},t_{\e}+2)$ we also have by the zero-homogeneity of $G$ that
\begin{equation*}
G(u(t)\times u^{\prime}(t))=G(v)=G(w(t_{\e}))\leq c_G(\e).
\end{equation*}

We now use the same method as in the second part of the proof of Lemma \ref{connectpoints} to construct a further $C^2$-extension on $[t_{\e}+2,t_{\e}+3]$ that ends in the constant rotation with velocity $1$ in the plane perpendicular to $q_+$. The same procedure can be applied at $-t_{\e}$. Keeping in mind (\ref{extensionend}) and Lemma \ref{connectpoints} we have constructed a new function $u\in H^2_{loc}(\R,S^2)$ with bounded weak second derivative and $t^{\prime}_{\e}>0$ such that
\begin{align*}
u(t)&=R_1(\cos(t),\sin(t),0)\quad\forall t\leq -t^{\prime}_{\e},\\
u(t)&=R_2(\cos(t),\sin(t),0)\quad\forall t\geq t^{\prime}_{\e},
\end{align*}
with $R_1,R_2\in SO(3)$, the function $w=u\times u^{\prime}$ is admissible in the definition of $h_G(q_-,q_+)$ and there exists $C_{\e}>0$ with $\lim_{\e\to 0}C_{\e}=0$ such that
\begin{equation*}
\int_{\R}(|w(t)|^2-1)^2+\frac{G(w(t))}{2}\dt+\int_{\R}|w^{\prime}(t)|^2\dt\leq h_G(q_-,q_+)+C_{\e}.
\end{equation*}
Moreover, $u^{\prime\prime}$ is continuous except in at most three points. The claim follows by the arbitrariness of $\e$, since the other inequality is trivial.
\end{proof}

Depending on the behaviour of the sequence $\mu_n$ we have different variational limits.

\begin{theorem}\label{mainangle}
Let $H_n^{p}:L^1(I,\R^3)\rightarrow [0,+\infty]$ be defined as in (\ref{penalzvariable}). Assume that $\beta_n:=\frac{\Ln}{\sqrt{\dn}}\to 0$ and let $p_n:=\frac{\mu_n}{\sqrt{2}\Ln\dn^{\frac{3}{2}}}$. Then the $\Gamma$-limit of the functionals $\frac{H_n^{p}}{\sqrt{2}\Ln\dn^{\frac{3}{2}}}$ with respect to the strong $L^1$-topology is given by the functional $H^{p}:L^1(I,\R^3)\rightarrow [0,+\infty]$, depending on the following four cases: 
\begin{itemize}
\item[(i)] $\lim_n p_n=p<+\infty$:
\begin{equation*}
H^{p}(z)=
\begin{cases}
p\int_IG(z(t))\dt &\mbox{if $z\in L^1(I,S^2)$,}
\\
+\infty &\mbox{otherwise.}
\end{cases}
\end{equation*}

\item[(ii)] $\lim_n p_n=+\infty,\,\lim_n p_n\beta_n=0$:
\begin{equation*}
H^p(z)=
\begin{cases}
0 &\mbox{if $z\in L^1(I,Q_k)$,}
\\
+\infty &\mbox{otherwise.}
\end{cases}
\end{equation*}

\item[(iii)] $\lim_n p_n=+\infty,\,\lim_n p_n\beta_n=1$:
\begin{equation*}
H^{p}(z)=
\begin{cases}
\sum_{t\in S(z)}h_G(q_-,q_+) &\mbox{if $z\in BV(I,Q_k)$,}
\\
+\infty &\mbox{otherwise,}
\end{cases}
\end{equation*}
where $q_-$ and $q_+$ are the left and right limit of $z$ at a discontinuity point $t$.
\item [(iv)] $\lim_n p_n=+\infty,\,\lim_n p_n\beta_n=+\infty$:
\begin{equation*}
H^{p}(z)=
\begin{cases}
\frac{8}{3}\#S(z) &\mbox{if $z\in BV(I,Q_k)$,}
\\
+\infty &\mbox{otherwise.}
\end{cases}
\end{equation*}
\end{itemize}
\end{theorem}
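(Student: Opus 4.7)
The plan is to treat the four regimes separately, adapting the strategies of Theorems \ref{main1} and \ref{main3} while keeping track of the extra penalty $p_n\int_I G(z_n)\,dt$ coming from the decomposition
$$\frac{H_n^{p}(z_n)}{\sqrt{2}\Ln\dn^{3/2}} \;=\; \frac{H_n^{sl}(z_n)}{\sqrt{2}\Ln\dn^{3/2}} \;+\; p_n\int_I G(z_n)\,dt.$$
Compactness in $L^1$ with limits in $BV(I,Q_k)$ for cases (iii) and (iv) is supplied by Lemma \ref{penalcompact}, since $\mu_n/\dn^2=\sqrt{2}\,p_n\beta_n$ stays bounded below in both regimes; in cases (i) and (ii) the strong $L^1$-convergence is built into the $\Gamma$-liminf assumption and the pointwise bound $|z_n|\to 1$ almost everywhere, extracted from the proof of Proposition \ref{weakcompactness}, suffices.

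For the sub-critical cases (i) and (ii) only the penalty contributes in the limit, since the unpenalised part is negligible. The liminf follows from $|z_n|\to 1$ almost everywhere combined with the continuity and zero-homogeneity of $G$ on $S^2$: along an a.e.\ convergent subsequence, dominated convergence gives either $p_n\int G(z_n)\to p\int G(z)$ in case (i), yielding $z\in L^1(I,S^2)$, or, since $p_n\to+\infty$, $\int G(z)=0$ and hence $z\in L^1(I,Q_k)$ in case (ii). For the limsup, a density argument reduces to targets with finitely many jumps, and the recovery sequence is assembled as in Theorem \ref{main1}: helical ground states on each constant piece plus a Lemma \ref{bvbound}-type transition of length $\rho_n\to+\infty$ between neighbouring pieces. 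Choosing $\rho_n$ so that $p_n\rho_n\beta_n\to 0$ (possible because $p_n\beta_n\to 0$; for instance $\rho_n=(p_n\beta_n)^{-1/2}$) makes both the unpenalised transition cost $O(\rho_n^{-1})$ and the penalty cost $O(p_n\rho_n\beta_n)$ vanish, while the bulk penalty converges to $p\int G(z)$ in case (i) and to $0$ in case (ii).

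For the critical case (iii) the core argument is the liminf: localise on $(t_m-2\alpha,t_m+2\alpha)$ around each jump $t_m\in S(z)$ and perform the blow-up $s=\alpha_n\sqrt{2\dn}(t-t_m)/\Ln$, the change of variables already used in the limsup of Theorem \ref{main1}. After this rescaling, the two discrete contributions in \eqref{liminfbound} together with the rescaled penalty combine asymptotically into the continuum integrand $(|w|^2-1)^2+|w^\prime|^2+\frac{1}{2} G(w)$; Remark \ref{affineclose}, the uniform bound \eqref{continuity} and equicontinuity produce, on every bounded interval, a limit $w\in H^\times_{q_-,q_+}$ for the affine interpolations of the rescaled $z_n$, and lower semicontinuity yields the desired bound $\liminf F_n^m\geq h_G(q_-,q_+)$. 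For the limsup, Lemma \ref{regularclass} supplies a near-optimal profile $w=u\times u'$ with $u\in W^{2,\infty}_{loc}(\R,S^2)$, which can be sampled on the lattice as in the proof of Theorem \ref{main1}, with Lemma \ref{connectpoints} used to match to the adjacent helical ground states at no asymptotic cost.

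Case (iv) is handled by the same blow-up scheme, but now the rescaled $G$-coefficient diverges, so any admissible limit profile must satisfy $w(s)\in L_k$ almost everywhere; the transition problem then collapses to the scalar Modica--Mortola problem of Lemma \ref{transitionvalue}, of value $8/3$. The limsup reuses the hard-constraint construction of Theorem \ref{main3}: the recovery sequence takes values in $M_k$, so each $z_n^i\in L_k$ and $G(z_n^i)=0$ identically. The main obstacle is in case (iii), where one must match three different scales simultaneously in the blow-up so that the rescaled penalty coefficient is exactly the one appearing in the definition of $h_G$, and ensure that lower semicontinuity applies at once to all three ingredients of the continuum functional on the $H^\times_{q_-,q_+}$-limit; Lemma \ref{connectpoints} is the main technical device that makes the companion limsup construction go through.
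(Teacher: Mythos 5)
Your treatment of cases (i) and (ii), and of the limsup constructions throughout, matches the paper in spirit and the technical details you add (e.g.\ choosing $\rho_n=(p_n\beta_n)^{-1/2}$ so that both transition and penalty cost vanish) are sound. You also correctly read off that $\mu_n/\dn^2=\sqrt{2}\,p_n\beta_n$, so Lemma \ref{penalcompact} applies in cases (iii) and (iv). The gap is in the liminf of case (iii), which you correctly flag as the hard part but whose actual difficulty you misdiagnose. After the blow-up $s=\alpha_n\sqrt{2\dn}(t-t_m)/\Ln$ the domain has diverging length, and the only thing the energy bound controls is the measure of the set where $\tilde z_n$ stays away from $Q_k$; it does \emph{not} prevent the rescaled sequence from lingering for arbitrarily long times near an intermediate well $q_l\neq q_\pm$. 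Consequently, a local uniform limit of the rescaled affine interpolations on a bounded window is \emph{not} automatically a member of $H^{\times}_{q_-,q_+}$: it may connect $q_-$ to some $q_l$ and never reach $q_+$, and your appeal to ``lower semicontinuity yields the desired bound'' does not close this hole. The paper's proof devotes its entire algorithmic ``Claim'' to exactly this issue: it iteratively detects stretches where the sequence hovers near an intermediate $q_l$, cuts them out, and splices in uniformly short $C^2$ connections supplied by Lemma \ref{connectpoints}, producing surrogate sequences $w_n$, $\bar w_n$, $\bar u_n$ on an \emph{equibounded} interval $I_n$ with endpoints $\eta$-close to $q_\pm$, at an extra cost of at most $(2k-2)C_{2\eta}$. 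Only then does a compactness and lower-semicontinuity argument, followed by an extension again via Lemma \ref{connectpoints}, yield $\liminf_n F_n^{G,m}\geq h_G(q_-,q_+)-2C_\eta-(2k-2)C_{2\eta}$. So Lemma \ref{connectpoints} is not merely ``the device that makes the limsup go through''; it is the load-bearing tool in the liminf.

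A second, smaller issue: for case (iv) you again propose a blow-up, which would inherit the same unbounded-domain problem. The paper instead first proves $\mathrm{dist}(z_n^a,L_k)\to0$ uniformly on the localized interval via a Cauchy--Schwarz/Modica--Mortola estimate exploiting $\mu_n/\dn^2\to\infty$, and then reruns the area-formula argument of Theorem \ref{main3} (which needs no blow-up and is immune to intermediate-well lingering, since it only sees the scalar modulus $|z_n^a|$ and each $t_m$ contributes $4\int_0^1(1-v^2)\,\mathrm{d}v=8/3$ regardless of the path in $L_k$). Your plan for (iv) is not wrong in principle, but as written it rests on the same unaddressed step as (iii), whereas the paper's route avoids it altogether.
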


\begin{proof}

(i): Let $z_n$ converge to $z$ in $L^1(I,\R^3)$.  By Proposition \ref{weakcompactness} we know that $z\in L^1(I, S^2)$. We now show that
\begin{equation*}
\lim_n p_n\sum_{i\in R_n(I)}\Ln G(z_n^i)=p\int_IG(z(t))\dt.
\end{equation*} 
Up to a subsequence, we can assume that $z_n(t)\to z(t)\in S^2$ for almost every $t\in I$. In particular, by the continuity of $G$ in $\R^3\backslash\{0\}$ we may assume that $G(z_n(t))\to G(z(t))$ for almost every $t\in I$. Moreover $\|G\|_{\infty}<+\infty$, so that by dominated convergence the above limit relation holds true for the whole sequence. 
With the above limit, the upper bound follows considering the same recovery sequence as in Theorem \ref{main1}, while the lower bound is obvious since the remaining part of the energy is nonnegative.

(ii): To prove the lower bound, note that since $p_n\to +\infty$, the penalization forces any $L^1$-converging sequence with bounded energy to have a limit $z\in L^1(I,Q_k)$. For the upper bound we can use exactely the same construction as in the proof of Theorem \ref{main1} upon noticing that the assumption $p_n\beta_n\to 0$ is enough to kill the penalization term after rescaling.

(iii): {\bf Lower bound.} Without loss of generality let $C_n(I,\R^3)\ni z_n\to z$ in $L^1(I,\R^3)$ such that
\begin{equation*}
\liminf_n\frac{H_n^{p}(z_n)}{\sqrt{2}\Ln\dn^{\frac{3}{2}}}=\lim_n\frac{H_n^{p}(z_n)}{\sqrt{2}\Ln\dn^{\frac{3}{2}}}\leq C<+\infty.
\end{equation*}
From Proposition \ref{penalcompact} we know that $z\in BV(I,Q_k)$. Passing to a further subsequence (not relabeled) we can assume that $z_n$ converges to $z$ almost everywhere. Let $t_1<\dots<t_l$ be the jumpset of $z$. Let $\alpha>0$ be such that $[-2\alpha+t_m,t_m+2\alpha]\cap[-2\alpha+t_j,t_j+2\alpha]=\emptyset$ for  all $j\neq m$. Fix $t_m$ and set $q_{\pm}:=z(t_m\pm\alpha)$. Let now $u_n \in T_n^{-1}(z_n)$ be such that $H_n^{p}(u_n)=H_n^{p}(z_n)$. By (\ref{liminfbound}) it is enough to show that
\begin{equation*}
\liminf_n F^{G,m}_n(z_n)\geq h_G(q_-,q_+),
\end{equation*}
where
\begin{align*}
F^{G,m}_n(z_n)=&\frac{\sqrt{2\dn}}{\Ln}\sum_{|\Ln i-t_m|<2\alpha}\Ln \left\{\left(\left|\frac{u_n^{i+1}-u_n^i}{\sqrt{2\dn}}\right|^2-1\right)^2+\frac{G(z_n^i)}{{2}}\right\}
\\
&+\frac{\Ln}{\sqrt{2\dn}}\sum_{|\Ln i-t_m|<2\alpha}\Ln\left|\frac{z_n^{i+1}-z_n^i}{\Ln}\right|^2.
\end{align*}
Using (\ref{uniformclose}), again we can write 
\begin{equation*}
\frac{\sqrt{2\dn}}{\Ln}\sum_{|\Ln i-t_m|<2\alpha}\Ln \left(\left|\frac{u_n^{i+1}-u_n^i}{\sqrt{2\dn}}\right|^2-1\right)^2=\frac{\sqrt{2\dn}}{\Ln}\sum_{|\Ln i-t_m|<2\alpha}\Ln \left(|\beta_n^i z_n^i|^2-1\right)^2
\end{equation*}
with a function $\beta_n\in C_n(I,\R)$ converging uniformly to $1$. 
Let $u^a_n$ be the piecewise affine interpolation of $u_n$ and observe that by Proposition \ref{weakcompactness} and \eqref{uniformclose} we have
\begin{equation}\label{boundscaled}
\|(u^a_n)^{\prime}\|_\infty\le C \frac{\sqrt{\dn}}{\Ln}\,.
\end{equation}

Now we define the rescaled piecewise affine and piecewise constant functions $\tilde{u}^a_n:\frac{\sqrt{2\dn}}{\Ln}(-2\alpha,2\alpha)\rightarrow S^2$ and $\tilde{z}_n:\frac{\sqrt{2\dn}}{\Ln}(-2\alpha,2\alpha)\rightarrow\R^3$ setting
\begin{align*}
\tilde{u}^a_n(t)&=u_n^a\left(\frac{\Ln}{\sqrt{2\dn}}t+t_m\right),\quad\tilde{z}_n(t)=z_n\left(\frac{\Ln}{\sqrt{2\dn}}t+t_m\right).
\end{align*}
Note that $\tilde{z}_n$ is constant on intervals of length $\sqrt{2\dn}$, and that \eqref{continuity} implies
\begin{equation}\label{quasifinito}
|\tilde{z}_n(t_1)-\tilde{z}_n(t_2)|^2 \le C\sqrt{\dn}
\end{equation}
whenever $|t_1-t_2|\le \sqrt{2\dn}$. Moreover, by the definition of piecewise interpolations, for almost every $t$ it holds that
\begin{equation}\label{prepareconstraint}
\tilde{z}_n(t)=\tilde{u}^a_n(t)\times(\tilde{u}^a_n)^{\prime}(t).
\end{equation}
while \eqref{boundscaled} implies that
\begin{equation}\label{boundu}
\|(\tilde u^a_n)^{\prime}\|_\infty\le C\,.
\end{equation}
We also notice that by Remark \ref{affineclose} we have
\begin{equation}\label{affine-constant}
\|\tilde{z}_n-\tilde{z}^a_n\|_\infty \to 0\,,
\end{equation}
where $\tilde{z}_n^a$ denotes the piecewise affine interpolation of $\tilde{z}_n$. Rewriting the energy in terms of these interpolations leads to
\begin{align}
 \nonumber\liminf_{n} F^{G,m}_n(z_n)&\geq
\\
\liminf_n\frac{\sqrt{2\dn}}{\Ln}\int_{t_m-\alpha}^{t_m+\alpha}\left(|\beta_n(s)z_n(s)|^2-1\right)^2+&\frac{G(z_n(s))}{2}\,\mathrm{d}s+\frac{\Ln}{\sqrt{2\dn}}\int_{t_m-\alpha}^{t_m+\alpha}|(z_n^a)^{\prime}(s)|^2\,\mathrm{d}s =\label{tointegral}
\\
\nonumber\liminf_n\int_{-\alpha\frac{\sqrt{2\dn}}{\Ln}}^{\alpha\frac{\sqrt{2\dn}}{\Ln}}\left(|\tilde{\beta}_n(t)\tilde{z}_n(t)|^2-1\right)^2+&\frac{G(\tilde{z}_n(t))}{2}\,\mathrm{d}t
+\int_{-\alpha\frac{\sqrt{2\dn}}{\Ln}}^{\alpha\frac{\sqrt{2\dn}}{\Ln}}|(\tilde{z}_n^a)^{\prime}(t)|^2\,\mathrm{d}t,
\end{align}
where we used the change of variables $s=\frac{\sqrt{2\dn}}{\Ln}t+t_m$ and the function  $\tilde{\beta}_n$ is defined as $\tilde{\beta}_n(t)=\beta_n(s)$, so that it still converges uniformly to $1$. Let $0<\eta<\frac14 \min\{{\rm dist }(q_l, q_m): q_l\,,q_m \in Q_k\}$. By (\ref{quasifinito}) there exists $t^n_*\in\frac{\sqrt{2\dn}}{\Ln}(-\alpha,\alpha)$ such that $\min_{\pm}|\tilde{z}_n(t^n_*)-q_{\pm}|>\eta$. Then let $t_-^n\leq t^n_*\leq t_+^n$ be respectively the largest and the smallest point such that
\begin{equation}\label{close}
|\tilde{z}_n(t_{\pm}^n)-q_{\pm}|\leq\eta.
\end{equation}
We now prove the following claim.

{\it Claim}: to the given $\eta$ and for all $n$ large enough, it exists an interval $I_n=(\tau^-_n, \tau^+_n)$ having equibounded measure with respect to $n$, three sequences $w_n$, $\bar w_n$ and $\bar u_n$ equibounded in $L^\infty (I_n; \R^3)$, $W^{1,\infty}(I_n; \R^3)$, and $W^{1,\infty}(I_n; S^2)$, respectively, and a sequence $\hat \beta_n$ satisfying the following properties: 
\begin{eqnarray}
&\displaystyle \nonumber
\vphantom{\liminf_n\int_{I_n}}|w_n(\tau^-_n)-q_-|\le \eta\,,\quad |w_n(\tau^+_n)-q_+|\le \eta\,,\quad \|\hat \beta_n-1\|_{L^\infty(I_n; \R)}\to 0\\
&\displaystyle \nonumber
\vphantom{\liminf_n\int_{I_n}}\|w_n-\bar w_n\|_{L^\infty(I_n; \R^3)}\to 0\,,\quad  w_n =\bar u_n \times (\bar u_n)^{\prime}\\
&\displaystyle
\liminf_{n} F^{G,m}_n(z_n) \ge \\
&\displaystyle
\nonumber\liminf_n\int_{I_n}\left(|\hat{\beta}_n(t)w_n(t)|^2-1\right)^2+\frac{G(w_n(t))}{2}\,\mathrm{d}t 
+\int_{I_n}|(\bar w_n)^{\prime}(t)|^2\,\mathrm{d}t- (2k-2)C_{2\eta}\,,\label{claim-main}
\end{eqnarray}
for some constant $C_{2\eta} \to 0$ when $\eta \to 0$. All the constructed sequences as well as the interval $I_n$ will be depending on $\eta$, but we omit this dependence in order to ease notation.

In order to prove this claim, we follow an algorithmic construction. We first notice that the set
\begin{equation*}
J^{\eta}_n:=\{t\in\frac{\sqrt{2\dn}}{\Ln}(-\alpha,\alpha):\;\text{dist}(\tilde{z}_n(t),Q_k)\geq\eta \}
\end{equation*}
has finite measure uniformly in $n$, due to the energy bound.

{\it Step $0$}: if $|t^+_n-t^-_n|$ is equibounded, the claim is proved with $I_n=(t^-_n, t^+_n)$, $w_n=\tilde{z}_n$, $\bar w_n= \tilde{z}_n^a$, $\bar u_n=\tilde{u}_n^a$, and $\hat \beta_n=\tilde \beta_n$, due to \eqref{prepareconstraint}, \eqref{boundu}, \eqref{tointegral}, \eqref{affine-constant}, and \eqref{close}. If not, we proceed to Step $1$, upon noticing that, if this is the case, the inclusion $(t^-_n, t^+_n) \subseteq J^{\eta}_n$ is not satisfied, otherwise $|t^+_n-t^-_n|$ would be equibounded.

{\it Step $1$}: we set $\tau^{-}_n= t^-_n$, $\tau^{0,+}_n= t^+_n$, and $I^0_n:=(\tau^{-}_n, \tau^{0,+}_n)$. We define the functions $\beta^0_n=\tilde \beta_n$, $w^0_n=\tilde{z}_n$, $\bar w^0_n= \tilde{z}_n^a$ and $\bar u^0_n=\tilde{u}_n^a$. Let us number the points $q_1,\dots,q_{2k-2}\in Q_k\setminus\{q_{\pm}\}$. 
By construction of $\tau^{-}_n$ and $\tau^{0,+}_n$, for all $t \in (\tau^{-}_n, \tau^{0,+}_n) $ it holds
\begin{equation}\label{entfernung}
\min\{|w^0_n(t)-q_-|\,,|w^0_n(t)-q_+|\}>\eta
\end{equation}
Since $(\tau^{-}_n, \tau^{0,+}_n)$ is not contained in $J^{\eta}_n$, also using \eqref{quasifinito}, it must exist a minimal time $t^{1,-}_n \in (\tau^{-}_n, \tau^{0,+}_n)$ such that
$$
{\rm dist}(w^0_n(t^{1,-}_n), Q_k)\le \eta\,;
$$
due to \eqref{entfernung} and our choice of $\eta$, to the time $t^{1,-}_n$ it corresponds a uniquely determined point $q_{1} \in Q_k\setminus\{q_{\pm}\}$ such that
$$
|w^0_n(t^{1,-}_n)- q_{1}|\le \eta\,.
$$
Notice that by construction $t^{1,-}_n$ must be the left endpoint of an interval where $w^0_n$ is constant, therefore 
\begin{equation}\label{aff=konst1}
w^0_n(t^{1,-}_n)=\bar w^0_n(t^{1,-}_n)\,.
\end{equation}
We define 
$$
t^{1,+}_n:=\sup\{t\in (t^{1,-}_n,\tau^{0,+}_n):\;|w^0_n(t)-q_{1}|<\frac54\eta\}\,,
$$
again noticing that, since $|w^0_n(\tau^{0,+}_n)-q_+|\le \eta$, the above supremum is well defined and strictly small than $\tau^{0,+}_n$ thanks to \eqref{quasifinito}. By construction $t^{1,+}_n$ must be the left endpoint of an interval where $w^0_n$ is constant, therefore
\begin{equation}\label{aff=konst2}
w^0_n(t^{1,+}_n)=\bar w^0_n(t^{1,+}_n)\,.
\end{equation}
Furthermore, due to \eqref{quasifinito}, one has for $n$ large enough that
$$
|w^0_n(t^{1,+}_n)- q_{1}|\le 2\eta\,.
$$

If now $|t^{1,+}_n-t^{1,-}_n|>3+4\pi$, we consider $t^*\leq 3+4\pi$ and the function $u \in C^2([0, t^*]; S^2)$ and $w=u\times u^{\prime} \in H^1([0,t^*], \R^3)$ given by Lemma \ref{connectpoints} with data $w_0=w^0_n(t^{1,-}_n)$ and $w_1=w^0_n(t^{1,+}_n)$ as well as $u_0=\bar u^0_n(t^{1,-}_n)$ and $u_1=\bar u^0_n(t^{1,+}_n)$. We set $\tau^{1,-}_n=t^{1,-}_n + t^*$, $\tau^{1,+}_n=\tau^{0,+}_n -[(t^{1,+}_n-t^{1,-}_n)-t^*]$, and we define
\begin{align*}
w^1_n(t)=
\begin{cases}
w^0_n(t) &\mbox{if $t\leq t^{1,-}_n$,}
\\
w(t-t^{1,-}_n) &\mbox{if $t^{1,-}_n\leq t \leq\tau^{1,-}_n$,}
\\
w^0_n(t)(t-t^{1,+}_n+t^{1,-}_n+t^*) &\mbox{if $\tau^{1,-}_n\leq t \leq\tau^{1,+}_n$,}
\end{cases}
\\
\bar w^1_n(t)=
\begin{cases}
\bar w^0_n(t) &\mbox{if $t\leq t^{1,-}_n$,}
\\
w(t-t^{1,-}_n) &\mbox{if $t^{1,-}_n\leq t \leq\tau^{1,-}_n$,}
\\
\bar w^0_n(t)(t-t^{1,+}_n+t^{1,-}_n+t^*) &\mbox{if $\tau^{1,-}_n\leq t \leq\tau^{1,+}_n$,}
\end{cases}
\\
\bar u^1_n(t)=
\begin{cases}
\bar u^0_n(t) &\mbox{if $t\leq t^{1,-}_n$,}
\\
u(t-t^{1,-}_n) &\mbox{if $t^{1,-}_n\leq t \leq\tau^{1,-}_n$,}
\\
\bar u^0_n(t)(t-t^{1,+}_n+t^{1,-}_n+t^*) &\mbox{if $\tau^{1,-}_n\leq t \leq\tau^{1,+}_n$,}
\end{cases}
\\
\bar \beta^1_n(t)=
\begin{cases}
\bar \beta^0_n(t) &\mbox{if $t\leq \tau^{1,-}_n$,}
\\
\bar \beta^0_n(t)(t-t^{1,+}_n+t^{1,-}_n+t^*) &\mbox{if $\tau^{1,-}_n\leq t \leq\tau^{1,+}_n$.}
\end{cases}
\end{align*}
We set $I^1_n:=(\tau^{0,-}_n, \tau^{1,+}_n)$. The above construction preserves continuity, and then Sobolev regularity of $\bar w^1_n$ because of \eqref{aff=konst1} and \eqref{aff=konst2}. 

The bound on the norms of $w^1_n$, $\bar w^1_n$ and $\bar u^1_n$ on $I^1_n$ are satisfied by construction, as well as the relation $w^1_n=\bar u^1_n \times (\bar u^1_n)^{\prime}$ and it clearly holds
\begin{equation}\label{lunendlich}
\|w^1_n(t)-\bar w^1_n(t)\|_{L^\infty(I^1_n;\R^3)}=\|w^0_n-\bar w^0_n\|_{L^\infty(I^0_n;\R^3)}
\end{equation}
as well as
\begin{equation}\label{lunendlich2}
\|\beta^1_n-1\|_{L^\infty(I^1_n)}\le\|\beta^0_n-1\|_{L^\infty(I^0_n)}\,.
\end{equation}
By construction we have
\begin{equation*}
w^1_n(\tau^{-}_n)=\tilde{z}_n(t^-_n)\,,\quad w^1_n(\tau^{1,+}_n)=\tilde{z}_n(t^+_n)\,.
\end{equation*}
It holds furthermore
\begin{equation}\label{i1n}
\left|I^1_n\right| \leq |J^{\eta}_n|+1\cdot (3+4\pi)+|\tau^{1,+}_n-\tau^{1,-}_n|\,.
\end{equation}
Since in $(\tau^{-}_n, t^{1,-}_n)$ and in $(\tau^{1,-}_n, \tau^{1,+}_n)$  we contructed the functions $w^1_n(t)$, $\bar w^1_n(t)$, $\bar u^1_n(t)$, and $\beta^1_n(t)$ are constructed by applying the same translation to the functions $w^0_n(t)$, $\bar w^0_n(t)$, $\bar u^0_n(t)$, and $\beta_n^0(t)$ it clearly holds
\begin{align*}
\liminf_n\int_{I^1_n\setminus (t^{1,-}_n,\tau^{1,-}_n)}\left(|\beta^1_n(t)w^1_n(t)|^2-1\right)^2+\frac{G(w^1_n(t))}{2}\,\mathrm{d}t
+\int_{I^1_n\setminus (t^{1,-}_n,\tau^{1,-}_n)}|(\bar w^1_n)^{\prime}(t)|^2\,\mathrm{d}t \le
\\
\liminf_n\int_{I^0_n}\left(|\beta^0_n(t)w^0_n(t)|^2-1\right)^2+\frac{G(w^0_n(t))}{2}\,\mathrm{d}t
+\int_{I^0_n}|(\bar w^0_n)^{\prime}(t)|^2\,\mathrm{d}t\,.
\end{align*}
Since $(t^{1,-}_n,\tau^{1,-}_n)$ has equibounded measure, using the uniform convergence of $\beta^1_n$ to $1$ and Lemma \ref{connectpoints} we get
\begin{align*}
\limsup_n\int_{t^{1,-}_n}^{\tau^{1,-}_n}\left(|\beta^1_n(t)w^1_n(t)|^2-1\right)^2+\frac{G(w^1_n(t))}{2}\,\mathrm{d}t
+\int_{t^{1,-}_n}^{\tau^{1,-}_n}|(\bar w^1_n)^{\prime}(t)|^2\,\mathrm{d}t \le
\\
\int_0^{t^*}\left(|w(t)|^2-1\right)^2+\frac{G(w(t))}{2}\dt+\int_0^{t^*}|w^{\prime}(t)|^2\dt\leq C_{2\eta}\,,
\end{align*}
so that summing the two above inequalities we arrive at
\begin{align}
\nonumber
\liminf_n\int_{I^1_n}\left(|\beta^1_n(t)w^1_n(t)|^2-1\right)^2+\frac{G(w^1_n(t))}{2}\,\mathrm{d}t
+\int_{I^1_n}|(\bar w^1_n)^{\prime}(t)|^2\,\mathrm{d}t \le
\\
\liminf_n\int_{I^0_n}\left(|\beta^0_n(t)w^0_n(t)|^2-1\right)^2+\frac{G(w^0_n(t))}{2}\,\mathrm{d}t
+\int_{I^0_n}|(\bar w^0_n)^{\prime}(t)|^2\,\mathrm{d}t+C_{2\eta}\,.\label{iterative}
\end{align}

If instead $|t^{1,+}_n-t^{1,-}_n|\le 3+4\pi$ we simply set $\tau^{1,-}_n=t^{1,+}_n$, $\tau^{1,+}_n=\tau^{0,+}_n$. With this choice, \eqref{i1n} clearly holds and all the properties \eqref{lunendlich}, \eqref{lunendlich2}, and \eqref{iterative} are satisfied by simply setting $w^1_n=w^0_n$, $\bar w^1_n=\bar w^0_n$, $\bar u^1_n= \bar u^0_n$ and $\beta^1_n=\beta_0^n$.

We now define
$$
J^{\eta,1}_n:=\{t\in I^1_n:\;\text{dist}(w^1_n(t),Q_k)\geq\eta \}
$$
and observe that by construction
\begin{equation}\label{boundJeta}
|J^{\eta,1}_n|\le |J^{\eta}_n|+1\cdot (3+4\pi)\,.
\end{equation}
If now $(t^{1,-}_n, \tau^{1,+}_n) \subseteq J^{\eta,1}_n$ the construction stops, if not we go to the next step.

{\it Step $\ell$}: first observe that by construction, it holds
\begin{equation}\label{stop-criterium}
\text{dist}(w^{\ell-1}_n(t), \{q_-, q_+, q_1,\dots, q_{\ell-1}\})>\eta
\end{equation}
for all $t \in (\tau^{\ell-1,-}_n, \tau^{\ell-1,+}_n)$.
One sets $w^{\ell}_n(t)= w^{\ell-1}_n(t)$ for all $t \in (\tau^{-}_n, \tau^{\ell-1,-}_n]$, etc.,  and performs the construction described in Step 1 in the interval $(\tau^{\ell-1,-}_n, \tau^{\ell-1,+}_n)$. For $I^{\ell}_n=(\tau^{-}_n, \tau^{\ell,+}_n)$ one can see that all the properties \eqref{lunendlich}, \eqref{lunendlich2}, \eqref{i1n},\eqref{iterative}, and \eqref{boundJeta} hold with $\ell$ in place of $1$ and $\ell-1$ in place of $0$.

{\it Proof of the claim}: By \eqref{stop-criterium} the construction ends in a finite number $\bar \ell$ of steps with $\bar \ell \leq (2k-2)$. We then set $I_n:=I^{\bar \ell}_n$, and consequently $\tau^+_n=\tau^{\bar \ell, +}_n$. We set $w_n:=w^{\bar \ell}_n$, $\bar w_n:= \bar{w}^{\bar \ell}_n$, $\bar{u}_n:=\bar{u}^{\ell}_n$ and $\hat \beta_n:=\beta^{\bar \ell}_n$. By construction we have $w_n=\bar{u}_n\times (\bar{u}_n)^{\prime}$ and 
\begin{equation*}
w_n(\tau^{-}_n)=\tilde{z}_n(t^-_n)\,,\quad w_n(\tau^{+}_n)=\tilde{z}_n(t^+_n)\,.
\end{equation*}
Since at the final step it holds  $(\tau^{\bar \ell, -}_n,\tau^{\bar \ell, +}_n)\subseteq J^{\eta,\bar \ell}_n$, combining the inequalities
$$
\left|I_n\right| \leq |J^{\eta}_n|+\bar \ell\cdot (3+4\pi)+|\tau^{\bar \ell,+}_n-\tau^{\bar \ell,-}_n|
$$
and 
$$
|\tau^{\bar \ell,+}_n-\tau^{\bar \ell,-}_n|\le |J^{\eta,\bar \ell}_n|\le |J^{\eta}_n|+\bar \ell(3+4\pi)
$$
we get that $I_n$ has equibounded measure. All the other properties in \eqref{claim-main} follow by iteratively applying \eqref{lunendlich}, \eqref{lunendlich2}, \eqref{i1n},\eqref{iterative}, and \eqref{boundJeta} with $\ell$ in place of $1$ and $\ell-1$ in place of $0$ for all $1\leq \ell \leq \bar \ell$.

{\it Conclusion of the lower bound}: Possibly after a translation not changing the energy we can assume that $I_n=[0,\tau_n]$ with $\tau_n$ equibounded. Since  $\hat{\beta}_n$ converges uniformly to $1$ we deduce from \eqref{claim-main} that
\begin{align}
\nonumber\liminf_{n} F^{G,m}_n(z_n)&
\geq\\
\liminf_n\int_0^{\tau_n}\left(|w_n(t)|^2-1\right)^2+&\frac{G(w_n(t))}{2}\,\mathrm{d}t
+\int_0^{\tau_n}|(\bar w_n)^{\prime}(t)|^2\,\mathrm{d}t-(2k-2)C_{2\eta}.\label{compact}
\end{align}
Let $\tau=\liminf_n \tau_n$. Since $\|w_n-\bar w_n\|_\infty \to 0$ and the bound on the $H^1$-norm of $\bar w_n$ is independent of $n$, up to subsequences $w_n$ and $\bar w_n$ are locally uniformly converging to a function $w \in H^1((0,\tau); \R^3)$. By this convergence we have
$$
|w(0)-q_-|\le \eta
$$
while, by means of a simple equicontinuity argument, we get $\liminf_n|w(\tau)-w_n(\tau_n)|=0$ and therefore
$$
|w(\tau)-q_+|\le \eta\,.
$$
Since $w_n=\bar u_n \times (\bar u_n)^{\prime}$ and $\bar u_n$ is an equibounded sequence in $W^{1,\infty}(I;S^2)$, it exists $\bar u \in W^{1,\infty}(I;S^2)$ such that $w=\bar u\times (\bar u)^{\prime}$. Furthermore by lower semincontinuity we have
\begin{align}
\nonumber\liminf_n\int_0^{\tau_n}\left(|w_n(t)|^2-1\right)^2+\frac{G(w_n(t))}{2}\,\mathrm{d}t
+\int_0^{\tau_n}|(\bar w_n)^{\prime}(t)|^2\,\mathrm{d}t\geq\\
\int_0^{\tau}\left(|w(t)|^2-1\right)^2+\frac{G(w(t))}{2}\,\mathrm{d}t
+\int_0^{\tau}|w^{\prime}(t)|^2\,\mathrm{d}t\label{limit-int}
\end{align}
Using Lemma \ref{connectpoints} we can now extend $w$ to a function in $H^{\times}_{q_-,q_+}$ such that
\begin{align*}
\displaystyle
\int_{\R}\left(|w(t)|^2-1\right)^2+\frac{G(w(t))}{2}\,\mathrm{d}t
+\int_{\R}|w^{\prime}(t)|^2\,\mathrm{d}t\le \\[5pt]
\displaystyle
\int_0^{\tau}\left(|w(t)|^2-1\right)^2+\frac{G(w(t))}{2}\,\mathrm{d}t
+\int_0^{\tau}|w^{\prime}(t)|^2\,\mathrm{d}t+2 C_{\eta}\,.
\end{align*}
Since the first member of the inequality is by definition larger than $h_G(q_-, q_+)$, combining this with \eqref{compact} and \eqref{limit-int} we finally get
$$
\liminf_{n} F^{G,m}_n(z_n)\geq h_G(q_-, q_+)-2 C_{\eta}-(2k-2)C_{2\eta}
$$
and the lower bound follows by letting $\eta \to 0$.

{\bf Upper bound:} In order to prove the upper bound, as usual we provide a local construction and restrict to the case of $\#S(z)=1$, so without loss generality we may assume $z=q_1\mathds{1}_{[0,\frac{1}{2}]}+q_2\mathds{1}_{(\frac{1}{2},1]}$ for some $q_1\neq q_2$. Given $\e>0$ we find a function $w=u\times u^{\prime}$ as in Lemma \ref{regularclass} such that
\begin{equation*}
\int_{\R}(|{w}(t)|^2-1)^2+\frac{G({w}(t))}{2}\dt+\int_{\R}|{w}^{\prime}(t)|^2\dt\leq h_G(q_1,q_2)+\e.
\end{equation*}
The interpolation of the function $u$ in order to construct a recovery sequence now is the same as in the proof of Theorem \ref{main1}. Note that we can pass to the limit in (\ref{limsup1}) again, and the limit in (\ref{limsup2}) follows since $u^{\prime\prime}$ is bounded and has only finitely many discontinuities.

Therefore, it only remains to prove that
\begin{equation*}
\limsup_n\frac{\sqrt{2\dn}}{\Ln}\sum_{i\in R_n(I)}\Ln \frac{G(z_n^i)}{2}\leq\int_{\R}\frac{G(w(t))}{2}\dt.
\end{equation*}
This is done as usual with a change of variables and using the fact that $\{w(t)=0\}$ is finite, so that the discontinuity of $G$ in the origin can be neglected.

(iv): To prove the lower bound, let $C_n(I,\R^3)\ni z_n\to z$ in $L^1(I,\R^3)$ such that
\begin{equation*}
\liminf_n\frac{H_n^{p}(z_n)}{\sqrt{2}\Ln\dn^{\frac{3}{2}}}=\lim_n\frac{H_n^{p}(z_n)}{\sqrt{2}\Ln\dn^{\frac{3}{2}}}\leq C<+\infty.
\end{equation*}
From Proposition \ref{penalcompact} we know that $z\in BV(I,Q_k)$. Passing to a further subsequence (not relabeled) we can assume that $z_n$ converges to $z$ almost everywhere. Let $t_1<\dots<t_l$ be the jumpset of $z$. Let $\alpha>0$ be such that $[-2\alpha+t_m,t_m+2\alpha]\cap[-2\alpha+t_j,t_j+2\alpha]=\emptyset$ for  all $j\neq m$. Fix $t_m$ and set $q_{\pm}:=z(t_m\pm\alpha)$. We now prove that
\begin{equation}\label{backtothm}
\text{dist}(z_n^a,L_k)\to 0 \quad\text{uniformly on $(t_m-\alpha,t_m+\alpha)$.}
\end{equation}

To this end, we fix $\eta>0$ and assume by contradiction, that for every $n$ there exists $\tau_n\in (t_m-\alpha,t_m+\alpha)$ with $\text{dist}(z_n^a(\tau_n),L_k)\geq\eta$. Using $L^1$-convergence, without loss of generality we may assume that there exists $t_m-\alpha<\tau_n^{\prime}<\tau_n$ such that 
\begin{equation*}
\tau_n^{\prime}=\sup\{t:\;t_m-\alpha<t<\tau_n,\,\text{dist}(z_n^a(t),L_k)\leq\frac{\eta}{2}\}.
\end{equation*}
Since $z_n^a$ is equibounded in $L^{\infty}(I)$ we deduce from Remark \ref{affineclose} and (\ref{penalcontrol}) that
\begin{equation}\label{largepenal}
c_{\eta}:=\inf\{G(z_n(t)):\,t\in (\tau_n^{\prime},\tau_n)\}>0.
\end{equation}
Let $C_1>0$. By the assumptions on $\mu_n$, for $n$ large enough, applying the Cauchy-Schwarz inequality and (\ref{largepenal}) we have
\begin{align*}
C&\geq \frac{\mu_n}{\sqrt{2}\Ln\dn^{\frac{3}{2}}}\int_{\tau_n^{\prime}}^{\tau_n} G(z_n(s))\,\mathrm{d}s+\frac{\Ln}{\sqrt{2\dn}}\int_{\tau_n^{\prime}}^{\tau_n}|(z_n^a)^{\prime}(s)|^2\,\mathrm{d}s
\\
&\geq C_1\int_{\tau_n^{\prime}}^{\tau_n}\sqrt{G(z_n(s))} |(z_n^a)^{\prime}(s)|\,\mathrm{d}s\geq C_1\sqrt{c_{\eta}}|z_n^a(\tau_n)-z_n^a(\tau_n^{\prime})|\geq C_1\sqrt{c_{\eta}}\frac{\eta}{2}.
\end{align*}
This yields a contradiction since $C_1$ was arbitrary, so (\ref{backtothm}) holds. From this, arguing as in the proof of Theorem \ref{main3}, the lower bound follows.

The construction of a recovery sequence is the same as in Theorem \ref{main3} since we can assure that $G(z_n)=0$ up to minor details in the case when $\frac{1}{2}$ is not a lattice point. This can solved by modifying the function $f_{\e}$ defined in (\ref{optvelocity}) such that $f_{\e}(t)=0$ if $|t|\leq \e$ increasing the test energy only by $C\e$.
\end{proof}

Having described the different effects of the penalization term that prescribes the possible directions of the chiralty vector, there is only one case where we see a dependence on the distance between two chiral vectors, namely case (iii) in Theorem \ref{mainangle}. Since the nonlinear constraint $w=u\times u^{\prime}$ is non-trivial we are not able to solve the optimal profile problem explicitly. However we can qualitatively discuss an example where the transition energy is not constant.

\begin{example}\label{anglenonconstant}
Let $k=2$, let $Q_{2}=\{\pm q_{1},\pm q_{2}\}$, where $q_1= e_1$ and $q_2=(\cos(\alpha),\sin(\alpha),0)$ for $\alpha>0$. Consider $G_{|S^2}=\text{dist}(\cdot,Q_2)$. Since $G$ is $1$-Lipschitz, we can take $c_G(\eta)=\eta$ as modulus of continuity in Lemma \ref{connectpoints}. By the construction there, it follows that there exists a universal constant $C$, not depending on $G$, nor on $\alpha$, such that
\begin{equation*}
h_G(q_1,q_2)\leq C|q_1-q_2|.
\end{equation*}
Now, for $w\in H^1(\R,\R^3)$ such that $\lim_{t\to\pm\infty}w(t)=\pm q_1$ we take the continuous representative. Then, provided we have chosen $|\alpha|$ suitably small, the balls $B_{\frac{1}{8}}(\pm q_1)$ contain $\pm q_2$. Due to continuity, there exists an interval $(t_-, t_+)$ such that $w(t)\notin B_{\frac{1}{4}}(\pm q_1)$ and $\frac12\le |w(t)|$ for all $t\in (t_-, t_+)$ . If now $c$ is the strictly positive infimum of $\sqrt{\frac{G}{2}}$ on $\R^3\setminus \left(B_{\frac{1}{4}}(q_1)\cup B_{\frac{1}{4}}(-q_1) \cup B_{\frac{1}{2}}(0)\right)$, we have
\begin{equation*}
\int_{\R} (|w(t)|^2-1)^2+\frac{G(w(t))}{2}+|w^{\prime}(t)|^2\dt\geq c\int_{t_-}^{t_+}|w^{\prime}(t)|\dt\geq c|w(t_+)-w(t_-)|\geq\frac{c}{2}.
\end{equation*}
Thus, $h_G(q_1,-q_1)\geq \frac{c}{2}$. Since the constants $c$ and $C$ are independent of $\alpha$, for a suitable choice of $\alpha$ we will have 
\begin{equation*}
h_G(q_1,-q_1)>h_G(q_1,q_2).
\end{equation*}
Therefore the transition energy is in this case actually depending on the left and right limit of $z$ at a discontinuity point, differently than in the case of hard penalization.  

We also notice that with a similar argument we can show that $h_G(q_1,q_2)>0$, therefore it is not possible to have transitions with $0$ energy, differently than in the case of Theorem \ref{main1}.
\end{example}

\section{Main results for the helical XY-model}
Thanks to the results of the previous section, we can now study the asymptotic behavior of the renormalized energy $H_n$ defined in \eqref{2d-normalized}, scaled by $\Ln\dn^{\frac32}$, in the limit of {\it strong ferromagnetic interaction}. To be precise, we will assume that 
\begin{equation}\label{scaleregime}
\frac{\Ln}{\sqrt{\dn}}\to 0,\quad J_{2,n}\frac{\Ln}{\sqrt{\dn}}\geq c>0.
\end{equation}
As it is usual in the variational analysis of discrete systems we embed the energies in a common function space. To this end we identify every function $u\in \overline{{\mathcal U}}^2_{n}(\Omega)$ with its piecewise-constant interpolation belonging to the space
\begin{equation}\label{PC2}
C_{n,2}(\Omega,S^{2}):=\{u\in\overline{\mathcal{U}}^2_n(\Omega): u(x)=u(\Ln i)\,\,\hbox{ if } x \in \Ln(i+[0,1)^2),\,i \in \Z^2_n(\Omega)\}. 
\end{equation}
Thus we can extend the functional $H_n$ to a functional on defined on $L^{\infty}(\Omega,\R^3)$ by setting 
\begin{equation*}
H_n(u):=
\begin{cases}
H_n(u) &\mbox{if $u\in C_{n,2}(\Omega,S^2)$,}\\
+\infty &\mbox{otherwise.}
\end{cases}
\end{equation*}

The analysis of one-dimensional slices (see Remark \ref{noncompact}) has already shown that without further constraints on the spin variable we cannot expect $L^1$-compactness for sequences of bounded energy. This is why we add a penalization term, not changing the minimal energy of the system, to the normalized helical $XY$-model. Given a function $G$ as in Section \ref{anglesect} and $u\in C_{n,2}(\Omega,S^2)$, we namely set
\begin{equation*}
P^G_n(u)=\dn^2\sum_{i\in R_n(\Omega)}\Ln^2G(u^{i}\times u^{i+e_1}),
\end{equation*}
For $u\in C_{n,2}(\Omega,S^2)$ we define $z\in C_{n,2}(\Omega,\R^3)$ via
\begin{equation*}
z^{i}=\frac{u^{i}\times u^{i+e_1}}{\sqrt{2\dn}}
\end{equation*}
and write for short $z=T_2(u)$. Now we can define the energy $H^G_n:L^1(\Omega,\R^3)\rightarrow[0,+\infty]$ setting
\begin{equation}\label{2d-infimum}
H^G_n(z)=
\begin{cases}
\inf_{T_2(u)=z}H_n(u)+P^G_n(u) &\mbox{if $z=T_2(u)\text{ for some }u\in C_{n,2}(\Omega,S^2)$,}
\\
+\infty &\mbox{otherwise.}
\end{cases}
\end{equation}
Observe that we only deal with a two-dimensional analogue of the energy in Theorem \ref{mainangle} (iii). A hard penalization like in paragraph \ref{hard}, or a different scaling of the additional term $P^G_n$ like in Theorem \ref{mainangle} (iv) could be also considered, and the arguments we are going to use here would lead also in those cases to the analog of the results discussed in the one-dimensional case. We prefer anyway to focus on the choice of $P^G_n$ that gives in our opinion more significant results.

We now state and prove a compactness result for the energy $H^G_n(z)$.

\begin{proposition}\label{2d-compact}
Assume that (\ref{scaleregime}) holds and let $z_n\in L^1(\Omega,\R^3)$ be such that
\begin{equation*}
H^{G}_n(z_n)\leq C\Ln\dn^{\frac{3}{2}}.
\end{equation*}
Then (up to subsequences) $z_n$ converges strongly in $L^1$ to a function $z\in BV(\Omega,Q_k)$ depending only on $x$.
\end{proposition}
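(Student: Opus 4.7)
The plan is to combine the slicewise 1D compactness of Section~\ref{anglesect} with a vertical coupling argument exploiting the strong ferromagnetic bond in direction $e_2$. Writing $u_n^m(\cdot):=u_n(\cdot,m\Ln)$ for the $m$-th horizontal slice and $E_m^n$ for the normalized $1$D sliced penalized energy of $u_n^m$ (case (iii) of Theorem~\ref{mainangle}), the energy bound $H_n^G(z_n)\le C\Ln\dn^{3/2}$ together with $J_{2,n}\ge c\sqrt{\dn}/\Ln$ yields both
\[
\sum_m \Ln\, E_m^n \;\le\; C \quad\text{and}\quad \sum_{i\in R_n(\Omega)}\Ln^2|u_n^{i+e_2}-u_n^i|^2 \;\le\; \frac{C\Ln^2\dn}{c}.
\]

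\medskip

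The central step is to transfer the vertical $u$-bound to the order parameter. The identity
\[
z_n^{i+e_2}-z_n^i = \frac{(u_n^{i+e_2}-u_n^i)\times u_n^{i+e_2+e_1} + u_n^i\times (u_n^{i+e_2+e_1}-u_n^{i+e_1})}{\sqrt{2\dn}}
\]
together with $|u_n|\equiv 1$ gives $\sum_i|z_n^{i+e_2}-z_n^i|^2 \le C'$ uniformly in $n$. Since the points of $Q_k$ are separated by some $d>0$, and any pair $(i,i+e_2)$ for which the nearest-point projection satisfies $\pi(z_n^{i+e_2})\ne \pi(z_n^i)$ must have $|z_n^{i+e_2}-z_n^i|\ge d/2$ (once $z_n$ is close enough to $Q_k$), the \emph{total} number of such pairs is bounded by $4C'/d^2$, uniformly in $n$. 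Consequently, the vertical discrete $BV$-norm of $\pi(z_n)$ is of order $\Ln\to 0$.

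\medskip

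For the horizontal direction I invoke the Modica--Mortola-type counting already used in the proof of Proposition~\ref{sbvcompact}: the number of horizontal $Q_k$-transitions of $\pi(z_n^m)$ on slice $m$ is bounded by a constant times $E_m^n$, so summing against $\Ln$ and using $\sum_m\Ln E_m^n\le C$ yields a uniform horizontal $BV$-bound. Combined with the vanishing vertical variation, $\pi(z_n)$ is equibounded in $BV(\Omega,\R^3)$; by Rellich's theorem it converges (up to a subsequence) in $L^1(\Omega)$ to some $\zeta\in BV(\Omega,Q_k)$ depending only on $x_1$.

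\medskip

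To conclude I need $\|z_n-\pi(z_n)\|_{L^1(\Omega)}\to 0$. The penalization together with the zero-homogeneity of $G$ gives $\int_\Omega G(z_n)\,\mathrm{d}x = O(\Ln/\sqrt{\dn})\to 0$, while Proposition~\ref{weakcompactness} applied slicewise yields $|z_n|\to 1$ a.e., so $\mathrm{dist}(z_n,Q_k)\to 0$ in measure. I expect the main technical obstacle to lie exactly here: since $\|z_n\|_\infty$ is not uniformly bounded on ``bad'' slices with large $E_m^n$, concluding $L^1$-convergence requires equi-integrability of $z_n$. This is handled by Chebyshev on $\sum_m\Ln E_m^n\le C$---which forces small $\Ln$-measure of bad slices---combined with the linear growth of the slicewise $L^\infty$-bound in the energy (Proposition~\ref{weakcompactness}), yielding $L^1$-equiboundedness of $z_n$ and, via Vitali, the desired $L^1(\Omega)$-convergence $z_n\to\zeta$.
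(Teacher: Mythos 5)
Your plan is a genuinely different route from the paper's, which uses no $BV$-bound on $\pi(z_n)$ and no Vitali/equi-integrability argument at all. Instead, the paper applies Fatou's lemma to the integral of the slicewise normalized penalized energy to find that a.e.\ horizontal slice has $\liminf_n E_m^n<+\infty$, invokes Lemma~\ref{penalcompact} to get $L^1(I)$-compactness of $z_n(\cdot,y)$ for a.e.\ $y$, diagonalizes over a countable dense set $\mathcal D$ of slices, and then proves a direct Cauchy property in $L^1(\Omega)$ by interpolating between the finitely many chosen slices $y_1<\dots<y_N\in\mathcal D$ and estimating the vertical oscillation term $\int_0^1\sum_l\int_{y_{l-1}}^{y_l}|z_n(x,y)-z_n(x,y_l)|\,\mathrm dy\,\mathrm dx\le C/N$ via Cauchy--Schwarz and the $J_{2,n}$ bound. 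This entirely bypasses the uniform-in-slice control that your proposal needs.

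There are two concrete gaps in your argument, both stemming from the fact that the energy bound $\sum_m\Ln E_m^n\le C$ controls the slicewise energies only on average, not uniformly. First, the counting arguments for the discrete $BV$-bounds on $\pi(z_n)$ are not justified. The vertical count requires $\mathrm{dist}(z_n^i,Q_k)<d/4$ for \emph{all} $i$ (otherwise a change of nearest-point projection does not force $|z_n^{i+e_2}-z_n^i|\ge d/2$), and the horizontal Modica--Mortola count, as in Proposition~\ref{sbvcompact}, needs the preliminary uniform control $z_n^i\notin A_\eta$ which is established only under a uniform slicewise energy bound. On ``bad'' slices with large $E_m^n$ neither holds, and since each such slice can contribute up to $\Ln\cdot T_m^n\le 1$ to the horizontal $BV$-norm while the number of bad slices can be as large as $C/(K\Ln)$, the claimed $BV$-bound does not follow. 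Second, and more fundamentally, the final Vitali step does not close: you write that Chebyshev plus the linear slicewise $L^\infty$-bound yield ``$L^1$-equiboundedness of $z_n$ and, via Vitali, the desired $L^1(\Omega)$-convergence,'' but $L^1$-equiboundedness together with convergence in measure does \emph{not} imply $L^1$-convergence; Vitali requires \emph{uniform integrability}, i.e.\ $\sup_n\int_{|z_n|>K}|z_n|\to 0$ as $K\to\infty$. Your argument bounds $\int_{\text{bad slices}}|z_n|$ by $3C/K'+\tfrac16\sum_{m:E_m^n>K'}\Ln E_m^n$, and the second term is $\le C/6$ but need not vanish uniformly in $n$ as $K'\to\infty$; nothing in the hypothesis $\sum_m\Ln E_m^n\le C$ prevents a single slice with $E_{m_0}^n\sim C/\Ln$ from carrying all the mass. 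This is exactly the obstruction that the paper's Cauchy-in-$L^1$ argument is designed to avoid.
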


\begin{proof}
By choosing appropriate candidates for the infimum problem defining $H_n^G(z_n)$ there exists a sequence $u_n\in C_{n,2}(\Omega,S^2)$ such that $T_2(u_n)=z_n$ and $H^G_n(u_n)\leq (C+1)\Ln\dn^{\frac{3}{2}}$. 
Notice that for a.e.\ $y\in I$ the function $u_n(\cdot,y)$ is an element of $C_n(I; S^2)$ and $z_n(\cdot,y)=T_n(u_n(\cdot,y)$. Since $u_n$ is a piecewise constant function we have by the assumptions and the definitions \eqref{2d-infimum} and \eqref{penalenergy} of $H^G_n$ and $H_n^p$ that
\begin{equation}\label{sliceassumpt}
C+1\geq\int_{0}^1\frac{H_n^{p}(u_n(\cdot,y))}{\sqrt{2}\Ln\dn^{\frac{3}{2}}}\,\mathrm{d}y\,.
\end{equation}
Applying Fatou's lemma we deduce that, for almost every $y\in I$, we have
\begin{equation}\label{aefinite}
\liminf_n\frac{H_n^{p}(u_n(\cdot,y))}{\sqrt{2}\Ln\dn^{\frac{3}{2}}}<+\infty,
\end{equation}
so that by Lemma \ref{penalcompact}, for almost every $y\in I$  $z_{n}(\cdot,y)$ is compact in $L^1(I;\R^3)$. In particular, given a countable dense set $\mathcal{D}\subset I$, by a diagonal argument there exists a common subsequence $n_d$ such that $z_{n_d}(\cdot,y)$ is converging for all $y\in\mathcal{D}$. 

Without loss of generality we assume $1 \in \mathcal{D}$. We now show that the sequence $z_{n_d}$ is a Cauchy-sequence in $L^1(\Omega,\R^3)$. Let $\e>0$ and consider $0=y_0<y_1<\dots<y_N=1\in\mathcal{D}\cup\{0,1\}$ such that $\sup_{l}|y_{l+1}-y_l|\leq\frac{2}{N}$. Then we have
\begin{align*}
\int_{\Omega}&|z_n(x,y)-z_m(x,y)|\,\mathrm{d}(x,y)=\int_0^1\sum_{l=1}^{N}\int_{y_{l-1}}^{y_{l}}|z_n(x,y)-z_m(x,y)|\,\mathrm{d}y\,\mathrm{d}x
\\
\leq&\int_0^1\sum_{l=1}^{N}\int_{y_{l-1}}^{y_{l}}|z_n(x,y)-z_n(x,y_l)|+|z_n(x,y_l)-z_m(x,y_l)|+|z_m(x,y_l)-z_m(x,y)|\,\mathrm{d}y\,\mathrm{d}x
\\
\leq&\sum_{l=1}^N|y_{l}-y_{l-1}|\int_0^1|z_n(x,y_l)-z_m(x,y_l)|\,\mathrm{d}x
\\&\quad\quad+\int_0^1\sum_{l=1}^N\int_{y_{l-1}}^{y_l}|z_n(x,y)-z_n(x,y_l)|+|z_m(x,y)-z_m(x,y_l)|\,\mathrm{d}y\,\mathrm{d}x.
\end{align*}
We start by bounding the last term. To this end note that, for any $n\in\mathbb{N}$,
\begin{align}\label{averageclose}
\sum_{l=1}^N&\int_{y_{l-1}}^{y_l}|z_n(x,y)-z_n(x,y_l)|\,\mathrm{d}y\leq \sum_{l=1}^N\frac{2}{N}\sup_{y\in [y_{l-1},y_l]}|z_n(x,y)-z_n(x,y_l)|\nonumber
\\
&\leq \sum_{l=1}^N\frac{2}{N}\sum_{j\in\Z_n([y_{l-1},y_l])}|z_n(x,j+\Ln)-z_n(x,j)|\leq \frac{2}{N}\sum_{j\in\Z_n(I)}|z_n(x,j+\Ln)-z_n(x,j)|
\end{align}
From the very definition of $z_n$ we infer
\begin{align}\label{relationy}
\sqrt{2\dn}|z_n^{i+e_2}&-z_n^{i}|=|(u_n^{i+e_2}\times u_n^{(i+e_1)+e_2})-(u_n^{i}\times u_n^{i+e_1})|\nonumber
\\
&\leq |u_n^{i+e_2}\times(u_n^{(i+e_1)+e_2}-u_n^{i+e_1})|+|(u_n^{i+e_2}-u_n^{i})\times u_n^{i+e_1}|\nonumber
\\
&\leq |u_n^{(i+e_1)+e_2}-u_n^{i+e_1}|+|u_n^{i+e_2}-u_n^{i}|.
\end{align}
Combining (\ref{averageclose}), (\ref{relationy}) and integrating with respect to $x$ we deduce from the periodic boundary conditions that
\begin{equation*}
\int_0^1\sum_{l=1}^N\int_{y_{l-1}}^{y_l}|z_n(x,y)-z_n(x,y_l)|\,\mathrm{d}y\,\mathrm{d}x\leq \sqrt{2}\sum_{i\in R_n(\Omega)}\Ln^2\left|\frac{u_n^{i+e_2}-u_n^{i}}{N\sqrt{\dn}\Ln}\right|.
\end{equation*}
Applying Jensen's inequality we obtain
\begin{align*}
\left(\int_0^1\sum_{l=1}^N\int_{y_{l-1}}^{y_l}|z_n(x,y)-z_n(x,y_l)|\,\mathrm{d}y\,\mathrm{d}x\right)^2&\leq \frac{2}{N^2\Ln^2\dn}\sum_{i\in R_n(\Omega)}\Ln^2|u_n^{i+e_2}-u_n^{i}|^2
\\
\leq &\left(\frac{H_n^G(z_n)}{\sqrt{2}\Ln\dn^{\frac{3}{2}}}\right)\left(\frac{\sqrt{\dn}}{\Ln J_{2,n}}\right)\left(\frac{2}{N^2}\right)\leq \frac{C}{N^2}.
\end{align*}
For $N$ large enough, this yields
\begin{equation}\label{cauchy}
\int_{\Omega}|z_n(x,y)-z_m(x,y)|\,\mathrm{d}(x,y)\leq\sum_{l=1}^N|y_{l}-y_{l-1}|\int_0^1|z_n(x,y_l)-z_m(x,y_l)|\,\mathrm{d}x+\frac{\e}{2},
\end{equation}
from which we deduce the Cauchy property taking $m,n$ large enough for the finitely many Cauchy sequences $z_{n_d}(\cdot,y_l)$. 

Let now $z\in L^1(\Omega,\R^3)$ be the limit of $z_{n_d}$. By Lemma \ref{penalcompact}, $z(\cdot,y)\in BV(I,Q_k)$ for almost every $y\in I$, and therefore  $z\in L^1(\Omega,Q_k)$. 
We now claim that $z$ is independent of $y$. Observe that if this holds, we immediately get $z\in BV(\Omega,Q_k)$. 

We are therefore only left to prove the claim. To this aim, for $x\in I$ we denote by $z^a_{x,n}\in H^1(I,\R^3)$ the piecewise affine interpolation between the points $\{z_n(x,\Ln j)\}_{j=0}^{\left[1/\Ln\right]-1}$. Since $z_{n_d}$ converges to $z$ in $L^1(\Omega;\R^3)$, up to a subsequence, independent of $x$ and that we do not relabel, $z_{n_d}(x,\cdot)$ converges to $z(x, \cdot)$ in $L^1(I;\R^3)$ for a.e. $x \in I$. Then also the piecewise affine interpolations $z^a_{x,n_d}(\cdot)$ converge to $z(x,\cdot)$ in $L^1(I;\R^3)$for almost every $x\in I$. Furthermore, by definition
\begin{equation*}
\int_0^1\int_0^1|(z^a_{x,n_d})^{\prime}(y)|^2\,\mathrm{d}y\,\mathrm{d}x=\sum_{i\in R_{n_d}(\Omega)}\lambda_{n_d}^2\left|\frac{z_{n_d}^{i+e_2}-z_{n_d}^{i}}{\lambda_{n_d}}\right|^2. 
\end{equation*}
Using now \eqref{scaleregime} and \eqref{relationy}, and since $u_{n_d}\in C_{{n_d},2}(\Omega;S^2)$, we deduce
\begin{align*}
\int_0^1\int_0^1|(z^a_{x,n_d})^{\prime}(y)|^2\,\mathrm{d}y\,\mathrm{d}x\leq \frac{1}{\delta_{n_d}}\sum_{i\in R_{n_d}(\Omega)}\lambda_{n_d}^2\left|\frac{u_{n_d}^{i+e_2}-u_{n_d}^{i}}{\lambda_{n_d}^2}\right|^2
\leq\left(\frac{H_{n_d}^G(z_{n_d})}{\sqrt{2}\lambda_{n}\delta_{n}^{\frac{3}{2}}}\right) \left(\frac{\sqrt{2\delta_{n_d}}}{\lambda_{n_d} J_{2,n_d}}\right)\leq C \,.
\end{align*}
By Fatou's Lemma this implies
$$
\liminf_{n_d} \int_0^1|(z^a_{x,n_d})^{\prime}(y)|^2\,\mathrm{d}y<+\infty
$$
for almost every $x\in I$.  It follows that $z(x,\cdot)\in H^1(I,\R^3)$ and since it takes only finitely many values we have that $z(x,\cdot)$ is constant, which yields the claim.
\end{proof}

Concerning the $\Gamma$-limit of the rescaled and normalized energies, we obtain the following result.

\begin{theorem}\label{main2d}
Let $H_n^{G}:L^1(\Omega,\R^3)\rightarrow [0,+\infty]$ be defined as in (\ref{newfunctional}). Assume that $\frac{\Ln}{\sqrt{\dn}}\to 0$. Then the functionals $\frac{H_n^{G}}{\sqrt{2}\Ln\dn^{\frac{3}{2}}}$ $\Gamma$-converge with respect to the strong $L^1$-topology to the functional 
\begin{equation*}
H^{G}(z)=
\begin{cases}
\int_{S(z)}h_G(z_-,z_+)\,\mathrm{d}\mathcal{H}^1 &\mbox{if $z\in BV(\Omega,Q_k)$ does not depend on $y$,}\\
+\infty &\mbox{otherwise,}
\end{cases}
\end{equation*}
where $h_G$ is defined by \eqref{penaltransition} and $z_-$, $z_+$ are the one-sided limits of $z$ at a discontinuity point.
\end{theorem}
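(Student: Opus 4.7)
The plan is a slicing reduction to the already proven one-dimensional Theorem \ref{mainangle}(iii). The starting point is the following Fubini-type decomposition: for every $u\in C_{n,2}(\Omega,S^{2})$, denoting its horizontal slice at height $y$ by $u(\cdot,y)\in C_n(I,S^{2})$ (which inherits the 1D boundary condition \eqref{prelim:boundary} from \eqref{slicedboundary}),
\begin{equation*}
H_n(u)=\int_0^1 H_n^{sl}(u(\cdot,y))\,dy+\tfrac{J_{2,n}}{2}\!\!\sum_{i\in R_n(\Omega)}\!\!\Ln^2|u^{i+e_2}-u^i|^2,\quad P_n^G(u)=\int_0^1 P_n^{p}(u(\cdot,y))\,dy,
\end{equation*}
where $P_n^{p}$ is the 1D penalization of \eqref{penalenergy} with $\mu_n=\dn^2$. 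This choice places every slice (up to harmless rescalings of constants absorbed into $G$) exactly into regime (iii) of Theorem \ref{mainangle}.

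\textbf{Liminf.} Given $z_n\to z$ in $L^1(\Omega,\R^{3})$ with $\sup_n H_n^G(z_n)/(\sqrt{2}\Ln\dn^{3/2})<+\infty$, I would pick $u_n\in C_{n,2}(\Omega,S^2)$ almost attaining the infimum in \eqref{2d-infimum}, discard the nonnegative vertical ferromagnetic term, and apply the infimum definition of the 1D $H_n^p$ slicewise to obtain
\begin{equation*}
\frac{H_n^G(z_n)+o(1)}{\sqrt{2}\Ln\dn^{3/2}}\;\geq\;\int_0^1\frac{H_n^p(z_n(\cdot,y))}{\sqrt{2}\Ln\dn^{3/2}}\,dy.
\end{equation*}
After passing to a subsequence for which $z_n(\cdot,y)\to z(\cdot,y)$ in $L^1(I)$ for a.e.\ $y$, Fatou's lemma and the 1D liminf of Theorem \ref{mainangle}(iii) yield
\begin{equation*}
\liminf_n\frac{H_n^G(z_n)}{\sqrt{2}\Ln\dn^{3/2}}\;\geq\;\int_0^1\sum_{t\in S(z(\cdot,y))}h_G(z_-(t,y),z_+(t,y))\,dy.
\end{equation*}
By Proposition \ref{2d-compact} the limit $z$ depends only on $x$, so the inner sum is independent of $y$ and the right-hand side equals $\int_{S(z)}h_G(z_-,z_+)\,d\mathcal{H}^1$ (the jump set being a finite union of vertical segments of unit length).

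\textbf{Limsup.} Writing $z(x,y)=\bar z(x)$ with $\bar z\in BV(I,Q_k)$, I would apply the 1D recovery sequence in Theorem \ref{mainangle}(iii) to obtain $\bar u_n\in C_n(I,S^2)$ with $\bar z_n:=T_n(\bar u_n)\to \bar z$ in $L^1(I)$ and matching 1D energy. Setting $u_n(x,y):=\bar u_n(x)$, the boundary condition \eqref{slicedboundary} is trivially inherited, $z_n:=T_2(u_n)$ satisfies $z_n(x,y)=\bar z_n(x)\to z$ in $L^1(\Omega)$, the vertical term $|u_n^{i+e_2}-u_n^i|^2$ vanishes identically, and the slicing identity collapses the 2D energy to the 1D energy of $\bar z_n$, producing the matching upper bound.

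\textbf{Main obstacle.} The only delicate step is the liminf: one must justify that discarding the vertical ferromagnetic coupling does not lose mass. This is vindicated a posteriori by Proposition \ref{2d-compact}, which forces the limit to be $y$-independent, so no transition cost can be stored in the vertical direction and the slicewise 1D bound is sharp. Beyond this point, the argument is a purely Fubini-type reduction to the already established 1D result.
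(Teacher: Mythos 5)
Your proposal is correct and follows essentially the same route as the paper's proof: both discard the nonnegative vertical ferromagnetic term, slice horizontally, apply Fatou together with the 1D liminf of Theorem \ref{mainangle}(iii), invoke Proposition \ref{2d-compact} to force $y$-independence of the limit (so the integral in $y$ collapses to a constant integrand), and build the recovery sequence by extending the 1D recovery sequence constantly in the $e_2$-direction so that the vertical term vanishes identically. The only thing you make explicit that the paper leaves implicit is the passage to a subsequence along which the slices $z_n(\cdot,y)\to z(\cdot,y)$ converge in $L^1(I)$ for a.e.\ $y$, which is indeed needed before the slicewise $\Gamma$-liminf can be applied.
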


\begin{proof}
For the lower bound consider a sequence $z_n\in C_{n,2}(\Omega,\R^3)$ converging in $L^1(\Omega,\R^3)$ to some $z$ such that $\sup_n\frac{H_n^{G}}{\sqrt{2}\Ln\dn^{\frac{3}{2}}}\leq C$. By Proposition \ref{2d-compact} we get immediately that $z\in BV(\Omega,Q_k)$ and that $z$ does not depend on $y$. Let us denote by $z_0\in BV(I,Q_k)$ the slice of $z$ in the $x$-direction. For each $n$ let $u_n\in C_{n,2}(\Omega,M_k)$ be such that $T_2(u_n)=z_n$ and 
\begin{equation*}
\frac{H_n(u_n)+P^G_n(u_n)}{\sqrt{2}\Ln\dn^{\frac{3}{2}}}\leq \frac{H_n^G(z_n)}{\sqrt{2}\Ln\dn^{\frac{3}{2}}}+\frac{1}{n}.
\end{equation*}
From Theorem \ref{mainangle} we deduce
\begin{align*}
\liminf_n\frac{H_n(u_n)+P^G_n(u_n)}{\sqrt{2}\Ln\dn^{\frac{3}{2}}}&\geq \int_0^1\liminf_n\frac{H_n^{p}(u_n(\cdot,y))}{\sqrt{2}\Ln\dn^{\frac{3}{2}}}\,\mathrm{d}y
\\
&\geq\int_0^1\sum_{t\in S(z_0)}h_G(z_-,z_+)\,\mathrm{d}y=\int_{S(z)}h_G(z_-,z_+)\,\mathrm{d}\mathcal{H}^1,
\end{align*}
where we used that $z$ does not depend on $y$.

The upper bound is proved by taking a recovery sequence $\tilde u_n \in C_n(I, S^2)$ for the one dimensional energy $H^p_n$ defined by \eqref{penalenergy}. Setting $u^i_n= \tilde u^{i_1}_n$ for all $i=(i_1, i_2)\in R_n(\Omega)$ we obviously have $|u^{i+e_2}-u^i|=0$ for all $i$ and the result follows from Theorem \ref{mainangle} (iii).
\end{proof}

\bibliographystyle{plain}
\bibliography{bib-CS}

\begin{thebibliography}{10}

\bibitem{ACXY}
R.~Alicandro and M.~Cicalese.
\newblock Variational analysis of the asymptotics of the ${\MakeUppercase
  {xy}}$ model.
\newblock {\em Arch. Rat. Mech. Anal.}, 192(3):501--36, 2009.

\bibitem{ACG}
R.~Alicandro, M.~Cicalese, and A.~Gloria.
\newblock Integral representation of the bulk limit of a general class of
  energies for bounded and unbounded spin systems.
\newblock {\em Nonlinearity}, 21:1881--1910, 2008.

\bibitem{ACP}
R.~Alicandro, M.~Cicalese, and M.~Ponsiglione.
\newblock Variational equivalence between {G}inzburg-{L}andau, ${\MakeUppercase
  {xy}}$ spin systems and screw dislocation energies.
\newblock {\em Indiana Univ. Math. J.}, 60(1):171--208, 2011.

\bibitem{AP}
R.~Alicandro and M.~Ponsiglione.
\newblock Ginzburg--{L}andau functionals and renormalized energy: A revised
  {$\Gamma$}-convergence approach.
\newblock {\em Journal of Functional Analysis}, 266(8):4890--4907, 2014.

\bibitem{Ambrosio}
L.~Ambrosio.
\newblock Metric space valued functions of bounded variation.
\newblock {\em Annali della Scuola Normale Superiore di Pisa-Classe di
  Scienze}, 17(3):439--478, 1990.

\bibitem{Baldo}
S.~Baldo.
\newblock Minimal interface criterion for phase transitions in mixtures of
  {C}ahn-{H}illiard fluids.
\newblock In {\em Annales de l'institut Henri Poincar{\'e} (C) Analyse non
  lin{\'e}aire}, volume~7, pages 67--90. Gauthier-Villars, 1990.

\bibitem{GCB}
A.~Braides.
\newblock {\em {$\Gamma$}-convergence for beginners}, volume~22 of {\em Oxford
  Lecture Series in Mathematics and its Applications}.
\newblock Oxford University Press, Oxford, 2002.

\bibitem{Bra-Tru}
A.~Braides and L.~Truskinovsky.
\newblock Asymptotic expansions by {$\Gamma$}-convergence.
\newblock {\em Contin. Mech. Thermodyn.}, 20(1):21--62, 2008.

\bibitem{BraYip}
A.~Braides and N.~K. Yip.
\newblock A quantitative description of mesh dependence for the discretization
  of singularly perturbed nonconvex problems.
\newblock {\em SIAM Journal on Numerical Analysis}, 50(4):1883--1898, 2012.

\bibitem{CM}
S.W. Cheong and M.~Mostovoy.
\newblock Multiferroics: a magnetic twist for ferroelectricity.
\newblock {\em Nature materials}, 6(1):13--20, 2007.

\bibitem{ciso}
M.~Cicalese and F.~Solombrino.
\newblock Frustrated ferromagnetic spin chains: a variational approach to
  chirality transitions.
\newblock {\em Journal of Nonlinear Science}, 25(2):291--313, 2015.

\bibitem{diep}
H.T. Diep.
\newblock {\em Frustrated spin systems}.
\newblock World Scientific, 2005.

\bibitem{DmiKri}
D.~V. Dmitriev and V.~Ya Krivnov.
\newblock Universal low-temperature properties of frustrated classical spin
  chain near the ferromagnet-helimagnet transition point.
\newblock {\em The European Physical Journal B}, 82(2):123--131, 2011.

\bibitem{DmiKriext}
D.V.Dmitriev and V.Ya.Krivnov.
\newblock Universal low-temperature properties of frustrated classical spin
  chain near the ferromagnet-helimagnet transition point.
\newblock {\em arXiv:1008.5053}, 2010.

\bibitem{leoni-sobolev}
G.~Leoni.
\newblock {\em A first course in Sobolev spaces}, volume 105 of {\em Graduate
  Studies in Mathematics}.
\newblock American Mathematical Soc., 2009.

\bibitem{leoni}
G.~Leoni.
\newblock A remark on the compactness for the {C}ahn--{H}illiard functional.
\newblock {\em ESAIM Control Optim. Calc. Var.}, 20(2):517--523, 2014.

\bibitem{modica}
L.~Modica.
\newblock The gradient theory of phase transitions and the minimal interface
  criterion.
\newblock {\em Archive for Rational Mechanics and Analysis}, 98(2):123--142,
  1987.

\bibitem{SPN}
H.~Schenck, V.L. Pokrovsky, and T.~Nattermann.
\newblock Vector chiral phases in the frustrated 2{D} {$XY$} model and quantum
  spin chains.
\newblock {\em Physical review letters}, 112(15):157201, 2014.

\bibitem{V}
J.~Villain.
\newblock A magnetic analogue of stereoisomerism: application to helimagnetism
  in two dimensions.
\newblock {\em Journal de Physique}, 38(4):385--391, 1977.

\end{thebibliography}
\end{document}